\documentclass[preprint,12pt]{elsarticle}
\usepackage{amsthm}
\theoremstyle{plain}
\newtheorem{theorem}{Theorem}
\newtheorem{lemma}{Lemma}
\newtheorem{proposition}{Proposition}
\newtheorem{corollary}{Corollary}
\theoremstyle{definition}
\newtheorem{definition}{Definition}
\newtheorem{example}{Example}

\usepackage[T1]{fontenc}
\usepackage[utf8]{inputenc}
\usepackage{amsmath,amssymb}
\usepackage{microtype}
\usepackage[hidelinks]{hyperref}
\usepackage{url}
\usepackage{array,tabularx}

\newcommand{\rot}{\mathrel{\sim_{\mathrm{rot}}}}
\newcommand{\cyc}[1]{[\![#1]\!]}
\newcommand{\cstep}{\mathrel{\Rightarrow}}
\newcommand{\eps}{\varepsilon}
\newcommand{\enc}{\varphi}
\newcommand{\B}{\{0,1\}}
\newcommand{\N}{\mathbb N}
\newcommand{\Eset}{\mathcal E}
\newcommand{\Iset}{\mathcal I}
\newcommand{\Dbar}{\overline D}
\newcommand{\Ctrl}{\mathcal K}
\allowdisplaybreaks

\journal{Theoretical Computer Science}

\begin{document}

\begin{frontmatter}

\title{Undecidability of Confluence for Binary Length-Reducing Cycle Rewriting}
\author{Graham Campbell}
\ead{hello@gjcampbell.co.uk}
\affiliation{country={United Kingdom}}

\begin{abstract}
Confluence guarantees that diverging rewrite choices can always be rejoined. For finite terminating string- and term-rewriting systems, confluence is decidable by critical-pair analysis, and in polynomial time for length-reducing strings. For finite terminating (hyper)graph transformation systems, in contrast, confluence is undecidable. We show that undecidability already appears for words on a circle, that is, strings up to rotation. Confluence of finite cycle-rewriting systems over the fixed alphabet $\{0,1\}$ is undecidable, indeed $\Pi^0_1$-complete, even when every rule has a nonempty right-hand side and strictly reduces length. Under this restriction termination is syntactically evident, and derivations from a nonempty length-$n$ cycle have fewer than $n$ steps. The same holds over every fixed alphabet with at least two letters, while the one-letter case is decidable. Rotation alone separates cyclic from string rewriting. The proof compiles a deterministic verifier into a weighted cycle system with one controlled branch, then into a binary length-reducing system via a run-length code whose cleanup rules send every reducible malformed cycle to one error normal form.
\end{abstract}

\begin{keyword}
cycle rewriting \sep confluence \sep undecidability \sep string rewriting \sep arithmetical hierarchy \sep length-reducing rewriting
\end{keyword}

\end{frontmatter}

\section{Introduction}\label{sec:introduction}

Confluence is the guarantee that diverging rewrite choices can always be rejoined. Together with termination it yields unique normal forms. Classically it is under finite local control. A finite terminating term-rewriting system is confluent exactly when its critical pairs are joinable \cite{KnuthBendix1970,Huet1980}, the same analysis applies to string rewriting \cite{BookOtto1993}, confluence of ground term rewriting is decidable even without termination \cite{DauchetHeuillardLescanneTison1990}, and for finite length-reducing string systems it is decidable in polynomial time \cite{BookODunlaing1981}.

For graph transformation the picture reverses. Critical pairs yield only sufficient conditions, and Plump proved confluence undecidable for finite terminating hypergraph rewriting and graph transformation systems \cite{Plump1993,Plump2005}. Somewhere between words and graphs, finite local analysis therefore stops determining global confluence. This paper shows that the reversal already occurs for cyclic words.

A cyclic word is a string up to rotation, a word on a circle with no distinguished start and no branching. Cycle rewriting applies finite string rules along arcs of the circle, arguably the simplest rewriting formalism beyond strings. We show that rotation alone destroys decidability. Confluence of finite cycle-rewriting systems over the fixed alphabet $\B$ is $\Pi^0_1$-complete (Theorem~\ref{thm:pi-complete}), even when every right-hand side is nonempty and every rule strictly reduces length. Under this promise termination is immediate, derivations from a nonempty length-$n$ cycle have fewer than $n$ steps, and reachability and joinability from a fixed starting cycle are decidable. The undecidability stems neither from termination nor from long computations, but from quantification over all starting cycles, finitely many overlaps embedded in unboundedly many cyclic contexts. Here the critical-pair method fails. On a circle, joinability of an overlap is not stable under enlarging the context, because a later rewrite may travel around the cycle and re-enter it (Example~\ref{ex:context}).

A finite strictly length-reducing rewrite system can serve as an executable specification, with immediate termination and linearly bounded derivations. Our theorem shows that, under cyclic interpretation, there is nevertheless no complete procedure for deciding whether such a specification defines unique normal forms. The same syntactic promise that admits a polynomial-time confluence test on strings admits no confluence algorithm on cycles. In particular, the ordinary finite critical-pair test is incomplete here. Practical procedures must therefore restrict their input class or settle for incomplete criteria.

The result holds over every fixed alphabet with at least two letters (Corollary~\ref{cor:all-alphabets}), while the one-letter case is decidable (Theorem~\ref{thm:unary}), so two letters are optimal. A bounded-shape variant keeps $\Pi^0_1$-completeness for certified nonerasing weight-decreasing systems whose left-hand sides have length at most five, whose right-hand sides have length at most four, and whose symbol weights lie in $\{1,2,3,7,12\}$, the alphabet and rule count now growing instead (Proposition~\ref{prop:bounded-shape}).

The proof chains three equivalences, each of independent interest. Writing $W_e$ for the $e$th computably enumerable set, $M_{e,x}$ for the compiled verifier, $(S_{e,x},\omega_{e,x})$ for the weighted cycle system, and $\mathsf B$ for the binary compiler:
\[
  x\notin W_e
  \overset{\text{Thm.~\ref{thm:verifier-source}}}{\Longleftrightarrow}
  L(M_{e,x})=\varnothing
  \overset{\text{Thm.~\ref{thm:weighted-fork}}}{\Longleftrightarrow}
  S_{e,x}\text{ confluent}
  \overset{\text{Thm.~\ref{thm:binary-transfer}}}{\Longleftrightarrow}
  \mathsf B(S_{e,x},\omega_{e,x})\text{ confluent.}
\]
Section~\ref{sec:verifier} constructs a verifier normal form (Theorem~\ref{thm:verifier-source}) whose nonemptiness problem is undecidable although the machines are deterministic and local, with fixed radius two, a fixed five-symbol alphabet, and configurations that shrink at every step. Theorem~\ref{thm:weighted-fork} (Section~\ref{sec:source}) wraps such a machine into a weighted cycle system whose confluence, over every cycle rather than over intended inputs alone, is equivalent to emptiness of the machine. Theorem~\ref{thm:binary-transfer} (Section~\ref{sec:compiler}) compiles any finite nonerasing weight-decreasing cycle system over a nonempty alphabet into a binary strictly length-reducing one, preserving and reflecting confluence. Section~\ref{sec:classification} combines the three with absorber rules and a unary commutation bound to complete the classification, and Section~\ref{sec:overview} states the invariant governing each stage, the obstructions to a more direct reduction, and the single point at which nonjoinable branching can arise.

Cycle rewriting arises in algebra, where cyclic words are conjugacy classes, and in the termination literature as a bridge towards graph transformation \cite{ZantemaKonigBruggink2014,SabelZantema2015,SabelZantema2017}. Nearby undecidability results concern different problems. Narendran and Otto exhibited a finite length-reducing complete string system with undecidable cyclic equality, an algebraic question modulo the induced Thue congruence rather than confluence of the directed relation on rotation classes \cite{NarendranOtto1986}. Otto proved confluence on a given congruence class undecidable even for finite length-reducing string systems \cite{Otto1987}. The closest result specifically on cyclic confluence reduces it to a bounded check under a strongly confluent base system and conjugacy-compatibility hypotheses \cite{DiekertDuncanMyasnikov2012}. These hypotheses do not cover arbitrary terminating cycle systems, and we are not aware of an earlier decidability or undecidability result for confluence on the unrestricted class. On the string side, known two-letter encodings for Thue congruences address undirected equivalence rather than preservation of a directed terminating relation \cite{MadlenerOtto1997}. The compiler of Section~\ref{sec:compiler} needs exactly what such encodings lose, and the difficulty lies in the malformed cycles.

\section{Preliminaries}\label{sec:preliminaries}
We review cyclic words and cycle rewriting, reduction and confluence, certificates and codes, and the machine model behind the reduction.

\subsection{Cyclic Words and Cycle Rewriting}
Let $A$ be a finite alphabet, $A^*$ the words over $A$, and $\eps$ the empty word. Words $u,v\in A^*$ are \emph{rotationally equivalent}, written $u\rot v$, if $u=xy$ and $v=yx$ for some $x,y\in A^*$. The cyclic word represented by $u$ is the rotation class $\cyc{u}=\{v\in A^*:v\rot u\}$; we call a cyclic word a \emph{cycle} for short. Rotation does not include reversal.

A finite string-rewriting system is a finite set $R\subseteq A^+\times A^*$, with rules written $\ell\to r$. Its induced relation on cyclic words is $\cyc{u}\cstep_R\cyc{v}$ when, for some rule $\ell\to r$ and word $z$,
\[
  u\rot \ell z,
  \qquad
  v\rot rz.
\]
Equivalently, a left-hand side is matched along a simple directed arc. Thus $|\ell|\le |u|$ is required: a match may cross the displayed end of a representative, but it may not wind around a shorter cycle. We call this requirement the \emph{no-winding convention}; it is used throughout. For example, $ab\to b$ applies to $\cyc{ba}$, since $ba\rot ab$, giving $\cyc{b}$. Over an alphabet containing two distinct letters, rotation equivalence is not a congruence on $A^*$: for distinct $a,b$, we have $ab\rot ba$ but $abab\not\rot baab$. Thus, in general, the induced relation is not string rewriting modulo an equational theory, and completion methods for rewriting modulo \cite{JouannaudKirchner1986} do not apply directly.

\begin{example}\label{ex:context}
Let $R=\{aba\to b\}$ over $A=\{a,b\}$. The only proper self-overlap of the left-hand side is $bba\leftarrow ababa\to abb$, and its two targets are rotations of each other, so the cyclic peak joins in zero steps. Embedding the same overlap in the cyclic context $ab$ destroys this: $\cyc{ababaab}$ has the two descendants $\cyc{bbaab}$ and $\cyc{abbab}$. The first is a normal form, the second reduces only to $\cyc{bbb}$. Joinability of an overlap is therefore not stable under cyclic context.
\end{example}

\subsection{Reduction and Confluence}
We use the standard notions of abstract rewriting \cite{BaaderNipkow1998}. Let $\to$ be a binary relation on cyclic words and write $\to^*$ for its reflexive transitive closure. We say that:
\begin{enumerate}
  \item a \emph{derivation} is a sequence of $\to$-steps, and a \emph{descendant} of $C$ is any $D$ with $C\to^*D$;
  \item a \emph{normal form} is a cycle to which no rule applies;
  \item $C$ and $D$ are \emph{joinable} if they have a common descendant;
  \item $\to$ is \emph{terminating} if there is no infinite derivation;
  \item $\to$ is \emph{locally confluent} if the branches of every one-step peak are joinable, and \emph{confluent} if, whenever $C\to^*D$ and $C\to^*E$, the cycles $D$ and $E$ are joinable.
\end{enumerate}

\begin{theorem}[Newman's lemma \cite{Newman1942}]\label{thm:newman}
A terminating relation is confluent if and only if it is locally confluent.
\end{theorem}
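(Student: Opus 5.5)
The direction from confluence to local confluence needs no hypothesis. A one-step peak $D\leftarrow C\to E$ is a special case of $C\to^*D$ and $C\to^*E$, so confluence makes $D$ and $E$ joinable. For the converse, assume $\to$ is terminating and locally confluent. I will argue by well-founded induction along $\to^+$. Termination means there is no infinite derivation, so the converse of $\to^+$ is well-founded on the set of cycles. Induction over it is therefore legitimate, and this is the only place termination enters.

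The induction statement is $P(C)$: for all $D,E$ with $C\to^*D$ and $C\to^*E$, the cycles $D$ and $E$ are joinable. Suppose $P(C')$ holds for every $C'$ with $C\to^+C'$. Take derivations $C\to^*D$ and $C\to^*E$. If either has length zero, say $D=C$, then $E$ is itself a common descendant and we are done. Otherwise write $C\to D_1\to^*D$ and $C\to E_1\to^*E$. Local confluence applied to the peak $D_1\leftarrow C\to E_1$ gives some $F$ with $D_1\to^*F$ and $E_1\to^*F$.

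Next I apply the hypothesis at $D_1$, which is a proper descendant of $C$, to the pair $D$ and $F$. This yields $G$ with $D\to^*G$ and $F\to^*G$. Then $E_1\to^*F\to^*G$ and $E_1\to^*E$, so the hypothesis at $E_1$, applied to $G$ and $E$, yields $H$ with $G\to^*H$ and $E\to^*H$. Since $D\to^*G\to^*H$, the cycle $H$ is a common descendant of $D$ and $E$, which proves $P(C)$. By well-founded induction, $P$ holds for every cycle, and that is confluence.

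Nothing in this argument uses the structure of cyclic words: it is the standard abstract-rewriting proof, and it applies to any relation $\to$. The only point needing care is justifying the induction principle from the paper's definition of termination. I would do this by contradiction: if $P$ failed somewhere, then every counterexample $C$ has a proper descendant that is also a counterexample, namely $D_1$ or $E_1$ in the argument above. Choosing such a descendant repeatedly builds an infinite derivation, which contradicts termination. This uses dependent choice, which I expect the paper takes for granted.
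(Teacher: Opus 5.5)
Your proof is correct. It is the standard well-founded (noetherian) induction argument usually credited to Huet \cite{Huet1980}. Your reduction of the induction principle to the paper's definition of termination is exactly what is needed: a counterexample always has a one-step descendant, $D_1$ or $E_1$, that is again a counterexample, so counterexamples would yield an infinite derivation.

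The paper itself gives no proof of this statement. It quotes Newman's lemma from the literature and uses only the direction from local confluence to confluence, in Lemma~\ref{lem:pi-upper}, Theorem~\ref{thm:unary}, and Proposition~\ref{prop:bounded-shape}. So there is no competing argument to compare against.

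A minor remark: cycles over a finite alphabet form a countable set. You can therefore avoid dependent choice by always picking the least-indexed counterexample.
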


\begin{lemma}[Unique normal forms]\label{lem:unique-nf}
A terminating relation is confluent if and only if every cycle has a unique normal form.
\end{lemma}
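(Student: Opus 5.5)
The plan is to prove the two directions separately, using only termination and the definition of confluence. Throughout, $\to$ denotes the given terminating relation on cycles. The first step is to show that, because $\to$ is terminating, every cycle $C$ has \emph{at least one} normal form. Start at $C$ and repeatedly take any available step. If this never reached a normal form, it would produce an infinite derivation, which termination excludes. Equivalently, one can argue by well-founded induction along $\to$. So the content of the statement lies entirely in the uniqueness of normal forms.

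For the forward direction, assume $\to$ is confluent and let $N_1,N_2$ be normal forms of $C$, so $C\to^*N_1$ and $C\to^*N_2$. Confluence gives a common descendant $D$ with $N_1\to^*D$ and $N_2\to^*D$. Since $N_1$ and $N_2$ are normal forms, both derivations have length zero, so $N_1=D=N_2$.

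For the converse, assume every cycle has a unique normal form, and write $\mathrm{nf}(C)$ for it. The key observation is that normal forms are inherited along derivations. If $C\to^*D$, then any normal form of $D$ is reachable from $C$ and is therefore a normal form of $C$. By uniqueness, $\mathrm{nf}(D)=\mathrm{nf}(C)$. Now take $C\to^*D$ and $C\to^*E$. Then $D\to^*\mathrm{nf}(D)=\mathrm{nf}(C)=\mathrm{nf}(E)$ and $E\to^*\mathrm{nf}(E)$, so $D$ and $E$ are joinable, and $\to$ is confluent.

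No step is a real obstacle, and Theorem~\ref{thm:newman} is not needed. The only point needing care is the existence claim in the first step. It is precisely where termination enters: without it, uniqueness could hold vacuously for cycles that have no normal form at all, while confluence fails.
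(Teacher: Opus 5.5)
Your proof is correct and follows essentially the same route as the paper. Termination supplies the existence of a normal form, confluence forces any two normal forms of one cycle to coincide, and conversely any two descendants reduce to the cycle's unique normal form. You also spell out explicitly that normal forms are inherited along derivations, a step the paper uses without comment.
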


\begin{proof}
Under termination every cycle has at least one normal form. If the relation is confluent, any two normal forms of the same cycle are joinable and hence equal. Conversely, any two descendants of a cycle reduce to normal forms, which both equal its unique one, so all branches are joinable.
\end{proof}

\subsection{Certificates and Codes}
\begin{definition}[Certificates and nonerasing systems]
A \emph{positive additive weight certificate} for $R$ is a function $\omega:A\to\N_{>0}$, extended additively to words, such that $\omega(\ell)>\omega(r)$ for every rule $\ell\to r$. The system is \emph{nonerasing} if every right-hand side is nonempty.
\end{definition}

A system is \emph{strictly length reducing} if $|r|<|\ell|$ for every rule. This is the special case in which every letter has weight one. A certificate is finite data verified by one weight comparison per rule, so certified termination is evident from a system's presentation rather than proved by search.

\begin{proposition}\label{prop:certificate}
If $R$ has a positive additive weight certificate, then its cycle relation terminates, no derivation has more steps than its source's weight, and the descendants of a fixed cycle form a finite computable set.
\end{proposition}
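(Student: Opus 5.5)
The plan is to show that the weight $\omega$ is an invariant of rotation classes and strictly decreases along every cycle-rewriting step, so it bounds derivation length. The bounds then give termination, and finite branching gives the computability claim.

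First, $\omega$ is well defined on cycles. It is additive, so $\omega(xy)=\omega(x)+\omega(y)=\omega(yx)$, and rotationally equivalent words have equal weight. We may therefore write $\omega(\cyc{u})$. Next, take a step $\cyc{u}\cstep_R\cyc{v}$ witnessed by a rule $\ell\to r$ and a word $z$ with $u\rot \ell z$ and $v\rot rz$. Then
\[
\omega(\cyc{v})=\omega(r)+\omega(z)<\omega(\ell)+\omega(z)=\omega(\cyc{u}).
\]
Since weights are natural numbers, every step lowers the weight by at least one. A derivation $C_0\cstep C_1\cstep\cdots\cstep C_k$ thus satisfies $\omega(C_k)\le\omega(C_0)-k$, and $\omega(C_k)\ge 0$ forces $k\le\omega(C_0)$. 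This bound rules out infinite derivations, so the relation terminates, and it gives the claimed bound on the number of steps.

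For the finiteness and computability of the descendant set, I first show that the relation is finitely branching and that one-step successors can be computed. Fix a representative $u$ of $C$, with $|u|=n$. The no-winding convention requires $|\ell|\le n$, and every match of $\ell$ is an occurrence of $\ell$ as a prefix of one of the $n$ rotations of $u$ (or of the single rotation, when $n=0$). For each rotation $u'=\ell z$, the successor is $\cyc{rz}$. Since $R$ is finite, this yields finitely many successors, and we can enumerate them effectively. Equality of cycles is also decidable: we compare $v$ against all rotations of $w$, or use a canonical representative such as the lexicographically least rotation.

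The descendants of $C$ are then exactly the cycles reachable in at most $\omega(C)$ steps in a finitely branching tree. The set is finite by the depth bound; König's lemma is not needed, because the depth is uniformly bounded. We compute it by breadth-first expansion for $\omega(C)$ rounds, with deduplication via canonical representatives.

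I expect no real obstacle. The only point needing care is using the no-winding convention correctly when enumerating matches, so that the successor set is precisely the one the definition describes. A further subtlety is that positivity of $\omega$ is not needed for termination itself, only that $\omega$ takes values in $\N$ and strictly decreases. Positivity does guarantee that $\omega(C)\ge|C|$ is finite and meaningful, though the proof does not depend on this.
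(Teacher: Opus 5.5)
Your proof is correct. The first half is exactly the paper's argument: $\omega$ is invariant under rotation, each step lowers it by at least one, so no derivation has more than $\omega(C)$ steps and the relation terminates.

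For finiteness of the descendant set you take a slightly different route. You prove finite branching from the finitely many rules and rotations, then combine it with the uniform depth bound. The paper instead uses positivity of the certificate: every descendant has length at most $\omega(C)$, so descendants lie among the finitely many cycles of bounded length over the finite alphabet, and the set is computed by closing under the one-step relation.

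The paper's version is a one-line size bound, and it gives an explicit a priori search space, the cycles of length at most $\omega(C)$. Yours never uses positivity, so it would apply verbatim to certificates that assign weight $0$ to some letters. In exchange you must spell out finite branching and the effective enumeration of one-step successors, which you do. The paper leaves this implicit in the phrase ``closing under the one-step relation''.
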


\begin{proof}
Rotation preserves total weight and every step strictly decreases it, so weights bound derivation lengths. Every descendant has weight, and hence length, at most the weight of the source cycle, so the descendant set is finite and computed by closing under the one-step relation.
\end{proof}

For a fixed finite alphabet $\Omega$, we use a standard total coding of finite rule lists; whether a number codes a well-formed system is decidable, and a decision problem about systems becomes a set of codes. A set of codes is in $\Pi^0_1$ when its complement is computably enumerable. Equivalently, $\Pi^0_1$ consists of the sets of the form $\{x:\forall y\,R(x,y)\}$ with $R$ decidable, the first universal level of the arithmetical hierarchy \cite{Soare2016}; its dual existential level $\Sigma^0_1$ consists of the computably enumerable sets, and completeness is defined dually. Membership in a $\Pi^0_1$ set can therefore fail only for a finite reason, a single checkable counterexample $y$, while membership itself asserts that infinitely many checks all pass. Confluence of a terminating system has exactly this shape: a nonjoinable peak is a finite refutation, while confluence asserts joinability of every peak from every cycle. A $\Pi^0_1$ set is $\Pi^0_1$-complete when every $\Pi^0_1$ set reduces to it under a total computable many-one reduction, and $\Pi^0_1$-complete sets are undecidable. We write $(W_e)_{e\in\N}$ for a standard enumeration of the computably enumerable subsets of $\N$; every $\Pi^0_1$ set is then the complement of some $W_e$, the form in which hardness enters our reductions.

\subsection{Shrinking Local Machines}
\begin{definition}[Deterministic shrinking local machine]\label{def:local-machine}
A deterministic shrinking local machine is a tuple $M=(D,\Sigma,Q,q_0,F,k,\lambda,\rho,\mathcal T)$, where $D$ is a finite data alphabet, $\Sigma\subseteq D$ is the input alphabet, $Q$ is a finite state set, $q_0\in Q$, $F\subseteq Q$, $k\ge1$, $\lambda,\rho\in D^*$, and $\mathcal T$ is a finite partial deterministic table of rules
\[
  \alpha q\beta\longrightarrow u q'v,
  \qquad
  \alpha,\beta\in D^k,
  \quad u,v\in D^*,
  \quad |u|+|v|<2k.
\]
A configuration is a word in $D^*QD^*$ containing exactly one state symbol. A table rule applies by replacing an occurrence of the factor $\alpha q\beta$ in a configuration by $u q'v$. There is at most one table entry for each triple $(\alpha,q,\beta)$, and final states have no outgoing entry. On input $w\in\Sigma^*$ the initial configuration is $\lambda q_0w\rho$. The input is accepted when the unique table computation reaches a final state; a nonfinal stuck configuration rejects. The \emph{accepted language} $L(M)\subseteq\Sigma^*$ is the set of accepted inputs.
\end{definition}

Every table step strictly shortens the configuration: its left side has $2k$ data symbols and one state, its right side at most $2k-1$ and one state. In particular the computation on input $w$ stops within $|\lambda w\rho|$ steps, so acceptance of a given input is decidable; it is emptiness of $L(M)$ that will prove undecidable.

\section{Overview of the Construction}\label{sec:overview}

\paragraph{The three transformations.}
Each equivalence in the chain of Section~\ref{sec:introduction} is realized by one construction with one governing invariant. Theorem~\ref{thm:verifier-source} turns halting of a two-counter program into nonemptiness of a deterministic shrinking local machine whose accepted inputs certify themselves. An accepted word carries a core that replays a genuine halting run, with explicit fuel paying for every head movement. Theorem~\ref{thm:weighted-fork} wraps the machine into a weighted cycle system with a single intentional fork and unique normal forms everywhere else. Theorem~\ref{thm:binary-transfer} recodes any finite nonerasing weight-decreasing cycle system over a nonempty alphabet in binary so that no rule turns a valid encoding invalid or an invalid one valid, and finitely many cleanup rules drain every reducible malformed cycle to one error normal form. Nonconfluence therefore always has a one-cycle witness, while confluence must be certified on all cycles at once, so the burden of every stage falls mainly on inputs it was never meant to receive.

\paragraph{Obstructions to a direct reduction.}
Four constraints stand between a Turing-complete model and the target class. Every rule must strictly reduce length, so divergence can encode nothing and a length-preserving machine step cannot be transliterated. Whatever the target computes must be paid for by material already present on the starting cycle. Right-hand sides must be nonempty, so garbage cannot be erased outright but has to be collected into a stable nonempty normal form. The alphabet is fixed to $\B$, so no reserved symbol can flag errors or delimit segments, and unused bit patterns act as inert separators exposing overlaps that no encoding intended. Finally, correct behaviour of the intended launch cycles decides nothing by itself. A single unjoinable peak on any malformed cycle destroys confluence, so every unintended input must be controlled rather than merely avoided. Each stage absorbs the constraint it is best placed to meet. The machine does the resource accounting, the wrapper the branch control, and the binary code the error draining.

\paragraph{The single fork and the machine model.}
All nonjoinable branching of the final system originates in one rule pair, the launch fork, which commits a cycle to one of two branch bits. Both branches run the same deterministic simulation and the bit survives only into the acceptance outcome. Every other component, from input decoding through rejection and error sweeping, has unique normal forms by design. The format of Definition~\ref{def:local-machine} is chosen for this embedding. A table rule $\alpha q\beta\to uq'v$ with $|u|+|v|<2k$ is itself a strictly length-reducing local rewriting rule, so the wrapper adopts the table up to branch-tagging its state symbols, one cycle rule per entry and branch, and inherits its determinism, at most one entry per window and none from final states, as unique normal forms of the simulation phase. The fixed radius bounds the boundary and collision windows the wrapper must guard, and the fixed five-symbol data alphabet and padding fix the shapes of all wrapper rules, the instance entering only through the state set and the table, so one wrapper scheme serves the entire uniform family of Theorem~\ref{thm:verifier-source}. What the format forbids, the input supplies. A shrinking configuration cannot support an unbounded run, so accepted words carry their operation budget and movement fuel in advance, and nonemptiness quantifies over all such provisionings, a halting run being caught by some exactly provisioned witness because it needs only finitely many resources.

\section{A Fixed-Radius Shrinking Verifier}\label{sec:verifier}
This section proves the following compilation lemma, from which Theorem~\ref{thm:verifier-source} follows. Two-counter programs, whose instructions increment a counter or combine a zero test with a decrement, are recalled below.

\begin{lemma}[Fuelled two-counter verification]\label{lem:fuel-compile}
Let $P$ be a finite deterministic two-counter program with a designated initial configuration and halting control $h$. Effectively from $P$ one can construct a deterministic shrinking local machine $V_P$ such that
\[
  L(V_P)\ne\varnothing
  \quad\Longleftrightarrow\quad
  P\text{ reaches }h\text{ from its initial configuration}.
\]
The machine uses the fixed data and input alphabet $D=\Sigma=\{0,1,f,\mathtt L,\mathtt R\}$, window radius $k=2$, fixed padding words $\lambda=\mathtt{LL}$ and $\rho=\mathtt{RR}$, and a nonfinal start state. Only its finite state set and transition table depend on $P$ and its initial configuration.
\end{lemma}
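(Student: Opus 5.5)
We prove Lemma~\ref{lem:fuel-compile} by making an accepted input a self-certifying transcript: a word over $\{0,1,f,\mathtt L,\mathtt R\}$ that lays out, in advance, a bounded block of tape on which $V_P$ replays the run of $P$. The key idea is that each machine step consumes a previously supplied symbol, so the configuration shrinks.

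\textbf{Input format and encoding of counters.} We fix a unary encoding of the two counters. The intended input has the shape $w = c\, f^{m}$. The core $c$ is a word over $\{0,1,\mathtt L,\mathtt R\}$ in which the counter values appear as blocks of $1$s separated by markers, with slack $0$s reserved for increments. The suffix $f^m$ is movement fuel.

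\textbf{Simulating instructions.} The finite control of $V_P$ stores the current instruction of $P$ together with a bounded amount of local bookkeeping. Each primitive head action replaces a window $\alpha q\beta$ with $|\alpha|=|\beta|=2$ by $uq'v$ with $|u|+|v|\le 3$. Concretely, every move across a cell deletes exactly one fuel-type symbol, which shortens the configuration by one:
\begin{itemize}
\item A slack $0$ is deleted when it is converted into a counter $1$ during an increment.
\item A $1$ is deleted during a decrement.
\item Otherwise an $f$ carried along in front of the head is deleted.
\end{itemize}
So that fuel is always within radius two, the intended layout interleaves: every data cell the head must traverse is followed by an $f$. The head then moves by rewriting $x f q y z \to x q' y z$-style windows, absorbing the adjacent fuel. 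A zero test inspects whether the next symbol after a counter marker is a marker rather than a $1$, which fits in one window. The padding $\mathtt{LL}$ and $\mathtt{RR}$ gives the radius-two windows room at both ends.

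\textbf{Soundness and completeness.} We structure the correctness argument as a simulation invariant.
\begin{itemize}
\item \emph{Invariant.} If the current configuration is well formed and encodes counter values $(a,b)$ at instruction $i$, then the deterministic table either reaches the encoding of the successor configuration of $P$ or gets stuck in a nonfinal state (on fuel exhaustion or a malformed window). Since a stuck nonfinal configuration rejects, any malformed or underprovisioned input is simply rejected, and only inputs encoding a halting run are accepted. The state $h$ is made the unique final state.
\item \emph{Soundness.} By induction on the number of instructions, acceptance implies that $P$ reaches $h$.
\item \emph{Completeness.} If $P$ halts after $N$ steps with maximal counter values $B$, then some input is accepted: take $O(B)$ slack cells and fuel exactly matching the total head travel, which is $O(NB)$. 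The witness is computable from the run, though completeness only needs that one exists.
\end{itemize}

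\textbf{Main obstacle.} The hardest part is making unbounded head sweeps both shrinking and radius two at once while keeping the table deterministic and nonfinal-stuck on every malformed input. The plan is to use no separate fuel reservoir at the far end of the tape. Instead, the required fuel is distributed in advance as $f$'s interleaved throughout the core, one per future traversal, with multiple $f$'s between cells permitted. The head then consumes whichever $f$ is adjacent. Fuel placed wrongly must lead only to a stuck rejection, never to acceptance. This holds because acceptance depends solely on reaching $h$ through genuine instruction steps, and each such step checks counter content locally.
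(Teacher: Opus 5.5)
Your overall plan (interleaved fuel, one deletion per step, stuck rejection on unexpected windows, an exactly provisioned witness) is the paper's, but two steps the argument depends on fail as written.

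The first is the fuel model. Every cell the head passes, fuel included, has to be paid for inside the radius-two window. Each step shortens the configuration by at least one while moving the state past at most two cells, and in the interior of a run $f^{j}$ the only cells the window offers for deletion belong to that run. So the head cannot ``consume whichever $f$ is adjacent'' and leave the rest of the run intact: each crossing deletes roughly half of the run. A boundary crossed $T$ times therefore needs on the order of $2^{T}$ fuel cells, not $T$, and your witness with $O(NB)$ fuel gets stuck. You have also not said how a deterministic table handles runs of arbitrary length and parity. The paper's traversal lemma makes this exact: a seam traversal turns $2m+1$ cells into $m$ alternating survivors, a direct traversal turns $2m$ into $m$, a single consumption turns $m+1$ into $m$, and wrong parities stick. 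Its allocation lemma then inverts these maps in reverse chronological order along the intended run, giving exponentially large but computable reservoir sizes. Since completeness only needs existence, this is what repairs your completeness direction.

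The second gap is that soundness has no base case. Because $\Sigma=D$, every word is an input. That includes words whose blocks encode values other than $(v_0,v_1)$, whose used zones are not contiguous, or which contain interior $\mathtt{LL}$ or $\mathtt{RR}$ factors that a radius-two window cannot tell apart from $\lambda$ and $\rho$. Your invariant is conditional on well-formedness, and nothing in your table ensures that an accepting computation starts from a well-formed encoding of the designated initial configuration. So the remark that acceptance requires genuine instruction steps only shows that $P$ halts from \emph{some} configuration.

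The paper closes this with a boot pass. It scans to the first $\mathtt{RR}$ (the fence), checks the whole witness grammar with the prefixes $1^{v_i}$ counted in the finite control, returns, and only then parks at the hub. Every later macro stays left of the fence. Because a validated core has no adjacent interior markers, $\mathtt{LL}$ occurs left of the state only at $\lambda$, so no return can park mid-core. You need such a certification phase before your per-instruction induction applies.

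Relatedly, your one-window zero test does not fit your own layout. With fuel interleaved, the cell after a marker is fuel, and a zero counter with slack shows $0$ rather than a marker. The test therefore has to dispatch on the first logical cell reached after traversing the reservoir, as the paper's conditional does.
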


We regard the initial configuration as part of the program instance and write the resulting machine simply as $V_P$. The construction has three layers: a physical representation of verifier inputs with explicit movement fuel; a finite collection of macros, the entry boot pass and the pay, seek, update, and return sequence that realizes each program instruction; and correctness lemmas that assemble the completed macro executions into whole computations.

\subsection{Counter Programs}

A deterministic two-counter program has finitely many controls $p$, among them a designated initial control $p_{\mathrm{init}}$ and the halting control $h$, two nonnegative counters, increment instructions $p:\ C_i:=C_i+1;\ \mathrm{goto}\ p'$, and conditional instructions $p:\ \mathrm{if}\ C_i=0\ \mathrm{goto}\ p_0$ $\mathrm{else}\ C_i:=C_i-1;\ \mathrm{goto}\ p_1$. The halting control has no outgoing instruction, and every other control carries exactly one. The designated initial configuration is $(p_{\mathrm{init}},v_0,v_1)$, where $(v_0,v_1)$ is the designated initial valuation.

\subsection{Physical Witnesses}

A witness for $V_P$ supplies three unary resources: an operation budget $B$ and counter capacities $R_0,R_1$. Its \emph{logical form} is the word
\[
  1^{B}\;
  \mathtt{L}\,1^{v_0}0^{R_0-v_0}\,\mathtt{R}\;
  \mathtt{L}\,1^{v_1}0^{R_1-v_1}\,\mathtt{R},
\]
where $v_i$ is the designated initial value of counter $i$, the resources satisfy $B\ge0$ and $v_i\le R_i$, and $\mathtt{L},\mathtt{R}$ are the block boundary markers. A \emph{physical witness} is obtained from the logical form by inserting a nonempty fuel reservoir $f^{m}$ at every boundary between consecutive logical cells, including block boundaries and both ends. Displays of logical forms suppress the interleaved fuel.

In particular, every budget token is physically flanked by reservoirs, so deleting it merges them. Counter $i$'s current value $m_i$ is stored as the used prefix $1^{m_i}$ of its block: an increment changes the leftmost reserve $0$ to $1$, a positive decrement changes the rightmost used $1$ to $0$, and a zero test checks that the used zone is empty. Every simulated instruction also consumes one budget token.

For example, with budget $B=2$, capacities $R_0=2$ and $R_1=1$, and initial values $v_0=1$ and $v_1=0$, the initial configuration on one physical witness is
\[
  \mathtt{LL}\;q_0\;
  f^{\mu_1}\,1\,f^{\mu_2}\,1\,f^{\mu_3}\,
  \mathtt{L}\,f^{\mu_4}\,1\,f^{\mu_5}\,0\,f^{\mu_6}\,\mathtt{R}\,
  f^{\mu_7}\,
  \mathtt{L}\,f^{\mu_8}\,0\,f^{\mu_9}\,\mathtt{R}\,
  f^{\mu_{10}}\;\mathtt{RR}
\]
with every reservoir size $\mu_j\ge1$.

The computation begins with a deterministic \emph{boot pass} that scans right from the left padding, turns at the first $\mathtt{RR}$ beyond the witness fields, and returns to the left-hand hub. Since $\Sigma=D$, a radius-two window cannot tell whether the $\mathtt{RR}$ it turns at is the genuine padding $\rho$ or an input factor. A successful boot therefore certifies a physical-witness \emph{core} extending from $\lambda$ to that occurrence, its right \emph{fence}; every later macro operates strictly left of the fence, so a suffix beyond it is never inspected and is semantically inert.

Boot states form their own namespace, whose only exit is the parking step into the initial hub, or into the final state when $p_{\mathrm{init}}=h$; return states later target arbitrary nonfinal hubs, including the initial one, but none is reachable before that exit. Correctly shaped but wrongly provisioned cores may also get stuck, which is harmless: completeness needs only one exactly provisioned witness.

A failed reverse crossing may consume the last cell of the $\lambda$-adjacent reservoir, but it then sticks in a nonfinal configuration; every boot or return execution that reaches its parking exit leaves that reservoir nonempty, and parking never consumes its final cell, as the fuel ledger below records. The boot entry accepts only fuel, and the core's interior $\mathtt{L}$ markers are never adjacent, so on every pass that reaches the return, $\mathtt{LL}$ occurs to the left of the state only at the true left end and parking directly after $\lambda$ is unambiguous.

A \emph{candidate core} is the data of arbitrary content strictly between $\lambda$ and a chosen occurrence of $\mathtt{RR}$, its fence. For $b\ge0$ and $0\le m_i\le R_i$, a \emph{runtime logical core} is the word $1^{b}\,\mathtt L\,1^{m_0}0^{R_0-m_0}\,\mathtt R\,\mathtt L\,1^{m_1}0^{R_1-m_1}\,\mathtt R$, its budget and counter fields then \emph{legal}, and a candidate core is \emph{witness-shaped} when it carries a nonempty fuel reservoir at every boundary, interleaved as above, and deleting its fuel leaves a runtime logical core. The logical form of a witness is the runtime logical core with $b=B$ and $m_i=v_i$, so a physical witness is witness-shaped, and a completed boot certifies its candidate core witness-shaped with $m_i=v_i$. Since a witness-shaped core carries fuel at every boundary, no two of its logical cells are adjacent and $\mathtt{RR}$ occurs nowhere within it, so at most one candidate core is witness-shaped and its fence is the first $\mathtt{RR}$ after $\lambda$, the occurrence a completed boot turns at.

\subsection{Macros and Contracts}

The machine $V_P$ is organized as a finite collection of \emph{macros}, each expanded into a finite set of directional phase states with a private namespace. Table~\ref{tab:macros} lists their contracts. The boot pass is the entry macro, and a return that targets $h$ parks directly in the designated final state, which has no outgoing table rule.

\begin{table}[!ht]
\caption{Macro contracts of $V_P$.}
\label{tab:macros}
\centering
\small
\setlength{\tabcolsep}{3.5pt}
\renewcommand{\arraystretch}{1.0}
\begin{tabularx}{\textwidth}{@{}l >{\raggedright\arraybackslash}p{0.34\textwidth} >{\raggedright\arraybackslash}X@{}}
\hline
Macro & Entry condition & Successful effect \\
\hline
Boot & start state; witness-shaped core with $m_i=v_i$ for $i=0,1$ & certifies the core, parks at the hub (final state if $p_{\mathrm{init}}=h$) \\
Pay & at the hub; budget nonempty & deletes one budget token \\
Seek $i$ & at the seek entry $c_p$ & moves the control to block $i$ \\
Increment $i$ & reserve $0$ at the used edge & $m_i\mapsto m_i+1$, branch $p'$ \\
Conditional $i$ & used zone empty, or used $1$ at the used edge & empty used zone: branch $p_0$; else $m_i\mapsto m_i-1$, branch $p_1$ \\
Return & at a return entry & re-enters the target hub, unchanged but for consumed fuel; for $h$, the final state \\
\hline
\end{tabularx}
\end{table}

Failure of a semantic guard, malformed layout, or exhausted fuel leaves the corresponding phase stuck. An instruction at control $p$ is realized as the macro sequence pay, seek $i$, update, return, where the update macro of an increment rewrites at the used edge and branches to $p'$, that of a conditional dispatches on the cell at the used edge between the decrement continuation towards $p_1$ and the zero continuation towards $p_0$, and the return re-enters the hub at the branch target. A conditional thus compiles to one update macro with a single entry and two exits, its turnaround internal to the positive branch. If the initial control is $h$, the boot exit parks in the final state likewise. The exit state of each macro is the entry state of the next, which the run decomposition below uses.

\subsection{The Command Table}\label{sec:command-table}

\paragraph{States.}
A nonfinal state of $V_P$ is a tuple consisting of a macro, one of its finitely many phases, a direction, and a boundary class drawn from the finite set of field tags. The hub states are the exit states of the return phases, one per program control other than $h$; a return targeting $h$ exits into the one designated final state instead. Every component ranges over a set that is finite and computable from $P$, so the state set is finite and computable.

\paragraph{Transition families.}
Every table entry is an instance of one of three right-moving families or their left-moving mirrors. With $x,y,c$ ranging over cells and $f$ denoting literal fuel, the right-moving forms are
\begin{align*}
  &\text{(T1) move:} & x\,y\,q\,c\,f&\longrightarrow x\,y\,c\,q' &&\text{cross $c$, consuming the fuel beyond it,}\\
  &\text{(T2) rewrite:} & x\,y\,q\,c\,f&\longrightarrow x\,y\,q'\,c' &&\text{rewrite $c$ to $c'$, consuming the fuel,}\\
  &\text{(T3) delete:} & x\,y\,q\,f\,c&\longrightarrow x\,y\,q'\,f &&\text{delete $c$, preserving the fuel before it.}
\end{align*}
The left-moving mirrors are $f\,c\,q\,y\,x\to q'\,c\,y\,x$, $f\,c\,q\,y\,x\to c'\,q'\,y\,x$, and $c\,f\,q\,y\,x\to f\,q'\,y\,x$, read symmetrically. Each form has one state and four data cells on the left and at most three data cells on the right, so it fits the $k=2$ table format and shrinks. The families are schemata, not simultaneously generated alternatives: \textup{(T1)} and \textup{(T2)} share their left-hand shape outright, and \textup{(T3)} joins them when the scanned pair is $ff$, so a window alone selects no family; each node below carries at most one command per window, and that command names its family and successor.

\paragraph{Compilation.}
Each macro is specified by a finite deterministic \emph{command graph}: its nodes are the macro's states, and each node carries finitely many guarded commands, one per expected window, each emitting a single instance of one family or mirror and naming the successor node. Each guard is an exact local key in $D^2\times Q\times D^2$, determined by the node and one window; the guards of a node are pairwise disjoint, and the regular field structure of the physical witness, its fuel runs, token runs, and marker-delimited blocks, is recognized by sequences of such keys threaded through the phase states, generated by the schemata of Appendix~\ref{app:catalogue}, so every graph is finite and computable from $P$.

Expanding all commands of all macros yields the transition table of $V_P$, and Table~\ref{tab:compiler} specifies the graphs by phase kind.

\begin{table}[!ht]
\caption{Command graphs of $V_P$ by phase kind.}
\label{tab:compiler}
\centering
\small
\setlength{\tabcolsep}{3pt}
\renewcommand{\arraystretch}{1.0}
\begin{tabularx}{\textwidth}{@{}l >{\raggedright\arraybackslash}p{0.2\textwidth} >{\raggedright\arraybackslash}p{0.24\textwidth} >{\raggedright\arraybackslash}p{0.16\textwidth} >{\raggedright\arraybackslash}X@{}}
\hline
Phase & State memory & Accepted cases & Families & Successor \\
\hline
Boot scan & field index, position in $1^{v_i}$ & at $q_0$ fuel only; the expected fuel, token, marker, reserve cell; the fence & \textup{(T1)}; turn mirrored \textup{(T3)} & next scan phase; at the fence, boot return \\
Boot return & direction only & mirrored field cells; two fuels at $\lambda$ & mirrored \textup{(T1)}; parking \textup{(T3)} & $s_{p_{\mathrm{init}}}$, or the final state if $p_{\mathrm{init}}=h$ \\
Pay & control $p$; entry/active phase & two fuels; fuel then token & \textup{(T3)} & pay-active $t_p$; on the token, seek entry $c_p$ \\
Seek $i$ & instruction, target block, field class, $\mathtt L$s passed & fuel; budget $1$; counter $0/1$; $\mathtt L/\mathtt R$ per field class & \textup{(T1)} & update entry inside block $i$ \\
Increment $i$ & block $i$; branch target $p'$ & fuel; used $1$s; first reserve $0$; none at $\mathtt R$ & \textup{(T1)}; one \textup{(T2)} & return towards $p'$ \\
Conditional $i$ & block $i$; targets $p_0,p_1$; seen-$1$ bit & fuel; used $1$ sets the bit; reserve $0$ or $\mathtt R$ dispatches & \textup{(T1)} & return towards $p_0$, or turnaround \\
Turnaround & block $i$; branch target $p_1$ & the dispatch cell $0$ or $\mathtt R$; mirrored fuel; the rightmost used $1$ & mirrored \textup{(T1)}; one mirrored \textup{(T2)} & return towards $p_1$ \\
Return & target control, field class & mirrored field grammar; two fuels at $\lambda$ & mirrored \textup{(T1)}; parking \textup{(T3)} & hub $s_{p'}$, or the final state for $h$ \\
\hline
\end{tabularx}
\end{table}

Each row abbreviates finitely many exact keys in $D^2\times Q\times D^2$, one per expected window per listed case, the context pair ranging over the admissible sets defined below. Cases within one state are pairwise disjoint; scan, pay-active, dispatch, turn, update, and parking states are distinct; keys not listed have no table entry, so unexpected windows stick.

Every traversing base state $q$ has an unflagged copy $q$ and a flagged copy $\hat q$. The unflagged fuel-pair command loops from $q$ to $q$; the flagged fuel-pair command, generated once for every flagged copy that receives keys, goes from $\hat q$ to $q$ and clears the flag; every same-direction crossing of a logical cell targets the flagged copy of its successor; a handoff that reverses direction targets the unflagged copy; the pay states are unflagged throughout.

The update and return phases use increment states $a_{i,p'}$, carrying the block index and the branch target, conditional states $b_i^{\beta}$, carrying the block index, the targets $p_0,p_1$, and the seen-$1$ bit $\beta$, turnaround states $u_{i,p_1}$, carrying the block index and the branch target, and return states $r^{\phi}_p$ with target control $p$ and field class $\phi$, the class carrying its block index where one exists; here and below, printed instances abbreviate $a_{i,p'}$, $b_i^{\beta}$, $u_{i,p_1}$ as $a_{p'}$, $b^{\beta}$, $u_{p_1}$, suppressing the block index and the conditional's targets, so increments of the two counters towards one target are distinct states.

Each macro-entry state is a distinguished \emph{entry copy} of its first phase state: it carries that state's accepted cases, flagged or unflagged according to the command that enters it, with the fuel-only rule prevailing at a flagged entry copy; its commands take the successors its bound state's clauses name, and no command internal to a macro targets an entry copy. The boot entry $q_0$ is further restricted in Appendix~\ref{app:catalogue}: its sole command opens the direct traversal of the leading reservoir.

The seen-$1$ bit is the finite-control record that a used $1$ was crossed, and a return state carries its target program control, so the phase-to-exit correspondence of Table~\ref{tab:macros} is realized literally. The command graphs are generated by the schemata of Appendix~\ref{app:catalogue}, one clause per accepted case.

A command's exact key records both sides of the state: the scanned pair says what the phase faces next, and the rear pair records where it has been. For a phase state $q$ let $\mathrm{lead}(q)\subseteq D^2$ collect the scanned pairs its clauses list and $\mathrm{trail}(q)\subseteq D^2$ the rear pairs its clauses are generated against, each pair written in tape order, so the cell adjacent to the state stands last in a right-moving state's rear pair and first in a left-moving one's. A phase state's \emph{positional invariant} fixes its position, direction, and field class within the core, at boot within the candidate core up to the chosen fence; Appendix~\ref{app:catalogue} records it phase by phase beside each macro's clauses, its per-state sets listing the trail pairs outright. It carries no semantic guard, counter value, or seen-$1$ assertion; those are data conditions, established with the macro contracts below. The candidate keys are
\[
  \mathrm{Adm}_0(q)=
  \begin{cases}
    \mathrm{trail}(q)\times\{q\}\times\mathrm{lead}(q), & q\ \text{right-moving},\\
    \mathrm{lead}(q)\times\{q\}\times\mathrm{trail}(q), & q\ \text{left-moving},
  \end{cases}
\]
each key listing the pair left of the state first, so the scanned pair stands first exactly at a left-moving state. The admissible set $\mathrm{Adm}(q)$ is obtained from $\mathrm{Adm}_0(q)$ by the sole per-clause restrictions of the product: the keys of the three boundary deletions, the turn at the fence and the parkings at $\lambda$, boot and return alike, are retained only against the rear pair $ff$, so a last-survivor rear sticks at an inadmissible window. The table carries exactly one command per admissible key: the generator reads the scanned pair alone and emits the family instance and successor of the matching clause, the rear pair carried unchanged. Every flagged copy that receives keys has $\mathrm{lead}(\hat q)=\{ff\}$, one flag-clearing key per rear pair.

An entry copy carries its own sets, those of its bound state except where its clause restricts them, as at $q_0$.

The state set is the full constructor product over the program's controls and instructions, together with the designated final state, but the table is generated inductively from the start state and the hubs: every other nonfinal tuple receives its keys exactly when it has a \emph{creation site}, an already generated command naming it as successor, while the final state, entered by its parkings, has no outgoing rule; a tuple without a creation site, such as the flagged copy $\hat b^{0}$, which no command names since crossing a used $1$ targets $\hat b^{1}$ regardless of the current seen-$1$ bit, has an empty trail set, receives no keys, and is inert. Equivalently, the tuples receiving keys form the least set of nonfinal tuples containing the start state and the hubs and closed under the nonfinal successors their commands name. The return classes generated at a target $p$ are thus exactly those the walk of Appendix~\ref{app:catalogue} reaches in zero or more steps from the classes its creation sites bind: every walk step stays in its field class or moves to one further left, and a single step joins each consecutive pair of classes, so these are the bound classes together with every class to their left. This closure is a finite computation on the command graphs, so the table is determined with no reachability analysis of machine configurations.

\begin{lemma}[Generated table]\label{lem:generated-table}
The command graphs expand effectively to a finite deterministic shrinking local table of radius two. Every rule has four data cells on the left and at most three on the right, the final state has no outgoing rule, macro namespaces are disjoint, and the only rule leaving the boot namespace is its parking exit. No command internal to a macro targets a macro-entry state: macro-entry states occur as successors only at the named macro handoffs and parking exits, and the designated final state occurs only as the parking target associated with the halting control $h$.
\end{lemma}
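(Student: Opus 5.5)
The plan is to verify each clause of the lemma directly from the way the table is generated, as described in Section~\ref{sec:command-table}; no new construction is needed.

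\emph{Finiteness and effectiveness.} The constructor product over controls, instructions, phases, directions, field classes, flags and seen-$1$ bits is finite and computable from $P$. The set of tuples receiving keys is the least set containing the start state and the hubs and closed under named nonfinal successors. It is obtained by a finite fixpoint iteration on the command graphs, since each node has finitely many clauses and each clause names one successor. Each generated tuple $q$ gets the finite key set $\mathrm{Adm}(q)\subseteq D^2\times Q\times D^2$. So the table is a finite computable list.

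\emph{Shape and shrinking.} Every rule is an instance of \textup{(T1)}--\textup{(T3)} or a mirror. Inspecting the six forms, each left-hand side is $\alpha q\beta$ with $|\alpha|=|\beta|=2$, and each right-hand side has exactly one state and three data cells. Hence $|u|+|v|=3<4=2k$ with $k=2$, as Definition~\ref{def:local-machine} requires.

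\emph{Determinism.} This is the step I expect to need the most care. Two table entries on the same key $(\alpha,q,\beta)$ could arise only from one node, since keys contain the state. Within a node there are two cases.
\begin{itemize}
\item Two distinct clauses: their guards are pairwise disjoint on the scanned pair.
\item A single clause contributing keys across several rear pairs: the generator reads only the scanned pair and carries the rear pair unchanged, so each key yields one command.
\end{itemize}
The subtle points are the overlap of \textup{(T1)}, \textup{(T2)} and \textup{(T3)} on the scanned pair $ff$, and the flagged copies. For the families, I would argue that each node names a single family per window, so shared left shapes never produce two entries. For flags, $\hat q$ has $\mathrm{lead}(\hat q)=\{ff\}$ and only the flag-clearing command, while the unflagged $q$ has the looping fuel command; these are distinct states. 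For entry copies, the convention that the fuel-only rule prevails at a flagged entry copy resolves the collision between the fuel-pair clause and the bound state's $ff$ case. I would check this node by node against the catalogue of Appendix~\ref{app:catalogue}.

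\emph{Final state and namespaces.} The final state is not generated as a key-receiving tuple, because the least set is taken over nonfinal tuples, so it has no outgoing rule. Namespaces are disjoint because the macro tag is a component of every state tuple. The only boot clause naming a non-boot successor is the parking \textup{(T3)} into $s_{p_{\mathrm{init}}}$, or into the final state when $p_{\mathrm{init}}=h$; every other boot successor in Table~\ref{tab:compiler} is a boot phase.

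\emph{Entry states as successors.} By construction, no command internal to a macro targets an entry copy. Scanning the successor column of Table~\ref{tab:compiler} shows that entry states are named only at the handoffs pay$\to$seek ($c_p$), seek$\to$update, update$\to$return, and at the return and boot parkings into hubs. Similarly, the final state is named only by parkings whose target control is $h$, that is, return states $r^\phi_h$ and the boot parking when $p_{\mathrm{init}}=h$. Hubs exist only for controls other than $h$, so no other command can reach the final state.
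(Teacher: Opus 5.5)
Your proposal is correct and follows essentially the same route as the paper's proof: shape and shrinking are read off the (T1)--(T3) forms and their mirrors, and determinism comes from one generated command per admissible key, with distinct scanned pairs per state and the command rather than the window naming the family; the namespace, entry-discipline and final-state clauses come from scanning the listed successors. You are only more explicit about the least-fixpoint generation and about flagged and entry copies, and one remark there is misdescribed, though harmlessly: the fuel-only convention at a flagged entry copy does not resolve a collision between two $ff$ commands (the fuel loop and the flag clearing would name the same instance and successor anyway); its role is to cut the copy's lead set down to $\{ff\}$, so that a logical cell there sticks.
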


\begin{proof}
Finiteness, radius two, and shrinking follow directly from Table~\ref{tab:compiler} and the three transition families with their mirrors. Determinism holds because each admissible key of the restricted piecewise product generates exactly one entry, by the catalogue in Appendix~\ref{app:catalogue}, and distinct clauses of one state have distinct scanned pairs; the coinciding family shapes, \textup{(T1)} with \textup{(T2)} everywhere and both with \textup{(T3)} at the scanned pair $ff$, are harmless because the generated command, not the window, names the family. The namespace and final-state claims follow from the listed successors.

For the entry discipline, the successor column names a macro-entry state only at the handoffs, pay to the seek entry $c_p$, seek to its update entry, and the update phases to their return entries, and at the parking exits of boot and return, which target hubs; every other successor, a scan phase, the boot return, the pay-active state $t_p$, or the turnaround, is internal to its macro's namespace, and the schema fuel loops and flag clearings target interior copies, never entry copies. The final state appears only in the two parking rows, in place of the hub exactly when the target control is $h$.
\end{proof}

\begin{lemma}[Exhaustive generation]\label{lem:exhaustive}
At any phase state, if the window at the state is an admissible key, then exactly one command applies; if it is not, including when inadmissibility stems from a failed semantic guard, exhausted fuel, or a malformed field cell, no command applies. If moreover the state satisfies its positional invariant within a witness-shaped core, or occurs along the boot pass from $q_0$ within its candidate core, then the step establishes the successor's positional invariant, its position, direction, and field class within the core, though the successor's own window need not be admissible.
\end{lemma}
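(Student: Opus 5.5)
The first assertion is a statement about the generated table alone, so I would derive it directly from the construction of $\mathrm{Adm}(q)$ together with Lemma~\ref{lem:generated-table}. By construction the table holds exactly one command per admissible key of $q$, and no key outside $\mathrm{Adm}(q)$ receives a command. A table rule applies at a state exactly when the four cells around that state form its left-hand key. So when the window is admissible, exactly one rule applies; Lemma~\ref{lem:generated-table} excludes a second rule, since distinct clauses of one state have distinct scanned pairs and each rear pair is carried unchanged. When the window is inadmissible, no rule applies.

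Next I would check that each failure mode listed in the statement really yields an inadmissible window. A failed semantic guard shows up as a scanned pair that no clause of the current state lists. The cases are a reserve $0$ in place of a used $1$ at a conditional turnaround, an $\mathtt R$ reached by an increment, or an empty budget at pay. Exhausted fuel means the scanned pair lacks the literal $f$ that families (T1)--(T3) require, or, at the boundary deletions and parkings, the rear pair is not $ff$, which is precisely the restriction cutting $\mathrm{Adm}_0$ down to $\mathrm{Adm}$. A malformed field cell is a pair outside $\mathrm{lead}(q)$ or $\mathrm{trail}(q)$. This step is a case inspection over the rows of Table~\ref{tab:compiler} and the catalogue clauses.

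For the second assertion I would argue phase by phase, using the positional invariants recorded in Appendix~\ref{app:catalogue}. Take a state at its invariant position inside a witness-shaped core, or on the boot pass inside its candidate core. Applying one of (T1), (T2), (T3) or their mirrors moves the state across at most one logical cell, or rewrites or deletes a cell next to it, and consumes exactly one fuel symbol. Because the core is witness-shaped, the cell crossed is the one the clause expects, and the successor the clause names carries the matching field class and direction. There are three kinds of step to consider. A fuel loop keeps the position and class. A flag-setting crossing moves to the next class, and the flagged copy records that it now sits on the leading fuel of the next reservoir. A direction-reversing handoff (turn, dispatch, turnaround, parking) targets the unflagged copy at the reversed position. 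I would also verify that every reservoir stays nonempty during a crossing, except that a pair $ff$ may shrink to a single $f$, which is the point of the flag discipline; the successor's trail pair may then be something other than $ff$, which is why the lemma does not claim the successor's window is admissible.

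I expect the main obstacle to be the boot pass. There the core is only a candidate, so the invariant cannot lean on witness-shapedness, and a factor $\mathtt{RR}$ inside the input may serve as the fence. My first attempt would be an induction along the scan: each scan state's field index and position in $1^{v_i}$ determine the prefix already read, so the successor's invariant is defined relative to that prefix, with no appeal to the unread suffix. The turn at the first $\mathtt{RR}$ is then positioned correctly by definition of the chosen fence.
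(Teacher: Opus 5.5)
Your first assertion matches the paper: one command per key of $\mathrm{Adm}(q)$, distinct scanned pairs within a state, and no command off $\mathrm{Adm}(q)$. Your witness-shaped transport by fuel loops, flagged crossings and named rewrites is also essentially the paper's.

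\textbf{The gap is in the boot clause.} Your induction runs along the scan and stops at the turn. The pass, however, continues with the retrace in $q_{\mathrm{ret}}$ and ends in the parking exit, and that last step is where the clause has real content: its successor, $s_{p_{\mathrm{init}}}$ or the final state, must stand directly after $\lambda$. The parking key is only a scanned $\mathtt{LL}$ against a rear $ff$. A candidate core is arbitrary and may contain an interior $\mathtt{LL}$, where a placed $q_{\mathrm{ret}}$ would park inside the core; this is why the paper confines the claim to the pass. You need two facts.
\begin{enumerate}
\item Each mirrored \textup{(T1)} step of the retrace (and the turn) deletes only a fuel cell next to the cell it crosses, leaving everything further left untouched. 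So the scanned pair of $q_{\mathrm{ret}}$ is always exactly as the scan left it, inside the validated region. There every reservoir kept a survivor, so no two logical cells, in particular no two $\mathtt L$ markers, are adjacent.
\item The leading reservoir is nonempty, because the only key of $q_0$ is $\mathtt{LL}\,q_0\,ff$ and its direct traversal leaves a survivor.
\end{enumerate}
The second fact is not cosmetic. With an empty budget field the first logical cell is $\mathtt L$, and an empty leading reservoir would give $\mathtt{LLL}$. Its right-hand $\mathtt{LL}$ would then park the hub one cell inside the core. Only the two facts together make the parking factor $\lambda$ itself. The witness-shaped parking needs the analogous remark, that such a core presents $\mathtt{LL}$ at $\lambda$ alone, and your generic handoff sentence does not state it.

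\textbf{Smaller corrections.}
\begin{enumerate}
\item Your planned check that reservoirs stay nonempty would fail, and it is not needed. A single consumption can empty a reservoir outright, for instance seek's crossing of the opener $\mathtt L$ or the increment's \textup{(T2)}. The lemma is phrased so that the successor is only claimed to be correctly placed, possibly over an inadmissible window. The flag discipline makes seam sizes exact; it does not preserve nonemptiness.
\item The case inspection of failure modes is superfluous: an inadmissible window carries no command by construction, whatever its cause.
\item Your first failure example is wrong. The turnaround lists $f0$ in its lead, to re-cross the dispatch cell, and the conditional has no failing guard. The only semantic failures are the empty budget and the full block.
\item Pay's token deletion removes a budget token, not a fuel cell, so not every step consumes exactly one fuel symbol.
\end{enumerate}
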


\begin{proof}
The table carries one command per key of $\mathrm{Adm}(q)$, and distinct clauses of one state have distinct scanned pairs, giving existence and uniqueness on admissible windows. For transport within a witness-shaped core, a fuel-pair step keeps the class and shortens the residue, a crossing advances the class, flagging its successor when the direction is kept, and a rewrite or deletion changes exactly the cell its clause names, so every command stands its named successor at the position, direction, and field class the successor's invariant prescribes, by inspection of the clauses; a parking fires only against a scanned $\mathtt{LL}$, and a witness-shaped core, no two of whose logical cells are adjacent, presents that factor at $\lambda$ alone, so the hub or final state stands directly after $\lambda$. Along the boot pass, whose candidate core is not yet certified, reachability replaces placement: by induction from $q_0$'s opening of the leading reservoir, the fuel-erased data behind a scan state is the witness-grammar prefix of Appendix~\ref{app:catalogue}, the turn stands $q_{\mathrm{ret}}$ left of the fence it fires at, and the retrace crosses only cells the scan validated, whose interior $\mathtt L$ markers are never adjacent, while the leading reservoir, nonempty by the opening key of $q_0$, keeps a survivor until the retrace crosses the first logical cell, so the parking factor $\mathtt{LL}$ that $q_{\mathrm{ret}}$ meets against a rear $ff$ is $\lambda$ itself; an arbitrarily placed $q_{\mathrm{ret}}$ could instead park at an interior $\mathtt{LL}$ of its candidate core, which is what confines the boot claim to the pass. This is all a step establishes: a consumption that empties the reservoir at the successor's window leaves the successor correctly placed over a window outside its sets, one of the stuck windows Appendix~\ref{app:catalogue} records phase by phase.

A window outside $\mathrm{Adm}(q)$ carries no command by construction, whether it fails on the scanned side, a failed semantic guard, an empty budget or a full block among them, exhausted fuel, or a malformed field cell, or on the rear side, as after a consumption that emptied the following reservoir; the judgement is at the window alone.
\end{proof}

\subsection{Fuel Ledger and Allocation}

\begin{lemma}[Traversal]\label{lem:traversal}
A \emph{seam} traversal, entered by crossing the preceding logical cell, turns a reservoir of $2m+1$ cells, $m\ge1$, into its $m$ alternating survivors, the state exiting unflagged at the next logical cell with the survivors behind it; a \emph{direct} traversal, by a phase starting adjacent to the reservoir, pairs its cells from the first and turns $2m$ into $m$, again for $m\ge1$ and exiting likewise. A \emph{single consumption}, one scheduled \textup{(T2)} or \textup{(T3)} step or one \textup{(T1)} crossing followed by no traversal, takes $m+1$ to $m$, and a \emph{payment merge}, in which the payment macro contracts its leading reservoir to one cell whose token deletion merges it into the following reservoir, takes positive sizes $(a,b)$ to $b+1$. Sizes of the wrong parity, or too small, stick strictly inside the traversal, at a flagged state or a wrong-parity window, so successful traversal sizes are exact rather than merely large.
\end{lemma}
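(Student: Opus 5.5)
We prove the Traversal lemma by tracking how the commands for fuel pairs act on a single reservoir. A reservoir is traversed only through the schema fuel loops and flag clearings, and the crossing commands that enter and leave it; Lemma~\ref{lem:exhaustive} guarantees that at each step exactly the listed command fires or the run is stuck. The invariant we follow is a pair: the number of fuel cells still ahead of the state, and whether the state carries the flag. The fuel cells behind the state are the survivors and are never changed again, because the state only moves away from them. All four transitions are handled the same way, by induction on the remaining reservoir size, using the rewrite shapes of \textup{(T1)}--\textup{(T3)}.

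\emph{Seam traversal.} Crossing the preceding logical cell $c$ with \textup{(T1)} consumes the one fuel cell just beyond $c$, leaving $2m$ cells. Because the direction is kept, the state arrives in its flagged copy $\hat q$. Since $\mathrm{lead}(\hat q)=\{ff\}$, the only possible step is the flag-clearing fuel-pair command. This is a \textup{(T1)} instance with $c=f$: it crosses one $f$, which survives behind the state, and consumes the next, leaving $2m-2$ cells ahead. From then on each unflagged fuel-pair step $q\to q$ does the same, turning two cells into one survivor. After $m$ pair steps the state faces the next logical cell, unflagged, with $m$ survivors behind it, which is the claim.

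If the reservoir size is wrong, the run sticks at a specific window. With size $1$, nothing is left after the crossing, so $\hat q$ faces a logical cell; that pair is not in $\mathrm{lead}(\hat q)$, so the run sticks at a flagged state. With an even size, exactly one fuel cell remains before the next logical cell; the scanned pair is then $f$ followed by a logical cell. The unflagged state lists no such pair, because its crossing clauses scan a logical cell followed by $f$; this is the wrong-parity window.

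\emph{Direct traversal.} The argument is the same without the initial crossing and without the flag. The phase begins adjacent to the reservoir, in an unflagged state: $q_0$'s opening key, a handoff that reverses direction, or a pay state. It pairs cells from the first one, so $2m$ cells give $m$ survivors. An odd size again leaves a lone $f$ in front of a logical cell, which sticks.

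\emph{Single consumption.} Each scheduled \textup{(T1)}, \textup{(T2)} or \textup{(T3)} step removes exactly one $f$, so a reservoir of $m+1$ cells becomes $m$. The \textup{(T3)} case needs separate care, since there the fuel before the deleted cell is kept rather than consumed. We read off from the clause which reservoir it charges.

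\emph{Payment merge.} The pay state first contracts its leading reservoir of size $a$ by unflagged \textup{(T3)} steps at the scanned pair $ff$. Each such step deletes one $f$ and keeps the $f$ in front, so the reservoir shrinks to one cell. The key `fuel then token' then deletes the token, joining that single cell to the following reservoir of size $b$, which gives $b+1$.

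\emph{Expected obstacle.} The main difficulty is showing that the flag discipline really rules out partial or misaligned pairings, so that successful traversal sizes are exact rather than merely large. This means checking, against the catalogue in Appendix~\ref{app:catalogue}, two facts. First, no crossing clause accepts the scanned pair (lone $f$, logical cell). Second, no flagged copy has any lead other than $ff$. I would check these row by row in Table~\ref{tab:compiler}, including the left-moving mirrors, where the tape order is reversed.
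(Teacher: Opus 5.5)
Your proposal is correct and takes essentially the paper's route. Both arguments track the flag through the seam entry crossing, the flag-clearing fuel-pair step and the pairwise (T1) halving, read single consumptions and the payment merge off their instances, and derive the failure clause from $\mathrm{lead}(\hat q)=\{ff\}$ together with the absence of any lone-fuel-before-cell key; the one minor slip is that a seam reservoir of size two meets its lone remaining cell at the still-flagged copy, not at an unflagged state, and it sticks there for the same reason.
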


\begin{proof}
A movement phase traverses a reservoir with \textup{(T1)} steps that cross its cells pairwise: each interior step crosses one fuel cell, which survives behind the state, and consumes the next. The event effects are exact. A seam entry crossing consumes the reservoir's first cell and flags the state, the flagged fuel-pair step clears the flag, and the unflagged pairs then halve the remaining $2m$ cells, so the exit crossing departs unflagged with the alternating survivors behind it; the direct pairing from the first cell halves $2m$ likewise, and a single consumption and the payment deletions remove exactly the cells their instances name, the contracted leading cell joining the following reservoir at the token deletion. For the failure clause, entry flags the state, whose only windows continue with fuel, so a seam-entered even reservoir blocks at a singleton and a singleton seam reservoir blocks directly after its entry crossing.
\end{proof}

Every parking and payment command demands two visible fuel cells or a token beyond the fuel, while a reverse crossing of the first logical cell may consume the last cell of the $\lambda$-adjacent reservoir and then sticks flagged at $\lambda$; hence every boot or return execution that reaches its parking exit leaves that reservoir nonempty, and parking never consumes its final cell.

\begin{lemma}[Effective fuel allocation]\label{lem:fuel-alloc}
A finite intended run, a sequence of macros from boot through its final parking, induces one finite chronological trace of seam traversals, direct traversals, single consumptions, and payment merges over its evolving reservoirs; a seam traversal may straddle a handoff, its entry crossing in one macro and its completion in the next. Some effectively computable assignment of positive initial reservoir sizes enables every event of the trace and leaves every prescribed survivor nonempty. Applying the event effects along the trace, the assignment determines the reservoir sizes after every prefix of the trace; these intermediate sizes are positive, and each event, the completion of a straddling seam included, is enabled from the sizes current at its own point.
\end{lemma}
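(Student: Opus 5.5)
The natural approach is backward computation. I would fix the trace and work through it from its last event to its first, maintaining a requirement vector: for each reservoir currently alive, the least size it must have at that point so that every later event is enabled and every prescribed survivor is nonempty. Each event type from Lemma~\ref{lem:traversal} has an effect that is an injective affine map on the sizes it touches, so I can invert it explicitly.

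\begin{itemize}
  \item A seam traversal produces $m$ survivors from $2m+1$ cells with $m\ge1$. If its output must have size $m$, then its input must be exactly $2m+1$.
  \item A direct traversal requires input $2m$.
  \item A single consumption requires $m+1$.
  \item A payment merge produces $b+1$ from $(a,b)$. It requires $b=s-1$ whenever its output $s$ must hold, and $a=1$ after the contraction; the contraction itself is a direct or single-consumption event that I invert in turn.
\end{itemize}

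I start the backward pass at the end of the run by setting every surviving reservoir to $1$; its size at that point is the only constraint. Since every event's effect is exact, the backward pass is forced at each reservoir once its later requirement is fixed. This gives a deterministic computation, and it is effective because the trace is finite and computable from the intended run.

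The key steps, in order, are as follows. First, I would make the trace precise as a list of events, each tagged with the reservoirs it reads and writes, including the two halves of a straddling seam traversal. The entry crossing consumes one cell and flags the state; the completion pairs off the remaining cells. I would treat these as a single composite event anchored at the entry crossing, with the completion reading the size left by the entry. Second, I would carry out the backward induction, checking that each inverse step yields a positive integer. For a seam traversal, $2m+1\ge3$. For a direct traversal, $2m\ge2$. For a single consumption, $m+1\ge2$. For a merge, the output must satisfy $s\ge2$ so that $b=s-1\ge1$; since a merge always adds a cell, its output requirement is at least the $+1$ of its own effect, so this holds. Third, I would run forward from the computed initial sizes and check, by induction on prefix length, that each intermediate size equals the backward requirement at that point. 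That shows each intermediate size is positive and that each event is enabled at its own time. Enabledness follows because Lemma~\ref{lem:traversal} states that the exact sizes succeed.

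The main obstacle is consistency where one reservoir is touched by several events whose requirements must compose, above all the interaction of merges with later traversals of the merged reservoir and with straddling seams that cross a macro handoff. Backward composition resolves most of this, since each reservoir's requirement at any point is determined by the single next event that touches it. The remaining worry is a reservoir that must keep survivors that no later event inspects, together with the $\lambda$-adjacent reservoir, which parking must never empty. I would handle both by giving them terminal requirement $1$ and, for the $\lambda$ reservoir, by using the remark above that parking needs two visible fuel cells and consumes at most one. Its requirement therefore stays at least $2$ before each parking and positive after it.
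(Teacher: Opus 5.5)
\textbf{Gap at the payment merge.} Your overall route matches the paper's. You assign unit requirements to terminal reservoirs, invert the exact effects of Lemma~\ref{lem:traversal} in reverse chronological order, treat a straddling seam as a single event, and replay forward to show that each intermediate size equals the backward requirement. The one step that fails as you argue it is positivity at a payment merge. You correctly see that you need the requirement $s$ on the merged reservoir to satisfy $s\ge2$, so that $b=s-1\ge1$. Your justification, that ``a merge always adds a cell, so its output requirement is at least the $+1$ of its own effect,'' is circular. The forward fact that $b\ge1$ yields an output of at least $2$ says nothing about the backward requirement, because that requirement is dictated by the later events that touch the merged reservoir, not by the merge itself. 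If the requirement handed back to the merge were $1$, as it would be if nothing later touched the merged reservoir and it kept its terminal value, your inversion would give $b=0$. That is an empty following reservoir, and the reverse induction breaks.

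\textbf{The missing fact.} You need a structural property of the trace. Every payment is immediately followed by the seek macro's direct traversal of the merged reservoir. The inverse of that traversal, $m\mapsto 2m$ with $m\ge1$, forces $s\ge2$, hence $s-1\ge1$. This is exactly how the paper closes the step.

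\textbf{A minor point.} The contraction of the leading reservoir is not a separate direct or single-consumption event in the trace. It is part of the payment merge of Lemma~\ref{lem:traversal}, a run of unit deletions through $t_p$ that completes from every positive leading size. So you can simply assign $1$ to the leading reservoir, as the paper does. Your choice of some preimage for it is harmless.
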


\begin{proof}
Assign positive requirements to the terminal reservoirs and invert the events in reverse chronological order. Traversals and single consumptions invert exactly, by Lemma~\ref{lem:traversal}, through $m\mapsto2m+1$, $m\mapsto2m$, and $m\mapsto m+1$.

At a payment merge the required merged size $N$ is at least two, since payment is immediately followed by the seek macro's direct traversal of the merged reservoir, whose preimage is even and positive; split $N$ as $1+(N-1)$, assign $1$ to the leading reservoir, whose payment completes from every positive size, and $N-1$ to the following one, and continue on both predecessors. Reverse induction over the finite trace, itself read off the macro paths, keeps every requirement positive. Read forward, the event effects are exact, so the size after each prefix equals the requirement computed at that point of the reverse induction; hence the intermediate sizes are positive and each event finds exactly its enabling size, a straddling seam receiving one requirement at its single trace event, shared by its entry crossing and its completion.
\end{proof}

A fully worked instruction, tracing all six phases with its ledger table and reverse allocation, appears in Appendix~\ref{app:calculations}.

\subsection{Macro Correctness}

A \emph{stop} is a positive-time visit of a macro-entry state, a positive-time visit of a designated final state, or the exhaustion of the run at a stuck configuration, possibly at time zero. A \emph{schedule} is the segment of a run from a macro-entry state until its first stop; it is \emph{completed} exactly when the stop is one of the two visits. In particular, arrival at a macro-entry state that is itself stuck completes the arriving schedule; the next schedule from that configuration exhausts at time zero and is incomplete. For a macro-entry configuration, its \emph{macro-local execution} is the unique maximal execution truncated at its first stop.

A \emph{certified macro boundary} is a macro-entry configuration of a non-boot macro whose data up to the fence is a witness-shaped core with legal budget and counter fields and whose state is its macro's entry state at that macro's start position. An \emph{admissible macro boundary} is either the boot entry state on an input with a witness-shaped candidate core or a certified macro boundary. Entry conditions are static properties of the boundary configuration; that boundaries arise in their macro order along an actual run is supplied by the entry discipline of Lemma~\ref{lem:generated-table}.

\begin{lemma}[Boot certification]\label{lem:boot-cert}
If the boot schedule from a start configuration completes, then the data up to the chosen fence is a witness-shaped core with $m_i=v_i$ and the endpoint is the first certified macro boundary, or the final state when $p_{\mathrm{init}}=h$.
\end{lemma}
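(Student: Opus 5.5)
The plan is to follow the boot schedule forward from the start configuration $\lambda q_0 w\rho$ and track, step by step, what its commands have read. The tool is Lemma~\ref{lem:exhaustive}. A command fires only on an admissible key, and along the boot pass from $q_0$ each step restores the positional invariant of its successor. So the completed schedule is a sequence of admissible windows, and each of them reveals information about the candidate core. I will split the pass into two halves, the boot scan and then the boot return, and state one invariant for each.

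\emph{Scan half.} The invariant is the following. At every scan state, the data strictly between $\lambda$ and the state becomes, once its fuel is erased, a prefix of the witness grammar of Appendix~\ref{app:catalogue}. The state memory, namely the field index and the position inside $1^{v_i}$, records which prefix it is. In addition, every reservoir crossed so far had positive size. The base case is the opening key of $q_0$, which accepts fuel only, so the leading reservoir is nonempty. For the inductive step I check the accepted cases of Table~\ref{tab:compiler}. Each crossing of a logical cell matches exactly the next grammar symbol. Inside a counter block, the memory forces exactly $v_i$ ones followed only by zeros up to $\mathtt R$. A flagged state accepts only fuel, so by Lemma~\ref{lem:traversal} no two logical cells can be adjacent. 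The turn fires only at $\mathtt{RR}$, against a rear pair $ff$, and only from a state whose memory says both blocks are closed. Therefore, at the turn, the fuel-erased data up to that fence is a runtime logical core with $m_i=v_i$, its budget field is $1^b$ for some $b\ge0$, and every boundary carries fuel. In other words, the core is witness-shaped. Since a witness-shaped core contains no $\mathtt{RR}$, the fence is also the first $\mathtt{RR}$ after $\lambda$.

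\emph{Return half.} The boot return uses only mirrored \textup{(T1)} moves and the final parking \textup{(T3)}. Each such step removes fuel only, and a fuel-pair step removes exactly one fuel cell. Logical cells are therefore never rewritten. Witness-shapedness then survives for two reasons. First, the retrace crosses only cells that the scan already validated. Second, completion means the parking fires, and parking requires two visible fuel cells against $\lambda$. By the fuel-ledger remark after Lemma~\ref{lem:traversal}, every crossed reservoir keeps a survivor and the $\lambda$-adjacent reservoir stays nonempty. Placement of the parking follows from the no-adjacency of interior $\mathtt L$ markers, already used in Lemma~\ref{lem:exhaustive}: the factor $\mathtt{LL}$ that is met is $\lambda$ itself. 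Hence the hub $s_{p_{\mathrm{init}}}$, or the final state when $p_{\mathrm{init}}=h$, stands directly after $\lambda$.

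To finish, I match the endpoint against the definition of a certified macro boundary. The data up to the fence is witness-shaped with legal fields, which are $b\ge0$ and $m_i=v_i\le R_i$. The state is the hub, which is the entry state of the next macro at its start position. By Lemma~\ref{lem:generated-table}, parking is the only exit from the boot namespace, so this is the first stop and hence the first certified macro boundary.

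The step I expect to be the main obstacle is the return half. The scan certifies the core only at the moment of the turn, and the retrace still consumes fuel. I must exclude two things. One is that the retrace empties an interior reservoir, which would make two logical cells adjacent. The other is that it parks early at some interior $\mathtt{LL}$. My first approach is to carry a sharper invariant through the return: every reservoir that has been crossed keeps at least one survivor, because Lemma~\ref{lem:traversal} makes an emptied reservoir stick at a flagged state or a wrong-parity window, and a stuck run contradicts completion.
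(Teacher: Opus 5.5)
Your proposal is correct and follows essentially the same argument as the paper. Completion forces every boot window to match an expected key, which gives the witness grammar with tag prefixes of length $v_i$ and a nonempty leading reservoir via the opening key of $q_0$; the non-adjacency of interior $\mathtt L$ markers then places the parking at $\lambda$, so the exit is the initial hub or the final state. Your explicit scan/return split, with the survivor invariant for the retrace drawn from Lemma~\ref{lem:traversal} and the fuel-ledger remark, only spells out reservoir bookkeeping that the paper's one-paragraph proof leaves implicit.
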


\begin{proof}
If the boot schedule completes, every scanned window matched an expected key, so each field up to the fence agrees with the witness grammar, with used prefixes of the designated lengths $v_i$, and the leading reservoir is nonempty by the opening key of $q_0$; since interior $\mathtt L$ markers are never adjacent in such a core, the parking factor $\mathtt{LL}$ occurs only at $\lambda$, and the parking exit enters the initial hub, or the final state when $p_{\mathrm{init}}=h$; in the former case the endpoint is the first certified macro boundary.
\end{proof}

\begin{lemma}[Macro safety]\label{lem:macro-safe}
If a completed schedule starts at a certified macro boundary, then the macro's entry condition held and its fuel-erased endpoint realizes the unique successful effect in Table~\ref{tab:macros}: for an ordinary macro the endpoint encodes the prescribed logical successor configuration and is a certified macro boundary or a noncertified macro-entry configuration at which a reservoir is empty and no table command applies, and for a return targeting $h$ it is the designated final state.
\end{lemma}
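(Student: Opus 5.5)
The plan is a case analysis over the six macro kinds of Table~\ref{tab:macros}. Each case is an induction along the macro-local execution, using Lemma~\ref{lem:exhaustive} as the single-step engine and Lemma~\ref{lem:traversal} for the reservoir bookkeeping. Fix a certified macro boundary, so the data up to the fence is a witness-shaped core with legal fields and the entry state sits at its start position. By Lemma~\ref{lem:generated-table} no command internal to the macro targets an entry copy, and the only entry-state or final-state successors are the named handoffs and parkings. Hence a completed schedule visits only interior states of the macro's namespace until its last step, and that last step is one of these exits. By induction on steps, using the transport clause of Lemma~\ref{lem:exhaustive}, every intermediate state satisfies its positional invariant within the core, because each step stays inside the witness-shaped core left of the fence. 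Since the schedule completed, every window met was admissible. The proof then reads off, from the catalogue clauses listed in Table~\ref{tab:compiler}, what those admissible windows say about the data.

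For each macro I would extract the entry condition from the admissible keys that were actually matched. For Pay, reaching the token-deletion clause means a budget token existed, so the budget was nonempty. For Seek $i$, the field-class bookkeeping (counting $\mathtt L$s passed) guarantees that the update entry is reached inside block $i$. For Increment $i$, the only (T2) clause fires at the first reserve $0$, and there is no clause at $\mathtt R$, so completion forces $m_i<R_i$. For Conditional $i$, the seen-$1$ bit together with the dispatch cell separates the two cases. If the block opens with a reserve $0$ or $\mathtt R$ while the bit is unset, the used zone is empty and the exit goes towards $p_0$. Otherwise the turnaround rewrites the rightmost used $1$ to $0$ and exits towards $p_1$. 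For Return, the mirrored field grammar and the parking against rear $ff$ place the target hub (or the final state when the target is $h$) directly after $\lambda$, as argued in Lemma~\ref{lem:exhaustive}. The fuel-erased effect is then exactly the one in the table, because (T1) steps and flag clearings change only fuel, and each macro performs exactly one logical (T2) or (T3) step. That step realizes $m_i\mapsto m_i\pm1$ or the deletion of a budget token, while keeping the runtime logical core shape and the legality of the fields, in particular $0\le m_i\le R_i$ and $b\ge0$.

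It remains to classify the endpoint. Every reservoir stays nonempty during a completed traversal: by Lemma~\ref{lem:traversal}, traversals leave $m\ge1$ survivors and a payment merge leaves $b+1$. The only way a reservoir can become empty is a single consumption of a size-one reservoir at the final step, or the reverse crossing at $\lambda$, and the latter sticks before parking, as the ledger remark records. So the endpoint is either still witness-shaped, in which case it is a certified macro boundary at the next macro's start position (position invariants transported), or some reservoir is empty. In the second case I must show that no command applies at the entry state. This is the main obstacle. My first approach would use the restriction that every entry clause demands fuel in its scanned or rear pair, such as the two-fuel keys of pay and parking and the fuel-only opening of entry copies, with an emptied reservoir presenting a logical cell where fuel is required, so the window lies outside $\mathrm{Adm}$ and Lemma~\ref{lem:exhaustive} gives stuckness. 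If some entry clause scans a logical cell directly, I would fall back on checking against the catalogue that the emptied reservoir is the one adjacent to the entry position, and that the rear-pair restriction ($ff$) rejects it.
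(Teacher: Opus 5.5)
Your proposal is correct and follows the paper's own proof. Both read the logical effect off the unique guarded path, confine any emptied reservoir to the final handoff step (the payment merge and the two-fuel return parking never empty one), and show that the resulting entry window lies outside $\mathrm{Adm}$. The paper settles your ``main obstacle'' by exactly your first approach, checking the four final steps that can empty a reservoir: seek's crossing of $\mathtt L$ (the flagged update entry's lead $\{ff\}$ fails), the increment's \textup{(T2)} (the return-entry trail is $\{1f\}$, not $ff$), the zero dispatch (survivor-leading trail), and the turnaround's mirrored \textup{(T2)} (scanned pair $10$ or $\mathtt L0$, outside the lead). One correction: an interior consumption such as the positive-branch dispatch is ruled out by the turnaround's survivor-leading trail, not by Lemma~\ref{lem:traversal}, so you should justify that case separately.
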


\begin{proof}
By Lemmas~\ref{lem:generated-table} and~\ref{lem:exhaustive}, a completed schedule follows its macro's unique guarded path, so the logical fields of the endpoint are those the contract prescribes: the increment rewrites only on a reserve $0$, keeping $m_i\le R_i$, the turnaround rewrites only a used $1$, keeping $m_i\ge0$, payment deletes only a token behind a nonempty budget, and every other phase moves without rewriting. The only certification defect a completed schedule can leave is therefore an empty reservoir; an interior depletion or parity failure sticks strictly inside the macro, at a flagged state, a wrong-parity window, or a survivor-stripped rear, an incomplete schedule, so only the final handoff step can leave one empty.

We inspect the handoffs. Payment contracts the leading reservoir to one cell, and its token deletion merges that cell into the following reservoir, nonempty at a certified boundary, so the seek entry always stands at a nonempty merged reservoir. The return parking is generated only against the rear $ff$ and consumes one of the two visible cells, and a completed return has re-crossed every reservoir it traversed leaving a survivor, so its hub endpoint keeps every reservoir nonempty. These endpoints are certified.

Four final steps can empty the reservoir they consume. If seek's crossing of the opener $\mathtt L$ consumes its sole cell, the flagged update entry faces a logical cell and its lead $\{ff\}$ matches nothing. If the increment's \textup{(T2)} consumes the sole cell beyond the rewritten $0$, the return entry stands over the rear pair $10$ or $1\mathtt R$, excluded from its trail $\{1f\}$. If the zero dispatch consumes the sole cell beyond the dispatch cell, the return entry's rear pair opens with a logical cell, excluded from its survivor-leading trail. If the turnaround's mirrored \textup{(T2)} consumes the sole survivor left of the rewritten $1$, the return entry scans $10$ or $\mathtt L0$, outside its lead. In each case the endpoint is noncertified and no command applies; with a survivor left instead, every reservoir is nonempty and the endpoint is certified.

If a schedule from a certified boundary completes, the entry condition holds, by the boundary definition for the positional conditions and forced by the unique guarded path for the data conditions, and the path identifies the contract; deleting fuel leaves exactly the single logical action, so a nonfinal endpoint encodes the prescribed logical successor and, by the inspection above, is certified, or noncertified with no applicable command, while a parking exit targeting $h$ is the designated final state.
\end{proof}

A certified endpoint may itself be stuck, at an empty budget or before a single visible survivor: certification asserts shape, never progress. The boot schedule, whose start is not yet certified, is treated separately in Lemma~\ref{lem:boot-cert}: its completion certifies the core outright.

Fix a finite run of $P$. A provisioning is \emph{adequate} for it when the budget $B$ exceeds the number of instructions executed and each capacity $R_i$ exceeds every value attained by counter $i$. Fix in addition an \emph{allocation}, one assignment supplied by Lemma~\ref{lem:fuel-alloc} for the run's full event trace, boot pass included. The \emph{intended boundary} at a point of the run is the admissible macro boundary of the pending macro whose fuel-erased core is the runtime logical core with the provisioning's capacities $R_0,R_1$, the remaining budget, and the counter values at that point and whose reservoirs carry the sizes the allocation assigns there, at a flagged entry inside a straddling seam the pre-crossing size less the single crossing consumption; at the run's start it is the boot entry on the physical witness itself. Entry conditions then hold at every intended boundary: positional ones by the boundary shape; boot's $m_i=v_i$ because the start boundary is the physical witness itself; the conditional's disjunction by legality of its counter field; and pay's nonempty budget and increment's reserve $0$ by adequacy.

\begin{lemma}[Allocated realization]\label{lem:alloc-real}
Fix a finite run of $P$ with an adequate provisioning and an allocation. From the intended boundary at any point of the run, the macro-local execution is a completed schedule and its endpoint is the intended boundary at the next point of the run, or the final state at the halting exit: for an ordinary macro, allocation selects the certified branch of Lemma~\ref{lem:macro-safe}.
\end{lemma}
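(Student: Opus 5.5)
The plan is to argue macro by macro, using the exactness of the fuel ledger to show that the macro-local execution neither sticks early nor overshoots.

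\textbf{Step 1 (setup).} Fix a point of the run and its pending macro. By definition the intended boundary has:
\begin{itemize}
\item fuel-erased core equal to the runtime logical core at that point;
\item reservoir sizes equal to those the allocation of Lemma~\ref{lem:fuel-alloc} assigns at the corresponding prefix of the event trace.
\end{itemize}
The entry conditions hold there, as recorded just before the statement: positional ones by shape, boot's $m_i=v_i$, the conditional's disjunction by legality, and pay's budget and the increment's reserve $0$ by adequacy.

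\textbf{Step 2 (the expected path exists).} I induct along the macro's command graph, i.e.\ along its guarded path in Table~\ref{tab:compiler} and Appendix~\ref{app:catalogue}. At each step I claim two things:
\begin{itemize}
\item the current window is an admissible key;
\item the current state satisfies its positional invariant.
\end{itemize}
Given the claim, Lemma~\ref{lem:exhaustive} supplies exactly one command and transports the positional invariant to the successor.

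The claim splits into two components.
\begin{itemize}
\item \emph{Scanned pair.} The logical component of the scanned pair is correct because the fuel-erased core is the intended one. The branch taken at the conditional's dispatch is forced by the counter value, so the path followed is the one the run of $P$ prescribes.
\item \emph{Fuel.} The fuel component is correct because the sizes current at each event are exactly the enabling sizes of Lemma~\ref{lem:traversal}. Lemma~\ref{lem:fuel-alloc} guarantees that every event, including the completion of a straddling seam, is enabled from the sizes current at its own point.
\end{itemize}
Rear pairs need separate care for the restricted keys (boundary deletions and parkings retained only against $ff$). Here I use that the allocation leaves every prescribed survivor nonempty, so that the rear pair is $ff$ where required. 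By the fuel remark after Lemma~\ref{lem:traversal}, the $\lambda$-adjacent reservoir retains a cell at parking.

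\textbf{Step 3 (the endpoint is the intended boundary).} Because the path is finite and consists of the macro's internal states, and no internal command targets an entry copy (Lemma~\ref{lem:generated-table}), the execution first stops exactly at the handoff into the next macro-entry state or final state. It is therefore a completed schedule.

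For the boot macro, Lemma~\ref{lem:boot-cert} identifies the endpoint as the first certified boundary. For an ordinary macro, Lemma~\ref{lem:macro-safe} gives two possibilities: a certified boundary encoding the prescribed logical successor, or a noncertified one with an empty reservoir. The allocation keeps every survivor positive after the handoff event, since it assigns positive sizes after every prefix. This excludes the four empty-reservoir handoffs of that proof, so we are in the certified branch.

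Finally, the endpoint's fuel sizes are those obtained by applying the exact event effects, which by Lemma~\ref{lem:fuel-alloc} are the allocation's sizes at the next point. At a straddling seam, the flagged entry carries the pre-crossing size less one, matching the convention in the definition of intended boundary. For a return targeting $h$, the endpoint is the final state.

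\textbf{Main obstacle.} The delicate part is Step 2's bookkeeping at macro handoffs, where the schedule boundary and the ledger event do not coincide: a seam traversal straddling two macros, and the payment merge. The first thing I would do is check that the intended-boundary convention at a flagged entry agrees exactly with the state after the single crossing consumption in Lemma~\ref{lem:traversal}. I would also check that the merged reservoir size $N\ge2$ from the allocation proof is what seek's direct traversal needs.
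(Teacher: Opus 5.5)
Your proposal is correct and takes essentially the paper's route. The paper likewise combines the entry conditions at the intended boundary with Lemma~\ref{lem:exhaustive} and the exact sizes supplied by Lemma~\ref{lem:fuel-alloc}, so that every window of the pass carries a command, and then uses Lemma~\ref{lem:generated-table} to identify that pass with the macro-local execution; it reads off the endpoint's core (unchanged at boot, by Lemma~\ref{lem:macro-safe} otherwise) and its reservoir sizes from the event effects, including the straddling-seam convention, and concludes that every reservoir is nonempty, which selects the certified branch. Your extra care with the parking rear pair and the flagged entry inside a straddling seam only spells out details that the paper compresses into a single sentence.
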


\begin{proof}
The run's macro pass at that point is an execution from the intended boundary: its entry condition holds as noted above, Lemma~\ref{lem:exhaustive} gives the case analysis over the certified field grammar, or over the physical witness at boot, and the reverse induction of Lemma~\ref{lem:fuel-alloc} hands every traversal a reservoir of exactly the size and parity its event requires while keeping every intermediate size positive, so every window along the pass carries a command and the pass ends at the next macro entry or at the final state. By Lemma~\ref{lem:generated-table} each configuration has at most one successor, so the macro-local execution is exactly this pass and is a completed schedule. Its endpoint's fuel-erased core carries the run's data at the next point, unchanged at boot and prescribed by Lemma~\ref{lem:macro-safe} for an ordinary macro, and its reservoirs carry the allocation's sizes at the next point, fully traversed ones by the event effects, untouched ones unchanged, and the one straddled at a flagged entry the pre-crossing size less the single crossing consumption; hence the endpoint is the intended boundary there, every reservoir is nonempty, and allocation selects the certified branch of Lemma~\ref{lem:macro-safe}.
\end{proof}

\begin{lemma}[Run decomposition]\label{lem:run-decomp}
From any macro-entry configuration, the macro-local execution is a schedule and the first-stop truncation of the unique maximal execution, every finite execution is a prefix of that maximal execution, and no final state occurs strictly inside the truncation. In a computation beyond the boot pass, partitioned at its hub visits and its final-state endpoint, each hub-to-hub or hub-to-final interval factors into the schedules of its pay, seek, update, and return macros in order.
\end{lemma}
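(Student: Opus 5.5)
The first three claims rest only on determinism of the table, Lemma~\ref{lem:generated-table}, and I will derive them before turning to the factorisation.

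\textbf{Single computation and truncation.} Every configuration has at most one successor, since there is at most one table entry per window and a configuration contains exactly one state symbol. Hence from any configuration there is a unique maximal execution, finite because each step shortens the configuration, and every finite execution is a prefix of it. I prove this by induction on length, comparing the two executions step by step.

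Starting from a macro-entry configuration, I truncate the maximal execution at its first stop. A stop exists: the execution is finite, so if no positive-time visit of a macro-entry or final state occurs, it ends at a stuck configuration, which is an exhaustion stop. The truncated segment is therefore a schedule, and by uniqueness it is exactly the macro-local execution.

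\textbf{No final state strictly inside.} A positive-time visit of the final state is itself a stop, so it can occur only at the truncation point. Moreover, the final state has no outgoing rule, so the execution cannot continue past it in any case.

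\textbf{Factorisation past boot.} Consider a computation beyond the boot pass, cut at its hub visits and at its final endpoint. I argue through the entry discipline of Lemma~\ref{lem:generated-table}. Every hub is a macro-entry state, namely the pay entry at control $p$.

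From a hub, I take the maximal execution and read off its successive stops. Each nonexhausted stop is a macro-entry state, and by the lemma such states occur as successors only at the named handoffs, which are
\[
\text{pay}\to c_p,\qquad c_p\to\text{seek}\to\text{update entry},\qquad \text{update}\to\text{return entry},\qquad \text{return}\to\text{hub or final state}.
\]
Because macro namespaces are disjoint and no internal command targets an entry copy, the first stop after a hub must be the seek entry $c_p$. The next stop must be an update entry, then a return entry, and then a hub or the final state. An earlier exhaustion would end the computation, so that interval would simply be a final partial interval. I will treat partial intervals as prefixes; the statement concerns intervals delimited by hub or final visits.

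This gives the four schedules in order. The one remaining check is that no hub is visited strictly inside the first three schedules, which would cut the interval elsewhere. Hubs are exit states of return phases only, so a hub cannot be visited before the return macro's exit.

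The boot pass is excluded by hypothesis. Its only exit is the parking step, so the first hub or final visit marks its end.

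\textbf{Main obstacle.} The delicate point is the combination of entry copies and flagged copies. A straddling seam can enter a flagged entry copy, for example the update entry reached by seek's crossing of $\mathtt L$. I must confirm that such a visit counts as the stop, and that the flag-clearing step that follows belongs to the next schedule rather than being misattributed. I would handle this by observing that entry copies are distinct states from interior copies, so the positive-time visit of the entry copy is unambiguous. The flag-clearing command leaves the entry copy into the interior state, which is consistent with Lemma~\ref{lem:generated-table}: no internal command targets an entry copy, while commands issued from entry copies are permitted.
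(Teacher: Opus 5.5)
Your proposal is correct and follows essentially the paper's own argument: determinism and strict shrinking (Lemma~\ref{lem:generated-table}) give the unique finite maximal execution, its first-stop truncation, and the exclusion of final states before that stop, and the entry discipline (disjoint namespaces, each macro's exit being the next macro's entry) gives the factoring into pay, seek, update, and return schedules. Your extra remarks on partial intervals ending in exhaustion and on flagged entry copies spell out details that the paper's shorter proof leaves implicit.
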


\begin{proof}
By Lemma~\ref{lem:generated-table}, from any macro-entry configuration there is exactly one maximal execution and every finite execution is a prefix of it. Every table step strictly shortens the configuration, so the maximal execution is finite and ends at a stuck configuration; hence a first stop exists, possibly at time zero, the truncation there is the unique schedule from that configuration, a positive-time visit of the final state is itself a stop, and by the same lemma the final state occurs only as a parking exit targeting $h$. For the factoring, namespaces are disjoint and each macro's exit is the next macro's entry, so each interval splits at its macro-entry visits into the schedules of its pay, seek, update, and return macros in order.
\end{proof}

\subsection{Completeness and Soundness}

\begin{lemma}[Completeness]\label{lem:source-complete}
If $P$ reaches $h$ from its initial configuration, then $V_P$ accepts some input.
\end{lemma}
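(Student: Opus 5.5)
We fix the halting run of $P$ from $(p_{\mathrm{init}},v_0,v_1)$ to $h$. It is finite, say with $n$ executed instructions. We choose an adequate provisioning: budget $B=n+1$ and capacities $R_i$ exceeding every value counter $i$ attains along the run (one more than the maximum suffices). The run, expanded into macros (boot, then pay, seek, update, return per instruction, the last return parking in the final state), yields a finite chronological event trace. Lemma~\ref{lem:fuel-alloc} supplies an effectively computable allocation of positive initial reservoir sizes for this trace. The input $w$ is the physical witness built from the logical form with these $B,R_0,R_1$ and the allocated reservoir sizes. It is a word over $\Sigma=D$, and the start configuration $\lambda q_0 w\rho$ is the boot entry on the physical witness, which is the intended boundary at the run's start.

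We then argue by induction along the points of the run, macro by macro. At each point the current configuration is the intended boundary. For the base case, this holds by construction at the boot entry. For the step, Lemma~\ref{lem:alloc-real} says the macro-local execution from the intended boundary is a completed schedule whose endpoint is the intended boundary at the next point, or the final state at the halting exit. For the boot step we also cite Lemma~\ref{lem:boot-cert}, to see that its endpoint is the initial hub (or the final state directly if $p_{\mathrm{init}}=h$, when the run has zero instructions). Since the machine is deterministic (Lemma~\ref{lem:generated-table}), and by Lemma~\ref{lem:run-decomp} every finite execution is a prefix of the unique maximal one, concatenating these schedules gives a prefix of the unique table computation on $w$. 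After the final return it reaches the designated final state, so $w\in L(V_P)$.

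The main point needing care is that adequacy really delivers the entry conditions at every intended boundary, as the paragraph before Lemma~\ref{lem:alloc-real} asserts. Pay needs a nonempty budget at each of the $n$ instructions, which $B>n$ ensures. Each increment needs a reserve $0$, which holds because the value after the increment stays below $R_i$. The conditional's disjunction holds by legality of the counter field. The remaining concern is the fuel: the allocation must cover the boot pass and the straddling seams. Lemma~\ref{lem:fuel-alloc} handles this explicitly, including the boot events in the trace, so the allocation exists. With that, Lemma~\ref{lem:alloc-real} applies at every point and completeness follows.
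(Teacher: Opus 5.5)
Your proposal is correct and follows the paper's proof essentially step for step: an adequate provisioning and a Lemma~\ref{lem:fuel-alloc} allocation for the full event trace, boot pass included, give a physical witness that is the intended boundary at the start, and Lemma~\ref{lem:alloc-real}, together with Lemma~\ref{lem:boot-cert} at the boot exit, chains completed schedules through the intended boundaries into an accepting computation. Your extra check that adequacy secures the pay and increment entry conditions only spells out the remark the paper makes just before Lemma~\ref{lem:alloc-real}.
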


\begin{proof}
Take the finite halting run. Choose $B$ larger than its number of instructions, choose each $R_i$ larger than the maximum value attained by counter $i$, fix an allocation supplied by Lemma~\ref{lem:fuel-alloc} for the run's full event trace, including the boot pass, and take the physical witness with its reservoir sizes, so that the fence is the genuine padding $\rho$ and there is no suffix.

The witness is the intended boundary at the run's start and its provisioning is adequate, so Lemma~\ref{lem:alloc-real} completes the boot pass, exiting by Lemma~\ref{lem:boot-cert} into the final state when $p_{\mathrm{init}}=h$ or at the intended boundary of the first macro otherwise, and then, applied at every point in turn, chains completed schedules through the intended boundaries of the run into an accepting computation.
\end{proof}

Every computation that has passed the boot pass satisfies, between consecutive visits to the program-control hub and within the certified core:
\begin{enumerate}
  \item[(I1)] after deleting the fuel symbols and omitting the unique state symbol, the data strictly between $\lambda$ and the fence is $1^{b}\,\allowbreak\mathtt L\,1^{m_0}0^{R_0-m_0}\,\mathtt R\,\allowbreak\mathtt L\,1^{m_1}0^{R_1-m_1}\,\mathtt R$ with $0\le b\le B$ and $0\le m_i\le R_i$;
  \item[(I2)] the state belongs to a phase of the single instruction selected at the last hub visit, positioned inside the region that phase may traverse.
\end{enumerate}
Every microstep deletes fuel, moves the state as its phase prescribes, or performs that phase's single rewrite or deletion, so (I1)--(I2) are preserved by induction on steps.

Let $\pi$ be the projection that discards $\lambda$, the fence, and everything beyond it, deletes every fuel symbol, reads off the budget and counter fields, and replaces the hub state by its control, the final state by $h$, and a phase state by the control of the interval's opening hub visit, which (I2) records as selecting its instruction.

\begin{lemma}[Soundness]\label{lem:source-sound}
If $V_P$ accepts an input, then $P$ reaches $h$ from its initial configuration.
\end{lemma}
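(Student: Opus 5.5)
Fix an input $w$ accepted by $V_P$ and let $\gamma_0\cstep\cdots$ denote its unique table computation, ending in the final state. The plan is to project this computation through $\pi$ onto a run of $P$ from $(p_{\mathrm{init}},v_0,v_1)$ that ends at $h$.

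\emph{The boot pass.} The final state occurs only as a parking exit (Lemma~\ref{lem:generated-table}), and the only exit from the boot namespace is its parking step. Hence the boot schedule from the start configuration completes. By Lemma~\ref{lem:boot-cert}, the data up to the first $\mathtt{RR}$ after $\lambda$ is then a witness-shaped core with $m_i=v_i$, and the endpoint is one of two things. If $p_{\mathrm{init}}=h$, the endpoint is the final state, and the trivial run already reaches $h$. Otherwise the endpoint is the first certified macro boundary, at the initial hub, and its $\pi$-image is exactly the designated initial configuration. From then on, every later macro operates strictly left of the fence, so the suffix beyond it is inert and $\pi$ is well defined.

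\emph{The remainder of the computation.} By Lemma~\ref{lem:run-decomp}, the rest of the computation factors at hub visits and at its final endpoint into intervals. Each interval consists of the pay, seek, update and return schedules of one instruction, in that order. Every schedule except possibly a last one is completed, because a stop that is not a visit would exhaust the run at a nonfinal stuck configuration, contradicting acceptance.

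\emph{Induction over intervals.} The invariant is that each hub visit is a certified macro boundary whose $\pi$-image is a reachable configuration of $P$. Given such a hub visit at control $p$, Lemma~\ref{lem:macro-safe} applies to each completed schedule in turn and gives the following.
\begin{itemize}
  \item Pay deletes one budget token, so the budget was nonempty.
  \item Seek only moves the control to block $i$.
  \item The update realizes the instruction of $p$. An increment needs a reserve $0$ and then sets $m_i\mapsto m_i+1$ with branch $p'$. A conditional dispatches on the used edge, taking $p_0$ exactly when the used zone is empty and otherwise decrementing and taking $p_1$.
  \item The return re-enters hub $s_{p'}$, or the final state when the target is $h$.
\end{itemize}

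\emph{Where the main care is needed.} The delicate point is the alternative in Lemma~\ref{lem:macro-safe}. A completed ordinary schedule may end at a noncertified macro-entry configuration with an empty reservoir and no applicable command. Such an endpoint is stuck at a nonfinal state, which contradicts acceptance. So along an accepting computation every intermediate endpoint is certified, and the hypothesis propagates to the next macro.

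\emph{Conclusion.} Invariants (I1)--(I2) ensure that $\pi$ reads off the logical fields consistently within each interval. The fuel-erased effect is exactly one step of $P$ from $\pi$ of the opening hub. This matches the semantics of $P$: in particular, the zero test is faithful because the seen-$1$ bit records whether a used $1$ was crossed, and the legal field shape $1^{m_i}0^{R_i-m_i}$ forces that bit to agree with $m_i>0$. By induction on the number of intervals, the successive hub images form a run of $P$ from the initial configuration. The final interval's return targets $h$, since that is the only way to reach the final state, so $P$ reaches $h$.
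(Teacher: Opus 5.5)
Your proposal is correct and follows the paper's proof essentially step for step. It uses the same ingredients: boot completion through the boot namespace's unique parking exit together with Lemma~\ref{lem:boot-cert}, the factoring of Lemma~\ref{lem:run-decomp}, an induction in which acceptance excludes both exhaustion stops and the noncertified stuck alternative of Lemma~\ref{lem:macro-safe}, and projection under $\pi$, with the final state reachable only by a parking exit targeting $h$ (your hedge ``except possibly a last one'' is unnecessary, since the last schedule stops at a visit of the final state and is therefore completed as well).
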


\begin{proof}
An accepting computation must complete the deterministic boot pass, since the start state lies in the boot namespace, whose only exit is the boot parking step. By Lemma~\ref{lem:boot-cert}, completing the boot pass certifies a physical-witness core up to the fence, so the invariants and the parking property apply within the core, the suffix beyond the fence never contains the state and is inert, and the boot endpoint is the final state when $p_{\mathrm{init}}=h$ or the first certified macro boundary otherwise.

Partition the remaining computation at its hub visits and its final-state endpoint. Invariant (I1) holds at every hub visit, and (I1)--(I2) hold between visits. By Lemma~\ref{lem:run-decomp} each hub-to-hub or hub-to-final interval factors into the schedules of its pay, seek, update, and return macros in order.

We argue by induction along the computation that each boundary so reached is certified and each schedule completed. The first boundary is certified. From a certified boundary the schedule must complete, since an exhaustion stop is a stuck nonfinal configuration, which rejects, against acceptance. By Lemma~\ref{lem:macro-safe} its nonfinal endpoint is a certified macro boundary or a noncertified macro-entry configuration with no applicable command; the latter is stuck nonfinal and rejects likewise, so the endpoint is certified and the induction proceeds to the final parking.

Lemma~\ref{lem:macro-safe}, applied at each boundary in turn, projects each completed schedule under $\pi$, and their composition is exactly one legal instruction of $P$ at the current control. Invariant (I1) keeps each block a legal counter value, and the update macro's exits enforce the intended update and branch. Budget deletion is semantically inert, and termination is forced by the shrinking format.

The projected run starts at the designated initial configuration validated by the boot pass, and the final state arises only as the parking exit of a return targeting $h$, or of boot when the initial control is $h$, so $P$ reaches $h$.
\end{proof}

The construction uses the five data symbols $\{0,1,f,\mathtt L,\mathtt R\}$, radius $k=2$, and the padding $\lambda=\mathtt{LL}$, $\rho=\mathtt{RR}$ independently of $P$. Program controls, macro phases, directions, and boundary tags are stored in states. Taking $\Sigma=D$ permits every data word as a machine input, and an accepted word carries a certified core simulating a genuine halting run.

The start state is the boot entry state, which is nonfinal. Enumerating the finite instruction set and local tag cases constructs the state set and table effectively from $P$ and its initial configuration. Lemmas~\ref{lem:source-complete} and~\ref{lem:source-sound} give $L(V_P)\ne\varnothing$ exactly when $P$ reaches $h$, completing the proof of Lemma~\ref{lem:fuel-compile}.

\section{From Nonemptiness to Weighted Confluence}\label{sec:source}
The binary encoding developed later applies to any finite nonerasing cycle system over a nonempty alphabet with a positive additive weight certificate. This section constructs a uniform family of such systems whose confluence encodes the complement of an arbitrary computably enumerable set. The global correctness argument is completed in Section~\ref{sec:normalization}.

\begin{theorem}[Uniform shrinking verifier]\label{thm:verifier-source}
There are a fixed five-symbol alphabet $D=\Sigma$, a fixed radius $k=2$, fixed padding words $\lambda,\rho\in D^2$, and a total computable map $(e,x)\mapsto M_{e,x}$ to finite deterministic shrinking local machines such that the start state is nonfinal and
\[
  L(M_{e,x})\ne\varnothing
  \quad\Longleftrightarrow\quad
  x\in W_e.
\]
Only the finite state set and local table depend on $(e,x)$.
\end{theorem}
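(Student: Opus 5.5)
The plan is to compose Lemma~\ref{lem:fuel-compile} with the classical effective simulation of Turing machines by two-counter programs. First, I fix a universal partial computable function and, via the s-m-n theorem, a total computable map $(e,x)\mapsto T_{e,x}$ to Turing machines on empty input such that $T_{e,x}$ halts exactly when $x\in W_e$. Concretely, $T_{e,x}$ runs the enumeration of $W_e$ and halts when $x$ appears.

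Second, I invoke Minsky's construction. It converts any Turing machine, uniformly and effectively, into a deterministic two-counter program $P_{e,x}$ with a designated initial configuration $(p_{\mathrm{init}},v_0,v_1)$ and a halting control $h$. The key property is that $P_{e,x}$ reaches $h$ exactly when $T_{e,x}$ halts. Here I encode the blank tape in the initial valuation, say $v_0=2^{c}$ and $v_1=0$ for a fixed encoding constant $c$, or simply $v_0=v_1=0$ after prefixing initialization instructions. The instructions of the standard construction are increments and zero-test-with-decrement, exactly the instruction format of the counter programs recalled in Section~\ref{sec:verifier}. If $P_{e,x}$ has several halting controls, I redirect them all to a single $h$ with no outgoing instruction. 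If some non-halting control lacks an instruction, I give it a self-loop pair (increment then conditional back), so it never reaches $h$. This normalization preserves reachability of $h$ and is computable.

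Third, I set $M_{e,x}=V_{P_{e,x}}$ from Lemma~\ref{lem:fuel-compile}. That lemma supplies exactly the fixed data $D=\Sigma=\{0,1,f,\mathtt L,\mathtt R\}$, $k=2$, $\lambda=\mathtt{LL}$ and $\rho=\mathtt{RR}$. It also gives a nonfinal start state, with only the state set and table depending on the program and its initial configuration. The construction is effective, so the composite map $(e,x)\mapsto M_{e,x}$ is total computable. Chaining the equivalences then gives $L(M_{e,x})\neq\varnothing \iff P_{e,x}$ reaches $h \iff T_{e,x}$ halts $\iff x\in W_e$.

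I expect no real obstacle, since the heavy lifting is in Lemma~\ref{lem:fuel-compile}. The only point needing care is that the Minsky reduction is genuinely uniform and total in $(e,x)$, including the case $x\notin W_e$, where the program must still be a well-formed finite program. It must also match the paper's exact instruction format, in particular that every non-halting control carries exactly one instruction. I would verify this by writing the reduction as a fixed compiler applied to $T_{e,x}$, followed by the syntactic normalization above.
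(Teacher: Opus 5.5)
Your proposal is correct and follows the paper's proof. You obtain $P_{e,x}$ effectively via Minsky's construction, normalize it so that $h$ is the unique control without an instruction, and set $M_{e,x}=V_{P_{e,x}}$ from Lemma~\ref{lem:fuel-compile}. The paper normalizes with the single loop $p:\ C_0:=C_0+1;\ \mathrm{goto}\ p$ where you use an increment--decrement pair; either preserves reachability of $h$. Your explicit s-m-n step producing $T_{e,x}$ and your merging of several halting controls are details the paper folds into its citation of Minsky.
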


\begin{proof}
By Minsky's construction \cite{Minsky1967}, effectively from $(e,x)$ one obtains a finite deterministic two-counter program $P_{e,x}$ with a designated initial configuration and a halting control $h$ such that $P_{e,x}$ reaches $h$ exactly when $x\in W_e$. If this construction leaves a nonhalting control $p$ without an instruction, complete it effectively with $p:\ C_0:=C_0+1;\ \mathrm{goto}\ p$; a computation formerly stuck at $p$ instead runs forever, so whether $h$ is reached is unchanged. Lemma~\ref{lem:fuel-compile} in Section~\ref{sec:verifier} compiles $P_{e,x}$ into a deterministic shrinking local machine $M_{e,x}=V_{P_{e,x}}$ with the stated fixed alphabet, radius, padding, and nonfinal start state, and with $L(M_{e,x})\ne\varnothing$ exactly when $P_{e,x}$ reaches $h$. Both steps are effective, and only the state set and table depend on $(e,x)$.
\end{proof}

Fix a machine $M=(D,\Sigma,Q,q_0,F,k,\lambda,\rho,\mathcal T)$. Introduce a fresh separator $\#$, a raw copy $\widehat\Sigma=\{\widehat a:a\in\Sigma\}$ with $\Dbar=D\mathbin{\dot\cup}\widehat\Sigma$, and two copied state sets $Q^0,Q^1$. The controls are $\Ctrl=\{X,d_0,d_1,A_0,A_1,N,E\}\mathbin{\dot\cup}Q^0\mathbin{\dot\cup}Q^1$, and the declared alphabet is $\Gamma_M=\{\#\}\mathbin{\dot\cup}\Dbar\mathbin{\dot\cup}\Ctrl$. For $w=a_1\cdots a_n\in\Sigma^*$ put $\widehat w=\widehat{a_1}\cdots\widehat{a_n}$. The intended witness is $\cyc{\#\widehat wX}$. The factor $X\#$ across its displayed end supports the only deliberate branch.

All ranges in Table~\ref{tab:wrapper-rules} are finite. In particular, the boundary rules \textup{(R2)} and \textup{(R3)} are added only for nonfinal copied states; final states use \textup{(A)} and have no competing boundary rejection.

\begin{table}[!ht]
\caption{Rule families of the wrapper $R_M$.}
\label{tab:wrapper-rules}
\centering
\small
\setlength{\tabcolsep}{3.5pt}
\renewcommand{\arraystretch}{1.08}
\begin{tabularx}{\textwidth}{@{}l >{\raggedright\arraybackslash}p{0.31\textwidth} >{\raggedright\arraybackslash}X@{}}
\hline
Name & Rule & Range and condition \\
\hline
$(\mathrm L_b)$ & $X\#\to d_b\rho\#$ & $b\in\{0,1\}$ \\
$(\mathrm{LX})$ & $Xd\to N$ & $d\in\Dbar$ \\
$(\mathrm{D1})$ & $\widehat a\,d_b\to d_b\,a$ & $a\in\Sigma$, $b\in\{0,1\}$ \\
$(\mathrm{D2})$ & $\#d_b\to\#\lambda q_0^b$ & $b\in\{0,1\}$ \\
$(\mathrm{D3})$ & $dd_b\to N$ & $d\in D$, $b\in\{0,1\}$ \\
$(\mathrm M)$ & $\alpha q^b\beta\to u\,(q')^b\,v$ & $\alpha q\beta\to uq'v\in\mathcal T$, $b\in\{0,1\}$ \\
$(\mathrm A)$ & $q^b\to A_b$ & $q\in F$, $b\in\{0,1\}$ \\
$(\mathrm{R1})$ & $\alpha q^b\beta\to N$ & $q\in Q\setminus F$, $b\in\{0,1\}$, $\alpha,\beta\in\Dbar^k$, and no table entry has left side $\alpha q\beta$ \\
$(\mathrm{R2})$--$(\mathrm{R3})$ & $\#\,u\,q^b\to\#N$, $q^b\,v\,\#\to N\#$ & $q\in Q\setminus F$, $b\in\{0,1\}$, $u,v\in\Dbar^{<k}$ \\
$(\mathrm S)$ & $dS\to S$, $Sd\to S$ & $S\in\{A_0,A_1,N,E\}$, $d\in\Dbar$ \\
$(\mathrm C)$ & $czc'\to E$ & $c,c'\in\Ctrl$, $z\in\Dbar^{<k+1}$ \\
$(\mathrm{CE})$ & $cE\to E$, $Ec\to E$ & $c\in\Ctrl$ \\
\hline
\end{tabularx}
\end{table}

No rule creates or deletes a separator, no rule transfers a symbol across a separator, and every right-hand side is nonempty. A cycle containing a separator decomposes as $\#s_1\#s_2\cdots\#s_j$ with the $s_i$ its \emph{separator-delimited segments}; in segment notation such as $\#sX\#$ the displayed separators are consecutive occurrences on the cycle, coinciding when the cycle has exactly one separator. Since all ranges are finite, $R_M$ is finite and computable from $M$.

On the intended witness the two branches unfold, for $b\in\{0,1\}$, as
\[
  \cyc{\#\widehat wX}
  \cstep
  \cyc{\#\widehat w\,d_b\rho}
  \cstep^{*}
  \cyc{\#d_b w\rho}
  \cstep
  \cyc{\#\lambda q_0^{b} w\rho}
  \cstep^{*}
  \begin{cases}
    \cyc{\#A_b} & \text{if } w\in L(M),\\[2pt]
    \cyc{\#N}   & \text{if } w\notin L(M).
  \end{cases}
\]
The launch chooses a branch, the decoder converts the raw input, the two state copies simulate $M$ deterministically, and only the acceptance outcome remembers the branch. Lemma~\ref{lem:W-launch} below makes this exact. Its rejection analysis also covers malformed launch segments.

\begin{lemma}[Exact launch behaviour]\label{lem:W-launch}
For a one-control segment $\#sX\#$ and a chosen branch $b$:
\begin{enumerate}
  \item if $s\notin\widehat\Sigma^*$, every maximal derivation in the branch ends in $\#N\#$;
  \item if $s=\widehat w$ with $w\in\Sigma^*$, the branch reaches $\#\lambda q_0^{b} w\rho\#$, simulates the unique computation of $M$ on $w$, and every maximal derivation then ends in $\#A_b\#$ if $w\in L(M)$ and in $\#N\#$ otherwise.
\end{enumerate}
\end{lemma}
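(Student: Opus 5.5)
The plan is to follow the branch step by step. The statement concerns a single segment $\#sX\#$ with exactly one control symbol $X$, after the launch rule $(\mathrm L_b)$ has fired. Because separators are never created, deleted, or crossed, every rule acts inside this segment. So I will treat the segment as a string whose outer separators are fixed context.

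After launch the segment is $\#s\,d_b\rho\#$, with one control symbol $d_b$. As long as the segment keeps exactly one control symbol, the collision rules (C) and (CE) cannot fire, since they need two controls within distance $k+1$ or an $E$ next to a control. The walk is then a sequence of one-control configurations, and I will track the unique control.

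\textbf{Decoding phase.} The only rules touching $d_b$ are (D1), (D2) and (D3), and each one is selected by the single symbol to the left of $d_b$. So the decoding phase is deterministic.

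Write $s=s'\widehat w$, where $\widehat w$ is the longest suffix of $s$ lying in $\widehat\Sigma^*$. Rule (D1) moves $d_b$ leftward across $\widehat w$, converting it to $w$. Then exactly one case occurs:
\begin{itemize}
\item if $s'=\eps$, the symbol to the left is $\#$ and (D2) produces $\#\lambda q_0^b w\rho\#$;
\item otherwise the last symbol of $s'$ is some $d\in D$, and (D3) produces $N$.
\end{itemize}
A control symbol cannot occur in $s$, because the segment has only the one control. Hence every $s\notin\widehat\Sigma^*$ reaches a configuration with the single control $N$. Rule (S) then absorbs every data symbol, and the only remaining normal form is $\#N\#$. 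I must also check that no (M), (R) or (A) rule applies during decoding. This holds because copied states are absent until (D2) fires. It settles item~1 and the first half of item~2.

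\textbf{Simulation phase.} At a one-control configuration $\#u\,q^b\,v\#$ with $u,v\in D^*$, I will show that exactly one rule applies, or that a unique absorbing route follows. The cases are these:
\begin{itemize}
\item if $q\in F$, only (A) applies;
\item if $|u|,|v|\ge k$, the window $\alpha q\beta$ matches (M) when a table entry exists and (R1) otherwise, and these cases are disjoint by construction;
\item if $|u|<k$ or $|v|<k$, a table rule cannot fit under the no-winding convention, since separators are not in $\Dbar$, so only (R2)/(R3) apply.
\end{itemize}
Both (R2) and (R3) may be applicable at the same time. Both lead to $N$, and (S) then reduces to $\#N\#$, so every maximal derivation still ends at $\#N\#$.

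Since (M) copies the table exactly, the configurations are $\lambda\,\cdot$ the machine configuration $\cdot\,\rho$, read along the unique computation of $M$ on $\lambda q_0 w\rho$. The computation ends in one of three ways:
\begin{itemize}
\item at a final state, where (A) and then (S) give $\#A_b\#$;
\item at a stuck nonfinal window, which triggers (R1) or (R2)/(R3) and leads to $\#N\#$;
\item at a boundary.
\end{itemize}

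\textbf{Main obstacle.} The delicate step is showing that no unintended rule applies in the middle of a configuration, and in particular that (R1) never competes with (M). The data range here is $\Dbar$ while the configurations contain only $D$ symbols, so a mismatch between the ranges could create overlaps. I will handle it by a disjointness check on exact windows.

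The other point to verify is that an $\alpha$ or $\beta$ window can never span the separator. This needs $\#\notin\Dbar$, so separators block every window.
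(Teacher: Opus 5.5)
Your proposal is correct and follows the paper's proof essentially step for step. The shared steps are: launch; deterministic \textup{(D1)} decoding of the maximal raw suffix, ending in \textup{(D2)} or \textup{(D3)}; an exact correspondence of \textup{(M)}, \textup{(R1)}, \textup{(R2)}--\textup{(R3)} and \textup{(A)} with the source machine's step, stuck and accepting cases, where the two competing boundary rejections both sweep to $\#N\#$; and commuting sweeps to $\#N\#$ or $\#A_b\#$.

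Two small points of wording, neither a gap. After \textup{(D2)} the segment content is the machine configuration itself with its state tagged by $b$, initially $\lambda q_0^b w\rho$, not a configuration with $\lambda$ and $\rho$ wrapped around it. Your third ending, ``at a boundary'', is already covered by your \textup{(R2)}/\textup{(R3)} case.
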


\begin{proof}
Launch gives $\#s d_b\rho\#$. Rule \textup{(D1)} moves $d_b$ left through the maximal raw suffix of $s$, converting it into genuine input data. If $s$ is not entirely raw, the next predecessor is genuine data and \textup{(D3)} creates $N$. If $s=\widehat w$, decoding ends with \textup{(D2)} and reaches the segment $\#\lambda q_0^{b}w\rho\#$. Thereafter the wrapper follows the source computation configuration by configuration in branch $b$. For a copied nonfinal state with a full $k$-by-$k$ data window, a listed source window has its unique machine rule and no default, while an unlisted window has its \textup{(R1)} default and no machine rule. If no full window exists because the nonfinal state lies within $k-1$ data symbols of a separator, \textup{(R2)} or \textup{(R3)} rejects exactly that source configuration as stuck; when both apply, each result sweeps to $\#N\#$. A copied final state instead enables only \textup{(A)}, independently of its distance from a separator. Thus wrapper rejection agrees exactly with source-machine rejection. Once $N$ or $A_b$ is created in the one-control segment, only the sweep rules \textup{(S)} apply; each step deletes one data symbol, any left and right sweep peak commutes in one step, and termination yields the unique normal form $\#N\#$ or $\#A_b\#$.
\end{proof}

\begin{lemma}[Explicit weight decrease]\label{lem:wrapper-weight}
The wrapper $R_M$ has an effectively computable positive additive weight certificate.
\end{lemma}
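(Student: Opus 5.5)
We construct the weight $\omega:\Gamma_M\to\N_{>0}$ explicitly and check each rule family of Table~\ref{tab:wrapper-rules} by one inequality. We put $\omega(\#)=1$ and $\omega(d)=1$ for every $d\in D$, since machine rules (M) already shrink the data count: the left side has $2k$ data symbols and the right side at most $2k-1$. All copies of a source state share one weight, $\omega(q^b)=c$ for every $q\in Q$ and $b\in\{0,1\}$; then (M) decreases weight by at least one. The sink controls get weight one, $\omega(N)=\omega(E)=\omega(A_0)=\omega(A_1)=1$, so that (S), (CE), (LX) and the rejection rules (R1)--(R3) decrease weight, provided every symbol on their left sides that is dropped has positive weight. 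For (A), $q^b\to A_b$, we need $c>1$, so we set $c=2$.

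The decoder and launch force the remaining choices. For (D1), $\widehat a\,d_b\to d_b\,a$, we need $\omega(\widehat a)>\omega(a)=1$, so we put $\omega(\widehat a)=2$ for every raw letter. (D3) and (C) are then immediate, because each left side contains at least two symbols of weight $\ge1$ plus a control and collapses to a single weight-one symbol. (D2), $\#d_b\to\#\lambda q_0^b$, needs $\omega(d_b)>|\lambda|+c=k+2$. For $k=2$ this is $4$, so we put $\omega(d_b)=5$. In general we use $\omega(d_b)=|\lambda|+c+1$, keeping the construction uniform in $M$ with $\lambda,\rho$ read off $M$. The launch rule (L$_b$), $X\#\to d_b\rho\#$, then requires $\omega(X)>\omega(d_b)+|\rho|$, so we set $\omega(X)=|\lambda|+|\rho|+c+2$.

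The checking order is: (M), (A), the sinks (S) and (CE), the rejections (R1)--(R3) and (C), then (D3) and (LX), then (D1), (D2) and (L$_b$). Each inequality is a single comparison between finite sums. The certificate is computable from $M$ because only $|\lambda|$, $|\rho|$ and $k$ enter it.

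The main care point is the rejection and collision rules whose left sides may contain high-weight symbols, for instance $X$ or $d_b$ appearing inside (C) or (LX). Their left sides contain at least one symbol besides the control, and the right side is a single weight-one symbol ($N$ or $E$), possibly next to $\#$ that is also present on the left. So their decrease holds regardless of the weights of the other controls, and we will say this explicitly rather than case-split. The only genuinely chained constraints are (D2) and (L$_b$), which were resolved above by choosing $\omega(d_b)$ before $\omega(X)$.
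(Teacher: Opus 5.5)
Your proposal is correct and follows the paper's approach: an explicit additive weighting with $\#$ and the sink controls at weight one, the decoder and launch weights chained so that (D1), (D2) and (L$_b$) each drop by exactly one, and every other family checked by direct comparison. Your constants are a leaner variant of the paper's: you use data $1$, raw letters $2$, $\omega(d_b)=|\lambda|+3$ and $\omega(X)=|\lambda|+|\rho|+4$, where the paper uses data and states $2$, raw letters $3$, $\omega(d_b)=2|\lambda|+3$ and $\omega(X)=2|\lambda|+2|\rho|+4$. One wording slip: the boundary rejections with $u=\eps$ or $v=\eps$, namely $\#q^b\to\#N$ and $q^b\#\to N\#$, contain nothing besides the control and the shared separator, so their decrease rests on $\omega(q^b)=2>1=\omega(N)$ rather than on a dropped extra symbol (your choice $c=2$ already covers this).
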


\begin{proof}
Give every symbol of $D$ and every copied state weight $2$, every raw input symbol weight $3$, and $\omega(\#)=\omega(A_0)=\omega(A_1)=\omega(N)=\omega(E)=1$. Let $W=2|\lambda|+3$, $\omega(d_0)=\omega(d_1)=W$, and $\omega(X)=W+2|\rho|+1$.

A decoder move, decoder completion, and launch each lower weight by one. A genuine machine left side has weight $4k+2$, while its right side has weight at most $4k$. Every malformed-launch, malformed-decoder, full-window rejection, boundary rejection, acceptance, sweep, collision, and absorption rule strictly decreases weight by direct comparison: it deletes a symbol of positive weight, replaces a symbol by a strictly lighter one, or replaces at least two positive-weight controls by the unit-weight error control. Hence every rule decreases weight strictly.
\end{proof}

\subsection{Global Normalization}\label{sec:normalization}
The reduction must control every cycle over $\Gamma_M$, not merely an intended launch cycle. Termination follows from the weight certificate (Lemma~\ref{lem:wrapper-weight}); the lemmas below show that the error control is absorbing, classify components by control count, treat separator-free cycles directly, and assemble separator-delimited segments by asynchronous product, with the launch fork's exact behaviour given by Lemma~\ref{lem:W-launch} above.

If a cycle contains at least one separator, choose a representative of the form $\#s_1\#s_2\cdots\#s_j$. A \emph{control occurrence} is an occurrence of a letter in $\Ctrl$. Until a rule of type \textup{(C)} or \textup{(CE)} is used, every rule preserves the number of controls in its segment.

\begin{lemma}[The error control persists]\label{lem:W-E}
Once a separator-delimited segment contains $E$, every descendant contains $E$, and its unique normal form is the one-letter segment $E$. Once a separator-free cycle contains $E$, every descendant contains $E$, and its unique cyclic normal form is $\cyc{E}$.
\end{lemma}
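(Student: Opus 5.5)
Two facts need proof: persistence of $E$, and convergence of every segment or cycle containing $E$ to the bare letter $E$. Persistence is a rule-by-rule inspection of Table~\ref{tab:wrapper-rules}. A step can remove an occurrence of $E$ only if $E$ lies inside the matched left-hand side. Among all rules, $E$ appears on a left-hand side only in the sweep rules \textup{(S)}, in \textup{(C)}, and in \textup{(CE)}, because $E\in\Ctrl$ and the remaining families mention only specific controls. Every rule of these three kinds with $E$ in its left-hand side has a right-hand side containing $E$. For \textup{(S)} with $S=E$ the right side is $E$. For \textup{(C)}, the only way for $E$ to be consumed is as $c$ or $c'$, and the right side is again $E$. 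For \textup{(CE)} the right side is $E$. Hence each step maps some occurrence of $E$ to an occurrence of $E$, and by induction every descendant contains $E$. Rules never cross separators, so the argument localises to the segment containing $E$. It also covers the separator-free case verbatim, under the no-winding convention.

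Next I would show that a segment or cycle containing $E$ has exactly one normal form, namely the bare $E$. Take any word containing $E$ with some other symbol present within the segment, or in the cycle in the separator-free case. The neighbour of $E$ on one side is then either a separator or a letter of $\Dbar\cup\Ctrl$. If it is a data letter, the sweep rule \textup{(S)} deletes it. If it is a control, \textup{(CE)} deletes it. Scanning from $E$ in both directions, some side must present a non-separator neighbour. Otherwise the segment is exactly $E$, or, in a cycle, the cycle is $\cyc{E}$; in the single-letter cyclic case no-winding prevents $EE$-style self-matches. So every normal form containing $E$ is $E$ alone, and since $E$ persists every normal form contains $E$. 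Termination (Lemma~\ref{lem:wrapper-weight} with Proposition~\ref{prop:certificate}) supplies the existence of a normal form, and the uniqueness claim follows without any confluence argument.

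I expect the careful check that no rule silently consumes $E$ to be the main obstacle, particularly \textup{(C)} with $c=c'=E$ or with a $z$ containing nothing, and \textup{(CE)} with $c=E$ (giving $EE\to E$). All of these still leave one $E$. I would also confirm that \textup{(R1)}--\textup{(R3)}, \textup{(M)}, and \textup{(A)} range only over copied states or data and cannot match $E$.

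The segment form needs care: the displayed separators may coincide when only one separator exists. This is harmless, since the rules only ever look at letters strictly between separators.
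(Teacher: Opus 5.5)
Your proposal is correct and follows the paper's proof essentially step for step. The common route has three parts: persistence, since the only left-hand sides containing $E$, in \textup{(S)}, \textup{(C)}, and \textup{(CE)}, all have right-hand side $E$; reducibility of any component containing $E$ and another letter, by a sweep or absorption step at a neighbour of $E$; and termination, which makes the bare $E$ the unique normal form. Your closing remark is slightly imprecise, since $(\mathrm L_b)$, \textup{(D2)}, \textup{(R2)}, and \textup{(R3)} do read a separator, but each changes only one side of it and none involves $E$, so the localisation is unaffected.
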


\begin{proof}
No right-hand side lacking $E$ has a left-hand side containing $E$, so $E$ cannot disappear. In either kind of component, if another symbol remains, a nearest neighbour of an occurrence of $E$ is either data, deleted by a sweep, or a control, deleted by a collision or control-absorption rule. Hence a component containing $E$ is reducible unless it is exactly $E$. Termination gives the claim.
\end{proof}

\begin{lemma}[Every multi-control component collapses]\label{lem:W-collision}
Every separator-delimited segment with at least two controls has the unique normal form $E$. Every separator-free cycle with at least two controls has the unique cyclic normal form $\cyc{E}$.
\end{lemma}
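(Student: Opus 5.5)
By termination (Lemma~\ref{lem:wrapper-weight} with Proposition~\ref{prop:certificate}) and Lemma~\ref{lem:W-E}, it suffices to show one thing: every maximal derivation from such a component eventually creates $E$. Once $E$ is present, Lemma~\ref{lem:W-E} makes $E$ (respectively $\cyc{E}$) the unique normal form of every descendant. This yields uniqueness of the normal form of the original component as well, since every normal form is then $E$.

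The first step is a counting invariant. Inspecting Table~\ref{tab:wrapper-rules}, every rule other than \textup{(C)} and \textup{(CE)} has exactly one control on each side, or, for the sweeps \textup{(S)}, exactly one control $S$ on each side. So such a rule preserves the number of control occurrences in its segment, or in the separator-free cycle. Rule \textup{(CE)} needs $E$ to be present already, and \textup{(C)} creates $E$. Hence along any derivation that has not yet produced $E$, the component still carries at least two controls and contains no $E$.

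The second step, which I expect to carry the real work, is a local reducibility lemma: \emph{every component containing a control occurrence other than $E$ is reducible}. I would prove it by a case analysis on one control occurrence $c$ and its neighbourhood, noting that each neighbour is a data symbol in $\Dbar$, another control, or a separator.
\begin{itemize}
\item For $c=X$, the right neighbour determines the rule: data gives \textup{(LX)}, $\#$ gives \textup{(L$_b$)}, and a control $c'$ gives \textup{(C)} with $z=\eps$.
\item For $c=d_b$, the left neighbour determines the rule: a raw symbol gives \textup{(D1)}, genuine data gives \textup{(D3)}, $\#$ gives \textup{(D2)}, and a control gives \textup{(C)}.
\item For $c\in\{A_0,A_1,N\}$, any data neighbour gives a sweep \textup{(S)}, and a control neighbour gives \textup{(C)}. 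If both neighbours are separators, $c$ has no data neighbour, but this cannot happen when the segment has two controls.
\item For $c=q^b$ with $q$ final, \textup{(A)} applies.
\item For $c=q^b$ with $q$ nonfinal, consider the nearest non-data symbol on each side. If a control lies within $k$ data symbols, \textup{(C)} applies. If a separator lies within $k-1$ data symbols, \textup{(R2)} or \textup{(R3)} applies. Otherwise a full window $\alpha q^b\beta$ with $\alpha,\beta\in\Dbar^k$ exists, and either \textup{(M)} or \textup{(R1)} applies; raw symbols in the window fall under \textup{(R1)}.
\end{itemize}

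For separator-free cycles I would additionally check the no-winding convention. Each left-hand side used above is matched along an arc. The arc is either between two \emph{distinct} control occrences or inside a window whose length is bounded by the data actually present. So its length never exceeds the cycle length, given at least two controls.

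Combining the two steps finishes the argument. An $E$-free descendant with at least two controls is never a normal form, and derivations are finite, so every maximal derivation applies \textup{(C)} at some point. Lemma~\ref{lem:W-E} then finishes the proof for both segments and separator-free cycles.
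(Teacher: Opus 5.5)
Your proof is correct and follows the paper's route. You observe that control count is preserved until (C) fires, give a per-control case analysis showing that an $E$-free component with at least two controls is never a normal form, finish with Lemma~\ref{lem:W-E}, and make the same no-winding check for separator-free cycles. The paper instead first derives gaps of at least $k+1$ data symbols and then inspects neighbours, whereas you fold the collision case into each control's case. Your italicised reducibility claim should be stated only for $E$-free components with at least two controls, since $\#A_0\#$ and $\cyc{X}$ are irreducible; that restricted form is exactly what your case analysis proves and your final step uses.
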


\begin{proof}
If the component already contains $E$, Lemma~\ref{lem:W-E} applies. Assume first that a separator-delimited segment contains no $E$, and consider a maximal reduction from it, which ends in a normal form by termination. Until the first use of \textup{(C)}, rule \textup{(CE)} is unavailable and every rule preserves the number of controls. Suppose for contradiction that \textup{(C)} is never used. In the resulting normal form, if two consecutive controls had fewer than $k+1$ data symbols between them, \textup{(C)} would apply. Hence every consecutive control gap would contain at least $k+1$ data symbols.

Inspect any control in that alleged normal form. An $X$ is followed by data, enabling \textup{(LX)}, or by the right separator, enabling a launch. A decoder $d_b$ has immediately to its left a raw letter, genuine data, or the left separator, enabling respectively \textup{(D1)}, \textup{(D3)}, or \textup{(D2)}. A copied final state enables \textup{(A)}. A nonfinal copied state either has $k$ data symbols on both sides, in which case exactly one of the simulation rule and its complementary \textup{(R1)} applies, or lies within $k-1$ data symbols of a separator, enabling \textup{(R2)} or \textup{(R3)}. Finally, each of $A_0,A_1,N$ can sweep towards another control or is already within collision range.

Thus the terminal object is not a normal form, a contradiction. Every maximal reduction therefore uses \textup{(C)}, creates $E$, and then has the unique normal form $E$ by Lemma~\ref{lem:W-E}.

For a separator-free cycle, dispose first of the case in which $E$ is already present. Otherwise consider the controls, consecutive in cyclic order. If a cyclic gap is shorter than $k+1$, \textup{(C)} applies; if every gap has length at least $k+1$, the same case analysis enables a rule at each control. Hence no maximal reduction can avoid \textup{(C)}. The no-winding convention ensures that its left side uses two distinct control occurrences rather than reusing one occurrence after a full turn, and Lemma~\ref{lem:W-E} again gives the unique cyclic normal form $\cyc{E}$.
\end{proof}

\begin{lemma}[One-control determinism apart from launch]\label{lem:W-one}
A separator-delimited segment with no controls is irreducible. A segment with exactly one control has a unique normal form, except possibly when that control is $X$ immediately before the right separator. In that case there are exactly the two launch choices, and after a branch is chosen all remaining nondeterminism consists of commuting sweeps or the two boundary rejections, which have the same normal form $N$.
\end{lemma}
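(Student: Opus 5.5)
The plan is a case analysis on the unique control of the segment, using termination (Lemma~\ref{lem:wrapper-weight}) together with Newman's lemma in local form: to show that a one-control segment other than the launch configuration has a unique normal form, I will check that every one-step peak inside such a segment is joinable.

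\textbf{Zero controls.} Every left-hand side of Table~\ref{tab:wrapper-rules} contains a control letter. Here $\#$ is not a control, and the data-only windows of \textup{(R1)}--\textup{(R3)} are still anchored at a copied state. Hence a control-free segment is irreducible.

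\textbf{Exactly one control.} No rule creates a second control. Rules \textup{(C)} and \textup{(CE)} need two control occurrences, and the separator prevents any match from wrapping around the segment. So every descendant again has exactly one control, and only the rules anchored at that control are relevant. I then go through the cases by control type.
\begin{enumerate}
  \item $X$ followed by data: only \textup{(LX)} applies, since $X\#$ is absent.
  \item $X$ followed by the right separator: exactly the two launch rules $(\mathrm L_0)$ and $(\mathrm L_1)$ apply. After either one, the segment is a one-control segment with control $d_b$, and the later cases show that it is deterministic up to the stated exceptions. Lemma~\ref{lem:W-launch} gives its behaviour.
  \item $d_b$: the left neighbour is raw, genuine data, or $\#$. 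These are disjoint, so exactly one of \textup{(D1)}, \textup{(D3)}, \textup{(D2)} applies, and the decoder has no right-hand rule.
  \item A final copied state: only \textup{(A)} applies.
  \item A nonfinal copied state with full $k$-windows on both sides: \textup{(M)} and \textup{(R1)} are complementary by construction, and the table is deterministic, so exactly one rule applies. Rules \textup{(R2)} and \textup{(R3)} cannot apply here, because they need a separator within fewer than $k$ symbols.
  \item A nonfinal copied state within $k-1$ symbols of a separator: \textup{(R1)} and \textup{(M)} both need $k$ data symbols on each side, so they are unavailable. If the state is close to both separators, \textup{(R2)} and \textup{(R3)} may both fire. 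Both results contain $N$ as the single control, and by the sweep case they normalize to $\#N\#$.
  \item $A_b$, $N$, or $E$: only sweeps apply. $E$ has no control neighbours here, so \textup{(CE)} is unavailable. A left sweep and a right sweep act on disjoint positions and commute in one step. Two left sweeps coincide, and so do two right sweeps. The sweeps end at the one-letter segment.
\end{enumerate}

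\textbf{Global uniqueness.} The step from local to global is an induction on weight. The deterministic cases have a unique successor up to commuting sweeps. The only genuine local branching apart from launch is the \textup{(R2)}/\textup{(R3)} pair, which rejoins at $N$. So every peak in a non-launch one-control segment is joinable, and Newman's lemma (Theorem~\ref{thm:newman}) with Lemma~\ref{lem:unique-nf} yields a unique normal form.

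\textbf{Main obstacle.} The delicate step is the near-separator case. I must confirm that the ranges of \textup{(R1)} and \textup{(M)} use exactly $k$ data symbols from $\Dbar$. Raw symbols may appear in \textup{(R1)} windows, since $\alpha,\beta\in\Dbar^k$, and this keeps \textup{(R1)} total on full windows. I must also confirm that the short words $u,v\in\Dbar^{<k}$ in \textup{(R2)} and \textup{(R3)} exclude any overlap with full-window rules. Since $\#\notin\Dbar$, a full window cannot cross a separator, and a short boundary word certifies that no full window exists on that side. This gives the required disjointness.
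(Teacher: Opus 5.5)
Your proof is correct and follows essentially the paper's own argument: a case analysis on the type and bounded neighbourhood of the single control, in which the only branching is the launch pair, the \textup{(R2)}/\textup{(R3)} pair rejoining at $\#N\#$, and the sweeps; your explicit appeal to Newman's lemma spells out the globalization the paper leaves implicit. It applies on the reduction-closed set of non-launch one-control segments, which is closed because no right-hand side contains $X$. One small imprecision: a left and a right sweep are not on disjoint positions, since both redexes contain the control, but each result sweeps the remaining neighbour in one step, so that peak still joins.
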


\begin{proof}
No left side consists only of data. For one control, its type and bounded neighbourhood determine the next rule. A decoder is governed by its immediate predecessor. A final copied state accepts. A nonfinal copied state with a full $k$-by-$k$ window has exactly one of a genuine table rule and the complementary rejection; near a separator it has the appropriate boundary rejection.

If both boundary rules apply to a short segment $\#u q^{b}v\#$ with $|u|,|v|<k$, their targets $\#Nv\#$ and $\#uN\#$ both sweep to $\#N\#$. Each of $A_0,A_1,N,E$ deletes adjacent data in either order but leaves the same single control. An $X$ followed by data has only the malformed-launch rule, whereas $X\#$ has exactly the two launch rules. Collision rules require two controls.
\end{proof}

\begin{lemma}[Asynchronous product of segments]\label{lem:W-product}
For a cycle containing at least one separator, rewriting is the asynchronous product, modulo cyclic permutation, of the relations on its separator-delimited segments. If every reachable segment has a unique segment normal form, then the whole cycle has a unique cyclic normal form.
\end{lemma}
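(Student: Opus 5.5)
The plan has two parts: first show that every rule application on a cycle with a separator sits inside one separator-delimited segment, then lift unique normal forms from segments to the whole cycle by a standard product argument for terminating systems.

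\textbf{Step 1: locality of matches.} Fix a representative $\#s_1\#s_2\cdots\#s_j$ and inspect the left-hand sides in Table~\ref{tab:wrapper-rules}. Every left-hand side contains at most one $\#$. When it contains one, as in (L$_b$), (D2), (R2) and (R3), that $\#$ sits at an end of the left-hand side. The matching right-hand side keeps the $\#$ at the same end. Hence every match lies along an arc that meets at most the two separators bounding one segment, and crosses no separator strictly inside the arc. By the no-winding convention the arc is simple. So a step rewrites exactly one segment $s_i$, viewed with its bounding separators, into $s_i'$, and leaves the other segments and the separators untouched.

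When $j=1$ the two bounding separators are the same occurrence. A rule such as $q^b v\#\to N\#$ still reads only the factor of $\#s_1\#$ next to one endpoint, so there is no clash. The segment relation is therefore defined on the contexts $\#s\#$, with the convention from segment notation. Conversely, any segment step lifts to a cycle step, because the match occurs in the rotation beginning at that segment's separator. The cycle relation is then exactly the asynchronous product of the segment relations, taken up to rotation of the segment sequence. Rotation only shifts which segment is written first, and the product is invariant under cyclically permuting its factors.

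\textbf{Step 2: unique normal forms.} Termination holds by Lemma~\ref{lem:wrapper-weight} and Proposition~\ref{prop:certificate}. A cycle is a normal form exactly when every segment is a segment normal form, since by Step 1 any applicable rule applies inside a single segment.

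Take any maximal derivation from $\#s_1\cdots\#s_j$. Project it onto each coordinate. The projection onto coordinate $i$ is a derivation of $s_i$ in the segment relation: the steps touching that segment, in order. It is maximal, because the final cycle has every coordinate irreducible. By hypothesis every reachable segment, and in particular $s_i$, has a unique normal form $n_i$. Hence the final cycle is $\cyc{\#n_1\#n_2\cdots\#n_j}$, independent of the derivation. The same argument applies from every reachable cycle, since its segments are again reachable segments. Lemma~\ref{lem:unique-nf} then also yields confluence of the components whenever this is needed later.

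\textbf{Expected obstacle.} The delicate point is Step 1's claim that no left-hand side straddles a separator or wraps through the only separator. This must be checked family by family. Families (C), (S), (CE), (M), (R1) and (LX) contain no $\#$ at all: their ranges lie in $\Dbar\cup\Ctrl$. So each of them matches inside a segment interior. The families that do contain $\#$ carry exactly one, at an end.

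The one-separator case needs a separate word on why a match cannot use the same $\#$ at both ends. No left-hand side contains two separators, and no-winding forbids reusing a letter occurrence, so this cannot happen.
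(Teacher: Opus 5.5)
Your proposal is correct and follows essentially the paper's argument. Every left-hand side either contains no $\#$ or carries a single separator at one end that its right-hand side keeps, so reductions are interleavings of independent segment reductions, and the tuple of segment normal forms is then unique up to rotation; your projection of a maximal derivation onto each coordinate simply makes explicit the paper's remark that steps in adjacent segments share only an unchanged separator and therefore commute.
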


\begin{proof}
A rule without $\#$ is internal to one segment. Every rule containing $\#$ has it at one end of both sides and changes symbols on only one side. Steps in nonadjacent segments are disjoint; steps in adjacent segments may share the unchanged separator but commute. Reductions are therefore precisely interleavings of component reductions. The tuple of segment normal forms is unique up to cyclic choice of the first separator, which is exactly rotation.
\end{proof}

\begin{lemma}[Separator-free cycles]\label{lem:W-no-sep}
Every separator-free cycle has a unique normal form.
\end{lemma}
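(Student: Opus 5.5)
Split the separator-free cycles by their number of control occurrences and use termination (Lemma~\ref{lem:wrapper-weight}) together with Lemma~\ref{lem:unique-nf}: it suffices to show that all maximal reductions from a given cycle end in the same normal form.

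\emph{Two or more controls.} Lemma~\ref{lem:W-collision} already gives the unique cyclic normal form $\cyc{E}$.

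\emph{No controls.} The cycle consists of data letters from $\Dbar$ only. Every left-hand side in Table~\ref{tab:wrapper-rules} contains a control or the separator, so the cycle is irreducible and is its own unique normal form.

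\emph{Exactly one control $c$.} This is the substantive case, since no separator is available as a boundary. First list which rules can fire in a cycle $\cyc{c\,z}$ with $z\in\Dbar^*$.
\begin{itemize}
\item Rules mentioning $\#$ cannot fire: (L$_b$), (D2), (R2), (R3).
\item Rules needing two distinct control occurrences cannot fire: (C) and (CE). By the no-winding convention, $c$ cannot be reused after a full turn.
\item What remains: (LX) for $X$; (D1) and (D3) for $d_b$; (M) and (R1) for a copied nonfinal state; (A) for a copied final state; and the sweeps (S) for $A_0,A_1,N,E$.
\end{itemize}
Each of these rules except (D1) and (M) replaces the window by a single sink control. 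Now check the controls one at a time.
\begin{itemize}
\item $X$: its successor is determined, giving (LX) to $N$ if $z\neq\eps$, and irreducibility if $z=\eps$.
\item $d_b$: its predecessor alone decides among (D1), (D3), or nothing.
\item A copied final state: only (A) applies.
\item A copied nonfinal state $q^b$: if $|z|\ge 2k$, the window $\alpha q^b\beta$ is determined by $c$'s position, and exactly one of (M) and (R1) applies. If $|z|<2k$, neither applies under the no-winding convention, because each needs $2k$ data letters, and the cycle is a normal form.
\end{itemize}
Every non-sweep step is therefore forced, so the derivation is deterministic until a sink control appears. One issue needs care here: (M) with windows $\alpha,\beta$ ranging over $\Dbar^k$ could straddle raw letters, but (M) only uses table windows over $D$, and (R1) covers all complementary windows, so exclusivity still holds.

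Once the single control is a sink $S\in\{A_0,A_1,N,E\}$, only the sweeps $dS\to S$ and $Sd\to S$ apply. Each deletes one adjacent data letter, and any two distinct sweeps commute in one step. When only one data letter remains, $dS$ and $Sd$ delete the same letter and give the same result. So the sweeps end in $\cyc{S}$.

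Combining the cases: each separator-free cycle with one control follows a unique deterministic trajectory to either a stuck non-sink configuration or a sink, and then to $\cyc{S}$. This gives a unique normal form in every case.

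I expect the main obstacle to be the one-control case with a copied nonfinal state on a short cycle. Two things need careful checking there:
\begin{itemize}
\item that (M) and (R1) windows never overlap a rotation of the same occurrence in two inconsistent ways, which is guaranteed by no-winding when $|z|\ge 2k$;
\item that no other rule becomes applicable when $|z|<2k$.
\end{itemize}
I would settle both by direct inspection of the left-hand sides in Table~\ref{tab:wrapper-rules}.
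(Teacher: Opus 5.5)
Your proof is correct and follows essentially the same route as the paper: you split by the number of controls, dispose of the zero-control and multi-control cases by irreducibility and Lemma~\ref{lem:W-collision}, and in the one-control case show that each control type admits at most one step until a sink sweeps all data, with a nonfinal $q^b$ on a too-short cycle stuck. The only difference is cosmetic. The paper argues explicitly that the decoder meets genuine data within one circuit, because the number of raw letters strictly decreases, whereas you obtain this from global termination together with per-step determinism.
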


\begin{proof}
With no controls, no rule applies. With at least two controls, Lemma~\ref{lem:W-collision} gives $\cyc{E}$.

Suppose there is exactly one control. The one-letter cycles $\cyc{X}$ and $\cyc{d_b}$ are irreducible. Otherwise an $X$ followed by data rejects to $N$. A decoder repeatedly converts raw predecessors, each \textup{(D1)} step replacing the raw letter to its left by a genuine one, so the number of raw letters strictly decreases. Unless the cycle is the one-letter case already handled, the decoder meets genuine data after at most one circuit and \textup{(D3)} rejects.

A copied final state accepts. A nonfinal copied state has exactly one simulation/default rule when a simple arc containing $k$ distinct data positions on each side exists; if the cycle is too short, no such left side can match. Finally, $A_b,N,E$ sweep all data. Each case has one normal form.
\end{proof}

\begin{theorem}[Weighted confluence reduction]\label{thm:weighted-fork}
From $M$ one effectively constructs a finite nonerasing additively weight-decreasing cycle system $R_M$ such that
\[
  R_M\text{ is confluent}
  \quad\Longleftrightarrow\quad
  L(M)=\varnothing.
\]
Consequently there is a total computable family $(S_{e,x},\omega_{e,x})$ satisfying
\[
  S_{e,x}\text{ is confluent}
  \quad\Longleftrightarrow\quad
  x\notin W_e.
\]
\end{theorem}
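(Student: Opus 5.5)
We take $R_M$ to be the wrapper of Table~\ref{tab:wrapper-rules}. It is finite and computable from $M$, and it is nonerasing by inspection. Lemma~\ref{lem:wrapper-weight} supplies the certificate, so the relation terminates by Proposition~\ref{prop:certificate}. By Lemma~\ref{lem:unique-nf}, confluence is therefore equivalent to every cycle over $\Gamma_M$ having a unique normal form. Both directions of the equivalence are argued through this criterion.

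For $L(M)\ne\varnothing\Rightarrow$ nonconfluence, fix $w\in L(M)$ and consider the intended witness $\cyc{\#\widehat wX}$. It has one separator, so its single segment is the one-control segment $\#\widehat wX\#$. By Lemma~\ref{lem:W-launch}(2), branch $b$ reaches $\cyc{\#A_b}$ along every maximal derivation. We check that $\cyc{\#A_0}$ and $\cyc{\#A_1}$ are distinct normal forms. They are irreducible because no rule has a left side inside $\#A_b$: sweeps need data, collisions need two controls, and the boundary rules need copied states. They are distinct because rotation preserves letter content. So this cycle has two normal forms.

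For $L(M)=\varnothing\Rightarrow$ confluence, we split every cycle into cases.
\begin{enumerate}
\item A separator-free cycle has a unique normal form by Lemma~\ref{lem:W-no-sep}.
\item A cycle with a separator is governed by Lemma~\ref{lem:W-product}. It suffices that every reachable segment has a unique segment normal form.
\item Segments with zero controls are irreducible.
\item Segments with at least two controls collapse to $E$ by Lemma~\ref{lem:W-collision}.
\item Segments with one control have a unique normal form by Lemma~\ref{lem:W-one}, unless the control is $X$ directly before the right separator, i.e.\ the segment is $\#sX\#$.
\end{enumerate}
For $\#sX\#$ we apply Lemma~\ref{lem:W-launch}. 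If $s\notin\widehat\Sigma^*$, both branches end in $\#N\#$. If $s=\widehat w$, then $w\notin L(M)$ because the language is empty, so both branches again end in $\#N\#$.

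The main obstacle is making this last step fully rigorous. Lemma~\ref{lem:W-one} only says that after the branch is chosen the nondeterminism consists of commuting sweeps and the two boundary rejections. To conclude that every derivation from $\#sX\#$ lands in a single normal form, we combine that lemma with Lemma~\ref{lem:W-launch}'s statement about every maximal derivation in each branch. Since the launch rules are the only source of branching at $X\#$, any maximal derivation factors as a launch followed by a derivation in one branch.

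A further subtlety concerns the segment count in Lemma~\ref{lem:W-product}. A reachable segment may be an intermediate of one-control reductions, but it is still a one-control segment, so the same lemmas apply to it. Alternatively, uniqueness of normal forms at the start segment already suffices, because descendants of a segment with a unique normal form inherit it. We will argue that this inheritance holds under termination: every normal form of a descendant is a normal form of the original.

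For the consequence, compose with Theorem~\ref{thm:verifier-source}. Set $S_{e,x}=R_{M_{e,x}}$ and let $\omega_{e,x}$ be the certificate of Lemma~\ref{lem:wrapper-weight}. The map is total computable because both constructions are effective. Then $S_{e,x}$ is confluent iff $L(M_{e,x})=\varnothing$ iff $x\notin W_e$.
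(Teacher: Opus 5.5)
Your proposal is correct and follows essentially the same route as the paper. It derives termination from the certificate of Lemma~\ref{lem:wrapper-weight}, exhibits the two distinct irreducible cycles $\cyc{\#A_0}$ and $\cyc{\#A_1}$ reached from the intended witness when $L(M)\ne\varnothing$, and otherwise obtains unique segment normal forms from Section~\ref{sec:normalization} and Lemma~\ref{lem:W-launch}, assembles them by Lemmas~\ref{lem:W-product} and~\ref{lem:unique-nf}, and composes with Theorem~\ref{thm:verifier-source}; your observation that descendants of a segment with a unique normal form inherit it usefully makes explicit the ``every reachable segment'' hypothesis of Lemma~\ref{lem:W-product}, which the paper leaves implicit.
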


\begin{proof}
The relation terminates by Lemma~\ref{lem:wrapper-weight} and Proposition~\ref{prop:certificate}.

If $w\in L(M)$, the two launch branches reach the segment normal forms $\#A_0\#$ and $\#A_1\#$ by Lemma~\ref{lem:W-launch}. Segment derivations lift to the one-separator cycle $\cyc{\#\widehat wX}$ by Lemma~\ref{lem:W-product}, and each $\cyc{\#A_b}$ is irreducible, since sweeps need an adjacent data symbol and collisions need two control occurrences. The cycle thus has two distinct normal forms, so $R_M$ is not confluent by Lemma~\ref{lem:unique-nf}.

Assume $L(M)=\varnothing$. Section~\ref{sec:normalization} proves that every separator-delimited segment other than a launch segment has a unique normal form and that separator-free cycles do as well. In the sole launch case, both branches reject by Lemma~\ref{lem:W-launch} because the language is empty. Lemma~\ref{lem:W-product} transfers the segment normal forms to every cycle with separators, so every cycle has a unique normal form and the relation is confluent by Lemma~\ref{lem:unique-nf}. Composition with Theorem~\ref{thm:verifier-source} gives the uniform family.
\end{proof}

\section{A Confluence-Preserving Binary Compiler}\label{sec:compiler}

\subsection{A Binary Encoding}\label{sec:binary-code}
This section translates a finite nonerasing weight-decreasing cycle system into a binary system whose rules strictly reduce ordinary length. The encoding uses the lengths of zero runs to represent source letters. We first define the code and prove length decrease, then show that well-formed binary cycles simulate the source system exactly. Example~\ref{ex:binary} at the end of Section~\ref{sec:transfer} instantiates the construction.

Let $A$ be a finite nonempty alphabet, let $R\subseteq A^+\times A^+$ be finite and nonerasing, and let $\omega:A\to\N_{>0}$ be a positive additive certificate with $\omega(\ell)>\omega(r)$ for every $\ell\to r$. We construct a binary system $\mathsf B(R,\omega)$.

Enumerate $A=\{a_0,\ldots,a_{m-1}\}$. If $w=a_{j_1}\cdots a_{j_t}$, put $\iota(w)=j_1+\cdots+j_t$, with $\iota(\eps)=0$. Choose an integer $K$ with $K>m$ and, for every rule $\ell\to r$,
\[
  K>\left|\iota(\ell)-\iota(r)+3(|\ell|-|r|)\right|.
\]
Define code lengths $c(a_i)=K\omega(a_i)+i$. They are pairwise distinct: equal weights leave distinct residues $i$, while unequal weights differ by at least $K-(m-1)>0$. They also satisfy $c(a_i)\ge2$. Put $C=\{c(a):a\in A\}$ and $T=1+\max C$, and define the binary morphism $\enc(a)=1\,0^{c(a)}\,1\,0$. Thus a codeword contributes alternately a long zero gap of length $c(a)$ and a unit zero gap. The \emph{simulation rules} are $\enc(\ell)\to\enc(r)$ for $\ell\to r\in R$. They are nonerasing because $R$ is nonerasing.

\begin{lemma}[Bit-length decrease]\label{lem:bitlength}
For every source rule $\ell\to r$, $|\enc(\ell)|>|\enc(r)|$. Hence the simulation rules are strictly length reducing.
\end{lemma}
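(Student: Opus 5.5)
The plan is a direct length computation, exploiting the additivity of $\enc$. For a word $w=a_{j_1}\cdots a_{j_t}$ each codeword $\enc(a_{j_s})=1\,0^{c(a_{j_s})}\,1\,0$ has length $c(a_{j_s})+3$. Summing over the letters gives
\[
  |\enc(w)|=\sum_{s}\bigl(K\omega(a_{j_s})+j_s+3\bigr)=K\,\omega(w)+\iota(w)+3|w| .
\]
I will verify this identity first; it follows immediately from the definitions of $c$ and $\iota$.

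Next, for a rule $\ell\to r$ I will write the difference as
\[
  |\enc(\ell)|-|\enc(r)|=K\bigl(\omega(\ell)-\omega(r)\bigr)+\Delta,
  \qquad
  \Delta=\iota(\ell)-\iota(r)+3(|\ell|-|r|).
\]
The certificate gives $\omega(\ell)>\omega(r)$ with integer weights, so $\omega(\ell)-\omega(r)\ge1$. By the choice of $K$ we have $K>|\Delta|$. Together these yield $K(\omega(\ell)-\omega(r))+\Delta\ge K-|\Delta|>0$, which is the claimed strict decrease.

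The consequence for the simulation rules is then immediate. Their left and right sides are exactly $\enc(\ell)$ and $\enc(r)$, so each simulation rule strictly reduces ordinary length. Nonemptiness of $\enc(r)$, needed for the rules to be nonerasing, was already noted from the nonerasing hypothesis on $R$.

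I expect no real obstacle. The only point needing care is that $K$ dominates $|\Delta|$ for every rule simultaneously. This holds because $R$ is finite and $K$ was chosen larger than the maximum of these finitely many quantities. The condition $K>m$ plays no role here; it is used only for the distinctness of the code lengths.
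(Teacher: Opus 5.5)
Your proposal is correct and follows the paper's proof essentially verbatim. It uses the same identity $|\enc(w)|=K\omega(w)+\iota(w)+3|w|$, the same split of the difference into $K(\omega(\ell)-\omega(r))$ plus a correction term bounded by the choice of $K$, and your explicit bound $K(\omega(\ell)-\omega(r))+\Delta\ge K-|\Delta|>0$ simply spells out the final step the paper states in words.
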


\begin{proof}
For every word $w$, $|\enc(w)|=K\omega(w)+\iota(w)+3|w|$, so
\[
  |\enc(\ell)|-|\enc(r)|
  =K\bigl(\omega(\ell)-\omega(r)\bigr)
  +\iota(\ell)-\iota(r)+3(|\ell|-|r|).
\]
The weight difference is at least one, and the choice of $K$ makes the right-hand side positive.
\end{proof}

\subsubsection{Gap Sequences and Valid Cycles}

For a binary cycle $C_0$ containing at least one $1$, list those occurrences in cyclic order. The corresponding \emph{gap sequence} $g(C_0)=(g_1,\ldots,g_q)$ records the number of zeros from each $1$ to the next. It is defined up to cyclic shift.

\begin{definition}[Valid encoded cycle]\label{def:valid}
The empty cycle is \emph{valid}, a nonempty cycle containing no $1$ is invalid, and a cycle containing at least one $1$ is valid exactly when its cyclic gap sequence alternates $c(a_1),1,\ldots,c(a_s),1$ for some $s\ge1$ and letters $a_1,\ldots,a_s\in A$.
\end{definition}

Since every member of $C$ is at least two, the long and unit gaps determine the phase uniquely. Since the values $c(a)$ are pairwise distinct, every long gap determines its source letter.

\begin{lemma}[Cyclic injectivity]\label{lem:cyclic-injective}
For $u,v\in A^*$, $\cyc{\enc(u)}=\cyc{\enc(v)}$ if and only if $\cyc{u}=\cyc{v}$. Moreover, every nonempty valid binary cycle has a unique decoding to a source cycle.
\end{lemma}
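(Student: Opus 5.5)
The plan is to pass through gap sequences and read the source word off them. The key observation is that $\enc$ interacts well with rotation. For a nonempty word $u=a_{1}\cdots a_{s}$, the cycle $\cyc{\enc(u)}$ contains exactly $2s$ ones. Its cyclic gap sequence is, up to cyclic shift, $(c(a_1),1,c(a_2),1,\ldots,c(a_s),1)$. This holds because the codeword $\enc(a)=1\,0^{c(a)}\,1\,0$ contributes the gap $c(a)$ after its first $1$ and the gap $1$ after its second $1$. The trailing $0$ joins the next codeword's leading $1$, and across the displayed end it joins the first codeword's leading $1$. Rotating $u$ to $a_{j}\cdots a_{s}a_{1}\cdots a_{j-1}$ shifts this sequence cyclically by an even amount. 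Hence the ``if'' direction is immediate: $u\rot v$ gives $\enc(u)\rot\enc(v)$. The direct route is to write $u=xy$, $v=yx$, and use that $\enc$ is a morphism, so $\enc(u)=\enc(x)\enc(y)\rot\enc(y)\enc(x)=\enc(v)$. The empty case is trivial, since $\enc(\eps)=\eps$ and a nonempty source word gives a cycle containing ones.

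For the ``only if'' direction, suppose $\cyc{\enc(u)}=\cyc{\enc(v)}$. If either word is empty, both are, by length. Otherwise the two cycles have the same gap sequence up to cyclic shift, because a binary cycle with at least one $1$ is determined by its cyclic gap sequence and conversely. I would make this explicit: rotate a representative to begin with a $1$, and it reads $1\,0^{g_1}\,1\,0^{g_2}\cdots 1\,0^{g_q}$. Now apply the phase argument noted after Definition~\ref{def:valid}. Every $c(a)\ge2$, so the long gaps are exactly the entries different from $1$. A cyclic shift matching $(c(a_1),1,\ldots,c(a_s),1)$ with $(c(b_1),1,\ldots,c(b_t),1)$ must therefore send long positions to long positions. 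So it is an even shift, and $s=t$.

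Since $c$ is injective, $c(a_i)=c(b_{i+j})$ forces $a_i=b_{i+j}$ with indices taken mod $s$. Hence $u$ is a rotation of $v$. The second claim follows by the same reasoning. Given a nonempty valid cycle, Definition~\ref{def:valid} supplies letters $a_1,\dots,a_s$ with the alternating gap sequence. The cycle then equals $\cyc{\enc(a_1\cdots a_s)}$, since its gap sequence determines it. Uniqueness of the decoded source cycle is exactly the injectivity just proved.

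The only delicate point is the correspondence between binary cycles containing a $1$ and cyclic gap sequences, together with the care needed at the displayed end. The main obstacle I expect is phase ambiguity, that is, a shift by an odd amount aligning unit gaps with long gaps. The bound $c(a)\ge2$ rules this out. I would state that step carefully rather than treat it as obvious.
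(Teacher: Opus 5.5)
Your proposal is correct and takes essentially the same approach as the paper. Both read the cyclic gap sequence of $\enc(u)$ as the letters of $u$ replaced by pairs $(c(a),1)$, then use $c(a)\ge2$ to exclude phase-misaligned shifts and injectivity of $c$ to recover the letters, so cyclic shift is the only ambiguity; you additionally spell out the morphism argument for the easy direction and the explicit reconstruction of a valid cycle as $\cyc{\enc(a_1\cdots a_s)}$, which the paper leaves implicit.
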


\begin{proof}
Since letter codes are nonempty, $\cyc{\enc(u)}$ is empty exactly when $u$ is, and the empty cases match; assume $u,v$ nonempty. The alternating gap sequence of $\enc(u)$ is obtained from the cyclic letter sequence of $u$ by replacing each $a$ by the pair $(c(a),1)$. Long gaps cannot be confused with unit gaps, and $c$ is injective. Thus cyclic shifts are the only ambiguity.
\end{proof}

\begin{lemma}[Exact simulation]\label{lem:exact-simulation}
On valid cycles, the binary simulation relation is graph-isomorphic to the source cycle relation under $\cyc{u}\mapsto\cyc{\enc(u)}$.
\end{lemma}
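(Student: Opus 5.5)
The plan is to prove that the map $\cyc{u}\mapsto\cyc{\enc(u)}$ is a bijection from source cycles onto valid binary cycles, and that it carries steps to steps in both directions. Bijectivity is exactly Lemma~\ref{lem:cyclic-injective}. Injectivity is its first clause. For surjectivity, a valid cycle's alternating gap sequence $c(a_1),1,\ldots,c(a_s),1$ is the gap sequence of $\enc(a_1\cdots a_s)$, and the empty cycle is the image of $\cyc{\eps}$. So the remaining work is to show that for source cycles $\cyc{u},\cyc{v}$ we have $\cyc{u}\cstep_R\cyc{v}$ iff $\cyc{\enc(u)}\cstep\cyc{\enc(v)}$ by a simulation rule. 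The claim concerns only the simulation relation restricted to valid sources; cleanup rules are not part of it.

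\emph{Forward direction.} Suppose $u\rot\ell z$ and $v\rot rz$. Since $\enc$ is a morphism, rotation of a factorization $xy$ to $yx$ lifts to rotation of $\enc(x)\enc(y)$, so $\enc(u)\rot\enc(\ell)\enc(z)$ and $\enc(v)\rot\enc(r)\enc(z)$. The no-winding condition transfers as well, because $|\enc(\ell)|\le|\enc(\ell)\enc(z)|$.

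\emph{Backward direction: decoding a match.} This is the main obstacle. Suppose the binary cycle $\cyc{\enc(u)}\rot\enc(\ell)y$ and the step produces $\cyc{\enc(r)y}$; I must show the occurrence of $\enc(\ell)$ is aligned with codeword boundaries.

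1. Write $\enc(\ell)=1\,0^{c(b_1)}\,1\,0\,1\cdots 1\,0^{c(b_t)}\,1\,0$. Internally, it starts with $1$ and contains the full gaps $c(b_1),1,\ldots,c(b_t)$. Its first $1$ is followed by a long gap of length $c(b_1)\ge2$.
2. In the valid cycle, each $1$ is followed either by a long gap or by a unit gap. If the matched first $1$ were followed by a unit gap, the match would require $0^{c(b_1)}1$ where the cycle has $0\,1$, which is a contradiction since $c(b_1)\ge2$.
3. So the occurrence starts at the $1$ opening some codeword $\enc(a)$. The interior long gaps must then equal the cycle's gaps exactly, since they are delimited by $1$s on both sides. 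By injectivity of $c$, the matched letters coincide with consecutive source letters.
4. The final trailing $10$ ends just before the next codeword's leading $1$, so the occurrence is exactly $\enc$ of a cyclic factor equal to $\ell$. A wrap-around match covering the whole cycle is fine, and no-winding holds on the binary side. Since $|\enc(\ell)|\le|\enc(u)|$ and the match consists of whole codewords, $|\ell|\le|u|$.
5. Hence $y=\enc(z)$ with $u\rot\ell z$, so the target is $\cyc{\enc(rz)}$, which is $\enc$ of the source successor $\cyc{rz}$.

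Together with the forward direction, this gives the isomorphism of step graphs. Moreover, since each target is again valid, the valid cycles are closed under simulation steps.
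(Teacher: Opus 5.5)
Your proposal is correct and follows essentially the same route as the paper. The leading $1$ of $\enc(\ell)$ followed by a long gap forces the match to start at a codeword boundary, injectivity of $c$ identifies the matched letters, the simple binary arc corresponds to the source condition $|\ell|\le|u|$, and cyclic injectivity turns the edge correspondence into an isomorphism; you spell out surjectivity onto valid cycles and the alignment of the trailing unit gap more explicitly than the paper does, but the argument is the same.
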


\begin{proof}
Let $C=\cyc{\enc(u)}$ be valid. A simulation left-hand side begins with a $1$ followed by a long gap, so its match in $C$ can start only at the first $1$ of a codeword. The alternating gaps then identify $|\ell|$ consecutive source letters, which must spell $\ell$. Because the matching arc is simple, it cannot revisit its initial codeword, which is exactly the source no-winding condition $|\ell|\le|u|$. Hence, after rotating $u$, there is a word $t$ with $u=\ell t$, and the binary step is $\cyc{\enc(\ell t)}\to\cyc{\enc(rt)}$.

Conversely, every source step from $\cyc{u}$ with $u=\ell t$ yields exactly this binary step, and since $|\enc(\ell)|\le|\enc(\ell t)|$ its match is a simple arc. Cyclic injectivity (Lemma~\ref{lem:cyclic-injective}) therefore makes $\cyc{u}\mapsto\cyc{\enc(u)}$ an isomorphism on both vertices and reduction edges.
\end{proof}

\subsection{Malformed Cycles and Confluence Preservation}\label{sec:transfer}
Exact simulation on well-formed encodings is not sufficient, because confluence is required on every binary cycle: an injective fixed-length morphism would synchronize the intended words but say nothing about malformed cycles, whose unused bit patterns can act as inert separators and expose ordinary nonconfluence inside arbitrary segments. This section therefore adds finitely many cleanup rules. They detect every reducible malformed gap pattern and create the error pair $11$, from which a unique normal form is reached. This step upgrades an embedding of the intended computation to an equivalence of global confluence properties.

Let $\Eset$ consist of the following rules:
\begin{align*}
  0^T &\longrightarrow 11, \tag{E1}\\
  1 0^n 1 &\longrightarrow 11
     &&(2\le n<T,\ n\notin C), \tag{E2}\\
  1 0^p 1 0^q &\longrightarrow 11
     &&(p,q\in C), \tag{E3}\\
  10101 &\longrightarrow 11, \tag{E4}\\
  110 &\longrightarrow 11, \qquad
  111 \longrightarrow 11. \tag{E5}
\end{align*}
Rule \textup{(E3)} deliberately has no final $1$: a rule ending in a third delimiter could never match a cycle with exactly two $1$'s and two code-sized gaps, because a left-hand side may not wind around and reuse its initial $1$. The suffix-free form uses every position of such a cycle exactly once and closes this boundary case.

Every rule in $\Eset$ strictly decreases length: $T\ge3$, the left side of \textup{(E2)} has length at least four, the left side of \textup{(E3)} has length at least six, and the remaining claims are immediate.

Define the full binary compiler by
\[
  \mathsf B(R,\omega)
  =\{\enc(\ell)\to\enc(r):\ell\to r\in R\}
    \ \cup\ \Eset.
\]

\begin{lemma}[Valid cycles avoid cleanup rules]\label{lem:valid-closed}
No rule of $\Eset$ applies to a valid cycle. A simulation step from a valid cycle has a valid target. Hence valid cycles form a closed subgraph of $\mathsf B(R,\omega)$.
\end{lemma}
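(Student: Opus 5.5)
The plan is a case check of the left-hand sides of $\Eset$ against the gap structure of a valid cycle, followed by an appeal to Lemma~\ref{lem:exact-simulation} for the closure claim. First I dispose of the empty valid cycle: every rule of $\mathsf B(R,\omega)$ has a nonempty left-hand side, so nothing applies to it. For a nonempty valid cycle, Definition~\ref{def:valid} says its cyclic gap sequence alternates $c(a_1),1,\ldots,c(a_s),1$ with $s\ge1$. Every gap therefore lies in $C\cup\{1\}$, and long and unit gaps strictly alternate.

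Next I check each cleanup rule in turn. In each case the no-winding convention lets me read the match as a simple arc, so the gaps it spans are consecutive gaps of the cycle.
\begin{enumerate}
  \item \textup{(E1)} would need a run of $T$ zeros, but every gap is at most $\max C<T$.
  \item \textup{(E2)} would need a full gap $n$ with $2\le n<T$ and $n\notin C$, but every gap lies in $C\cup\{1\}$.
  \item \textup{(E3)} needs a gap $p\in C$, which is delimited by two $1$'s, followed by at least $q\ge2$ zeros. By alternation the gap after a long gap has length one, so the $1$ ending the $p$-gap is followed by one zero and then a $1$. A zero run of length $q\ge2$ cannot start there.
  \item \textup{(E4)} would need two consecutive unit gaps, which alternation excludes.
  \item \textup{(E5)} would need an empty gap, since its left-hand side contains the factor $11$, and valid gaps are positive.
\end{enumerate}
The one point needing care is the case of few $1$'s, where an arc might reuse delimiters. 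The no-winding convention forbids reusing an occurrence, so the consecutive-gap reading stays sound. When $s=1$ a simple arc spans at most the two gaps, and the argument still goes through.

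For the second claim, take a simulation step from a valid cycle $\cyc{\enc(u)}$. By Lemma~\ref{lem:exact-simulation} its target is $\cyc{\enc(rt)}$ for a source step $\cyc{\ell t}\to\cyc{rt}$. Because $R$ is nonerasing, $rt$ is nonempty. Every encoded nonempty word has the alternating gap form, so the target is valid.

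Closure then follows. From a valid cycle the only applicable rules are simulation rules, by the first claim, and these land in valid cycles, by the second. I expect the only real obstacle to be making the wraparound argument for \textup{(E3)} and \textup{(E4)} rigorous on cycles with one or two codewords.
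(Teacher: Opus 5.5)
Your proposal is correct and follows the paper's proof. The paper likewise rules out \textup{(E1)}--\textup{(E5)} because valid gaps are positive, lie in $C\cup\{1\}$, alternate between code and unit type, and are shorter than $T$, and it obtains closure directly from Lemma~\ref{lem:exact-simulation}; your explicit treatment of the no-winding convention on cycles with one or two codewords only spells out what the paper leaves implicit.
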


\begin{proof}
A valid cycle has zero gaps only in $C\cup\{1\}$, never has two code gaps consecutively, never has two unit gaps consecutively, and has no adjacent $1$'s. Its longest zero run is less than $T$. Thus none of \textup{(E1)}--\textup{(E5)} matches. Closure under simulation was proved in Lemma~\ref{lem:exact-simulation}.
\end{proof}

The next lemma is the key reverse synchronization statement.

\begin{lemma}[Validity cannot be repaired by simulation]\label{lem:backward-valid}
If a simulation step has a valid target, then its source was valid. Equivalently, simulation preserves invalidity.
\end{lemma}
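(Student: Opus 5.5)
The plan is to argue at the level of gap sequences. A simulation step $\cyc{\enc(\ell)z}\cstep\cyc{\enc(r)z}$ replaces the arc $\enc(\ell)$ by $\enc(r)$ and leaves the complementary arc $z$ untouched. Both $\enc(\ell)$ and $\enc(r)$ begin with $1$ and end with the pattern $1\,0$. So I will describe the gap sequence of either cycle as an interior block contributed by the codeword arc, followed by the gaps contributed by $z$. The seams need care: the gap that starts at the last $1$ of the arc and runs through the leading zeros of $z$ up to the next $1$, and the gap that ends at the first $1$ of the arc.

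The steps are as follows. Write $\enc(\ell)=1\,0^{c_1}\,1\,0\cdots1\,0^{c_t}\,1\,0$, where $t=|\ell|$, and let $z=0^{\alpha}y$ with $y$ either empty or beginning with $1$. The case where $z$ contains no $1$ is handled separately below.

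\begin{enumerate}
\item In the target, the gaps are the internal gaps $c(r_1),1,\ldots,c(r_{t'}),1+\alpha$, followed by the gaps of $y$. The last gap of $y$ wraps around to the first $1$ of the arc.
\item In the source, the gaps are $c(\ell_1),1,\ldots,c(\ell_t),1+\alpha$, followed by the very same gaps of $y$.
\item Validity of the target forces $1+\alpha$ to be a unit gap, since the gap following a long gap must be unit. Hence $\alpha=0$.
\item The remaining gaps, those of $y$, must then alternate as long, unit, \ldots, ending in a unit gap before the wrap. This is because the target sequence restarts with a long gap $c(r_1)\in C$.
\item Consequently the source sequence is $c(\ell_1),1,\ldots,c(\ell_t),1$ followed by the same alternating tail. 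This is valid by Definition~\ref{def:valid}.
\end{enumerate}

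The degenerate cases need separate treatment. If $z$ has no $1$, then $z=0^{\alpha}$, and the target gap sequence ends with $1+\alpha$ wrapping directly to the first $1$. Validity forces $\alpha=0$, and then the source is $\cyc{\enc(\ell)}$, which is valid. If $z$ is empty, the same conclusion holds immediately. Nonerasing of $R$ guarantees that $\enc(r)$ contains at least one $1$, so the target's validity is witnessed by a genuine gap sequence rather than by the empty-cycle clause. This matters because otherwise an empty target would be vacuously valid.

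The step I expect to be the main obstacle is bookkeeping at the wrap-around seam. One has to confirm that the gap ending at the arc's first $1$ is attributed identically in source and target, and that the phase of the alternation is determined consistently. Since $C\ge2$ separates long gaps from unit gaps, phase is unambiguous. I will argue that the gap preceding the arc's first $1$ is a gap of $y$ (or $1+\alpha$ in the no-$1$ case) and is common to both cycles. Once this is established, validity transfers gap by gap.

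The equivalent formulation, that simulation preserves invalidity, is just the contrapositive.
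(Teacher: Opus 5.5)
Your proof is correct and is essentially the paper's argument written in terms of gap sequences. The paper starts the unique valid parse of the target at the first $1$ of $\enc(r)$, which is followed by a long gap. It notes that the final unit spacer of $\enc(r)$ forces $z$ to begin a codeword and reads off $z=\enc(t)$; this is exactly your seam computation $\alpha=0$ followed by the alternating tail.

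One small correction concerns your remark on nonerasing. An empty target would not be a problem: with $r=z=\eps$ the source $\cyc{\enc(\ell)}$ is valid anyway. Nonerasing matters because $\enc(r)$ supplies the long gaps $c(r_{t'})$ and $c(r_1)$ flanking the two seams, and your steps 3 and 4 rely on these to fix the phase. Without them, for $r=\eps$ and $z\ne\eps$, a valid $\cyc{z}$ need not be phase-aligned with $\enc(\ell)$.
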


\begin{proof}
Choose representatives for which the step is $\enc(\ell)z\to\enc(r)z$. If $z=\eps$, the source is $\cyc{\enc(\ell)}$ and is valid.

Suppose $z\ne\eps$ and the target is valid. Since $r$ is nonempty, the first $1$ of $\enc(r)$ is followed by a long gap and is therefore a codeword start in the unique valid parsing of the target. The parsing follows all of $\enc(r)$. After its final unit spacer, the first symbol of $z$ must begin the next codeword, and continuing around the target shows that $z=\enc(t)$ for some $t\in A^*$. Hence the source is $\cyc{\enc(\ell)\enc(t)}=\cyc{\enc(\ell t)}$ and is valid. The nonerasing hypothesis is essential: an empty inserted word would provide no internal long gap from which to recover the phase.
\end{proof}

\subsubsection{Classification of Malformed Cycles}

Let $\Iset$ be the following finite family of isolated malformed cycles:
\[
  \Iset=
  \{\cyc{0^n}:1\le n<T\}
  \ \cup\
  \{\cyc{10^n}:0\le n<T\}
  \ \cup\
  \{\cyc{1010}\}.
\]
The middle family consists of cycles with exactly one occurrence of $1$. No simulation left-hand side can occur in a member of $\Iset$: an encoded nonempty left-hand side contains at least two occurrences of $1$ and at least one long gap.

\begin{lemma}[Classification of malformed cycles]\label{lem:garbage-class}
Every binary cycle belongs to exactly one of the following four classes:
\begin{enumerate}
  \item a valid encoded cycle;
  \item the error normal form $\cyc{11}$;
  \item an isolated malformed cycle in $\Iset$;
  \item a cycle admitting a rule from $\Eset$.
\end{enumerate}
\end{lemma}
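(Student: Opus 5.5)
The plan has two parts. First show the four classes are pairwise disjoint. Then show that every cycle outside the first three classes admits a rule of $\Eset$, splitting by the number of occurrences of $1$ and by the gap sequence.

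\emph{Disjointness.} By Lemma~\ref{lem:valid-closed}, no rule of $\Eset$ applies to a valid cycle, so classes one and four are disjoint. The cycle $\cyc{11}$ has gap sequence $(0,0)$, so it is invalid. No rule of $\Eset$ matches it: every left-hand side of length at most two is absent, \textup{(E5)} needs length three, and the no-winding convention forbids matching $110$ or $111$ on a length-two cycle. It is not in $\Iset$, because the members of $\Iset$ contain at most one $1$ or are $\cyc{1010}$. For $\Iset$ itself:
\begin{itemize}
\item $\cyc{0^n}$ and $\cyc{10^n}$ are invalid, since valid cycles need at least two $1$'s.
\item $\cyc{1010}$ has gaps $(1,1)$ and $1\notin C$, so it is invalid.
\item Matching \textup{(E1)} needs length $T$, and the members have length at most $T$. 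The only length-$T$ member is $\cyc{10^{T-1}}$, which contains a $1$ and so has no window $0^T$.
\item \textup{(E2)}--\textup{(E5)} each need two $1$'s, or three positions with two $1$'s. With a single $1$ they would have to wind around the cycle.
\item On $\cyc{1010}$ (length four), \textup{(E4)} would need length five and \textup{(E2)}/\textup{(E3)} need a gap of at least $2$. \textup{(E5)} needs two adjacent $1$'s, and $\cyc{1010}$ has none.
\end{itemize}

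\emph{Exhaustiveness.} Take a nonempty cycle in none of the first three classes. If it has no $1$, it is $\cyc{0^n}$ with $n\ge T$, and \textup{(E1)} applies. If it has exactly one $1$, it is $\cyc{10^n}$ with $n\ge T$, and again \textup{(E1)} applies. Now suppose it has at least two $1$'s, with gap sequence $(g_1,\dots,g_q)$, $q\ge2$. Every gap $g$ is the middle of a simple arc $10^g1$, except when $q=2$, where the two $1$'s are still distinct occurrences and the arc is still simple. The case split is:
\begin{enumerate}
\item Some $g\ge T$: apply \textup{(E1)}.
\item Some $g$ with $2\le g<T$ and $g\notin C$: apply \textup{(E2)}.
\item Some $g=0$, i.e.\ adjacent $1$'s. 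If the symbol after them is $0$ or $1$, \textup{(E5)} applies on the simple arc of length three, which needs cycle length at least three. The only cycle of length two with $g=0$ is $\cyc{11}$.
\end{enumerate}
Otherwise every gap lies in $C\cup\{1\}$.
\begin{enumerate}
\item Two cyclically consecutive code gaps $p,q\in C$: the word $10^p10^q$ is a simple arc, since it uses each position at most once even when $q=2$. Apply \textup{(E3)}.
\item Two consecutive unit gaps: the arc is $10101$, which needs three distinct $1$'s, hence $q\ge3$. Apply \textup{(E4)}. If $q=2$ with both gaps unit, the cycle is $\cyc{1010}\in\Iset$.
\end{enumerate}
If neither applies, the gaps alternate between $C$ and $1$, so $q$ is even and the cycle is valid.

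\emph{Main obstacle.} The delicate part is the no-winding bookkeeping in the small cases. One must check that each claimed left-hand side uses distinct positions exactly when the cycle is long enough, and that the failures are precisely $\cyc{11}$ and the members of $\Iset$. The choice of \textup{(E3)} without a final delimiter is what makes $q=2$ with two code gaps reducible. I would verify the cases $q=2$ and the short all-zero and single-$1$ cycles individually.
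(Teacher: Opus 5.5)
Your proof is correct and follows the paper's own case analysis essentially step for step: split on the number of $1$'s, fire \textup{(E1)}, \textup{(E2)}, \textup{(E5)}, \textup{(E3)}, \textup{(E4)} on long, non-code, zero, consecutive-code and consecutive-unit gaps, and be left with only $\cyc{11}$, the members of $\Iset$, and strict code/unit alternation, while you additionally check the pairwise disjointness that the paper dismisses as holding ``by construction.'' One cosmetic slip in that check: $T$ can be as small as $3$ (one letter of weight one, no rules, $K=2$), so $\cyc{1010}$ need not have length at most $T$, but \textup{(E1)} still cannot match it because its zero runs have length one.
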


\begin{proof}
The empty cycle is valid. A nonempty all-zero cycle either has length less than $T$, in which case it lies in $\Iset$, or admits \textup{(E1)}. Now suppose the cycle contains $q\ge1$ occurrences of $1$, with cyclic zero gaps $g_1,\ldots,g_q$. If some gap is at least $T$, rule \textup{(E1)} applies. Assume all gaps are less than $T$. If $q=1$ and its sole gap is zero, the cycle is $\cyc{1}\in\Iset$. Otherwise, if some gap is zero, two distinct consecutive occurrences of $1$ are adjacent. The length-two case is $\cyc{11}$. Every longer such cycle has either a factor $110$ or a factor $111$, so \textup{(E5)} applies.

Assume henceforth that every gap is positive. If $q=1$, the cycle is $\cyc{10^n}$ with $n<T$ and belongs to $\Iset$. If some gap is neither one nor a member of $C$, then it lies between two and $T-1$ and rule \textup{(E2)} applies. Thus it remains to consider cyclic sequences over the two gap types ``unit'' and ``code''. Two consecutive code gaps trigger \textup{(E3)}. If $q\ge3$, two consecutive unit gaps trigger \textup{(E4)}. Therefore a cycle with $q\ge3$ and no cleanup redex must alternate code and unit gaps, which is exactly validity. When $q=2$, two code gaps trigger \textup{(E3)}, one gap of each type gives a valid one-letter encoding, and two unit gaps give the exceptional cycle $\cyc{1010}$. The four classes are disjoint by construction.
\end{proof}

\begin{lemma}[The error pair persists]\label{lem:error-persists}
If a cycle contains adjacent $1$'s, then every one-step descendant contains adjacent $1$'s. Every such cycle has the unique normal form $\cyc{11}$.
\end{lemma}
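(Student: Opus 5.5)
The proof has two parts: first show that adjacent $1$'s survive every rewrite step, then show that every cycle containing them drains to $\cyc{11}$ regardless of the order of steps.

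\emph{Persistence.} Fix an occurrence of $11$ in a cycle $C_0$ and a step $C_0\cstep D$ by a rule $\ell\to r$ matched on a simple arc. If the arc does not cover the fixed occurrence of $11$, that factor survives in $D$. If it does, we check each rule family.
\begin{enumerate}
\item Every cleanup rule in $\Eset$ has right-hand side $11$, so $D$ contains $11$ trivially.
\item No simulation left-hand side $\enc(\ell)$ contains the factor $11$: it is a concatenation of blocks $10^{c}10$ with $c\ge2$. So a simulation match can include at most one $1$ of the pair. Such a match covers only one of the two positions, which means the pair straddles an end of the match. The left-hand side begins with $1$ and ends with $0$.
\begin{enumerate}
\item If the pair straddles the left end, the match begins at the second $1$. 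The right-hand side $\enc(r)$ also begins with $1$, since $r$ is nonempty. The preceding $1$ is untouched, so $11$ reappears.
\item The pair cannot straddle the right end, because the last symbol of the match is $0$.
\end{enumerate}
\end{enumerate}
The only delicate case is a cycle so short that the match wraps almost fully and the preceding $1$ is itself the matched initial $1$. The no-winding convention excludes this, since the two $1$'s of the pair are distinct positions. Note also that $\cyc{1}$ has no adjacent pair. Hence every one-step descendant contains $11$.

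\emph{Unique normal form.} Termination holds because all rules strictly reduce length. By persistence, every descendant of the cycle contains $11$. Apply Lemma~\ref{lem:garbage-class} to a normal form $D$. It contains $11$, so it is invalid (valid cycles have no adjacent $1$'s). It is not in $\Iset$, since members of $\Iset$ contain at most one $1$ or equal $\cyc{1010}$. Being a normal form, it admits no $\Eset$ rule. By the classification, $D=\cyc{11}$. Every maximal derivation therefore ends at $\cyc{11}$, so this is the unique normal form.

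The step I expect to need most care is the boundary analysis of simulation matches adjacent to the pair on short cycles. I would settle it by noting that the factor $11$ occupies two distinct positions and that the match is a simple arc.
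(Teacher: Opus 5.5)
Your proof is correct and follows the paper's argument. Persistence uses the same case split: cleanup right-hand sides are $11$, and a simulation left-hand side has no factor $11$ and ends in $0$, so it can only start at the second $1$, and its right-hand side $\enc(r)$ restores the pair. Uniqueness follows from termination plus persistence. For the last step you go through Lemma~\ref{lem:garbage-class}, whereas the paper just observes that \textup{(E5)} applies to every longer cycle containing $11$; these amount to the same fact.

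One small correction. A match of length exactly $|C_0|$ that starts at the second $1$ is allowed by the no-winding convention. What excludes it, and keeps the preceding $1$ untouched, is that $\enc(\ell)$ ends in $0$, not no-winding.
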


\begin{proof}
Every cleanup rule creates or preserves the factor $11$. A simulation left-hand side contains no adjacent $1$'s and ends in $0$. Hence a simulation match cannot contain both members of an existing pair or end at its first member. The only possible overlap starts at the second $1$. The nonempty encoded right-hand side starts with $1$, leaving it adjacent to the first, untouched $1$.

If a cycle containing $11$ has length greater than two, the symbol following an adjacent pair is either $0$ or $1$, enabling $110\to11$ or $111\to11$. Repeated use strictly shortens the cycle until $\cyc{11}$ remains. Persistence shows that no other normal form is reachable.
\end{proof}

\begin{lemma}[Malformed reducible cycles normalize to error]\label{lem:invalid-nf}
Every invalid cycle outside $\Iset$ has the unique normal form $\cyc{11}$ in $\mathsf B(R,\omega)$.
\end{lemma}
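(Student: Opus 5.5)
We argue by well-founded induction on cycle length, which is legitimate because every rule of $\mathsf B(R,\omega)$ strictly reduces length (Lemma~\ref{lem:bitlength} for simulation rules, direct inspection for $\Eset$). The relation therefore terminates, and it suffices to prove two facts about an invalid cycle $D\notin\Iset$: first, that $D$ is reducible, and second, that every one-step descendant of $D$ again reduces to $\cyc{11}$. Given these, every maximal derivation from $D$ ends in $\cyc{11}$, so $\cyc{11}$ is its unique normal form.

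Reducibility comes directly from Lemma~\ref{lem:garbage-class}. An invalid cycle outside $\Iset$ is either $\cyc{11}$ itself, in which case the claim is trivial, or it admits a rule of $\Eset$. So assume $D\ne\cyc{11}$ and let $D\cstep D'$ be an arbitrary step. We distinguish the kind of rule used.

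\emph{Cleanup step.} Every rule of $\Eset$ produces the factor $11$, so $D'$ contains adjacent $1$'s. Lemma~\ref{lem:error-persists} then gives $D'$ the unique normal form $\cyc{11}$.

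\emph{Simulation step.} By Lemma~\ref{lem:backward-valid}, $D'$ is invalid. If $D'\notin\Iset$, then $D'$ is shorter than $D$, and the induction hypothesis gives it the unique normal form $\cyc{11}$. This leaves the possibility $D'\in\Iset$, which is the main obstacle. Since simulation right-hand sides are nonempty encodings, $D'$ contains $\enc(r)$, which has at least two $1$'s separated by a gap $c(a)\ge2$. This rules out every member of $\Iset$: $\cyc{0^n}$ and $\cyc{10^n}$ have fewer than two $1$'s, and $\cyc{1010}$ has only unit gaps. A simulation step therefore never lands in $\Iset$, and the case cannot occur.

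Two details need care in writing this out. First, the induction hypothesis is applied to the strictly shorter cycle $D'$, which may again be reducible only by simulation rules; this is harmless. Second, the base case holds vacuously, since the shortest invalid non-$\Iset$ cycles either are $\cyc{11}$ or have descendants covered by the cases above.

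Finally, $\cyc{11}$ is itself a normal form. Rules (E5) need length three, (E1)–(E4) need zeros or more symbols, and simulation left-hand sides contain zeros. Uniqueness follows.
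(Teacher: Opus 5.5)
Your proof is correct and follows the paper's argument. It uses the same ingredients: termination, Lemma~\ref{lem:backward-valid} together with the observation that a simulation target contains $\enc(r)$ and so lies outside $\Iset$, Lemma~\ref{lem:garbage-class} to force a cleanup step, and Lemma~\ref{lem:error-persists} afterwards; the only difference is that you package this as well-founded induction on length over one-step successors, whereas the paper follows an arbitrary maximal reduction until it must create $11$.
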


\begin{proof}
The full binary system is strictly length reducing, so every reduction terminates. By Lemma~\ref{lem:backward-valid}, a simulation path from an invalid source remains invalid. A simulation target contains the nonempty block $\enc(r)$, hence at least two $1$'s and a long code gap. It therefore cannot be a member of $\Iset$, and it cannot be $\cyc{11}$. If a maximal reduction has not yet created $11$, its current invalid cycle, which is not one of the isolated exceptions, admits a cleanup rule by Lemma~\ref{lem:garbage-class}. Once that rule is used, Lemma~\ref{lem:error-persists} gives the unique normal form $\cyc{11}$.
\end{proof}

\begin{theorem}[Confluence-preserving binary encoding]\label{thm:binary-transfer}
There is an effective transformation that maps every finite nonerasing additively weight-decreasing cycle system $(A,R,\omega)$ over a nonempty alphabet to a finite system
\[
  \mathsf B(R,\omega)\subseteq\B^+\times\B^+
\]
such that:
\begin{enumerate}
  \item every binary rule strictly decreases ordinary word length;
  \item the valid binary subgraph is isomorphic to the source cycle-reduction graph;
  \item $\mathsf B(R,\omega)$ is confluent if and only if $R$ is confluent.
\end{enumerate}
\end{theorem}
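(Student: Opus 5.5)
We take $\mathsf B(R,\omega)$ to be the compiler defined above, simulation rules together with $\Eset$. Effectiveness is immediate: $K$, the codes $c(a_i)$, $T$, and the finite rule lists of \textup{(E1)}--\textup{(E5)} are computed directly from $(A,R,\omega)$. For item~1 we combine Lemma~\ref{lem:bitlength} for the simulation rules with the length count already recorded after the definition of $\Eset$. For item~2 we cite Lemma~\ref{lem:exact-simulation}; Lemma~\ref{lem:valid-closed} adds that no cleanup rule touches a valid cycle, so the valid subgraph of the full system has exactly the simulation edges. Lemma~\ref{lem:backward-valid} adds that no edge enters it from outside. Hence the valid part is an isomorphic copy of the source graph, closed both forward and backward.

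For item~3 both systems terminate: the source by Proposition~\ref{prop:certificate}, the binary one by item~1. By Lemma~\ref{lem:unique-nf} it therefore suffices to compare unique normal forms. We partition binary cycles by Lemma~\ref{lem:garbage-class}. The cycle $\cyc{11}$ is a normal form. A member of $\Iset$ admits no simulation left-hand side, as remarked before the classification, and by Lemma~\ref{lem:garbage-class} admits no cleanup rule, since the four classes are disjoint. So it is a normal form and trivially has a unique one. Every cycle in class~4 is invalid and outside $\Iset$, so it has unique normal form $\cyc{11}$ by Lemma~\ref{lem:invalid-nf}.

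It remains to handle valid cycles. From a valid cycle, every derivation stays inside the valid subgraph by Lemma~\ref{lem:valid-closed}. Hence the set of descendants, and which of them are normal forms, is the isomorphic image of the corresponding source data. A normal form in the valid subgraph is a binary normal form, because cleanup rules never apply there. Thus $\cyc{\enc(u)}$ has a unique binary normal form exactly when $\cyc{u}$ has a unique source normal form. By Lemma~\ref{lem:cyclic-injective} every source cycle arises as some $\cyc{u}$ in this correspondence. Combining the cases, every binary cycle has a unique normal form if and only if every source cycle does, which gives item~3.

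The step that needs care is the treatment of cycles outside the valid class, and in particular the isolated set $\Iset$. One must confirm that members of $\Iset$ are genuinely irreducible in the full system and not merely free of cleanup redexes. The plan is to check this via the two-$1$ and long-gap observation for simulation left-hand sides. All the remaining work is delegated to Lemmas~\ref{lem:backward-valid} and~\ref{lem:invalid-nf}, which already absorb the hard synchronization argument.
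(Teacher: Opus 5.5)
Your proof is correct and follows the paper's argument: termination plus Lemma~\ref{lem:unique-nf}, the classification of Lemma~\ref{lem:garbage-class} with $\cyc{11}$ and the members of $\Iset$ irreducible, Lemma~\ref{lem:invalid-nf} for the remaining invalid cycles, and exact simulation on the forward-closed valid subgraph. The only difference is cosmetic: you transfer unique normal forms in both directions, whereas the paper shows that binary confluence implies source confluence by mapping source peaks into the valid subgraph and decoding their joins.
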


\begin{proof}
Effectivity and strict length decrease follow from the explicit finite construction and Lemma~\ref{lem:bitlength}. Exact simulation and closure of valid cycles follow from Lemmas~\ref{lem:exact-simulation} and~\ref{lem:valid-closed}.

Suppose first that $\mathsf B(R,\omega)$ is confluent. Any source peak maps to a binary peak inside the closed valid subgraph. A binary join of its branches remains valid and decodes, by Lemmas~\ref{lem:cyclic-injective} and~\ref{lem:exact-simulation}, to a source join. Hence $R$ is confluent.

Conversely, suppose $R$ is confluent. A valid binary source has only simulation descendants, and exact simulation gives confluence there. A source in $\Iset$ is irreducible. The error cycle $\cyc{11}$ is irreducible. Every other invalid source has the unique normal form $\cyc{11}$ by Lemma~\ref{lem:invalid-nf}. Thus every binary cycle has a unique normal form. Since the system terminates, it is confluent by Lemma~\ref{lem:unique-nf}.
\end{proof}

\begin{example}\label{ex:binary}
Let $A=\{a_0,a_1\}$ with $a_0=a$ and $a_1=b$, take unit weights, and let $R=\{ab\to a,\ ab\to b\}$. Both rules decrease weight. The constant $K$ must exceed $m=2$ and, for the two rules, $|\iota(ab)-\iota(a)+3|=4$ and $|\iota(ab)-\iota(b)+3|=3$. Take $K=5$. Then $c(a)=5$, $c(b)=6$, $C=\{5,6\}$, and $T=7$, so $\enc(a)=10^{5}10$ and $\enc(b)=10^{6}10$, and the two simulation rules $\enc(ab)\to\enc(a)$ and $\enc(ab)\to\enc(b)$ shorten $17$ bits to $8$ and to $9$. The cleanup system consists of $0^{7}\to11$, the three rules $10^{n}1\to11$ for $n\in\{2,3,4\}$, the four rules $10^{p}10^{q}\to11$ for $p,q\in\{5,6\}$, and the fixed rules \textup{(E4)} and \textup{(E5)}.

The valid cycle $\cyc{\enc(ab)}$, with gap sequence $(5,1,6,1)$, reproduces the source fork exactly: its two descendants $\cyc{\enc(a)}$ and $\cyc{\enc(b)}$ are distinct irreducibles, so nonconfluence transfers. For example, the malformed cycle $\cyc{10010}$, with gap sequence $(2,1)$, is caught by rule \textup{(E2)} and drains to the error normal form:
\[
  \cyc{10010}\cstep\cyc{110}\cstep\cyc{11}.
\]
\end{example}

\section{Classification}\label{sec:classification}
This section proves the main undecidability result. We first establish the arithmetical upper bound, then combine the weighted construction with the binary encoding. Afterwards we extend the result to every larger fixed alphabet and briefly record the decidable unary boundary.

\subsection{Binary Systems}
For each fixed finite alphabet $\Omega$, fix a total coding of finite rule lists over $\Omega$. Let $\mathcal C_\Omega$ contain exactly the codes of finite systems $R\subseteq\Omega^+\times\Omega^+$ that are strictly length reducing and whose induced cycle relation is confluent. Invalid codes and codes containing a nondecreasing rule are outside $\mathcal C_\Omega$.

\begin{lemma}[Arithmetical upper bound]\label{lem:pi-upper}
For every fixed finite alphabet $\Omega$, the complement of $\mathcal C_\Omega$ is computably enumerable. Hence $\mathcal C_\Omega\in\Pi^0_1$.
\end{lemma}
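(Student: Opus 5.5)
We enumerate the complement of $\mathcal C_\Omega$ by a semi-decision procedure that halts exactly on codes outside the set. Membership in the complement has two disjoint sources, and we treat them separately. First, syntactic failure: the number does not code a well-formed finite rule list over $\Omega$, or some rule $\ell\to r$ has $r=\eps$ or $|r|\ge|\ell|$. This is decidable, by the fixed total coding and one length comparison per rule, so such codes are enumerated immediately.

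Second, the code may be a well-formed strictly length-reducing, nonerasing system $R$ that is not confluent. Such an $R$ has the positive additive certificate assigning weight one to every letter, so by Proposition~\ref{prop:certificate} its cycle relation terminates and the descendant set of any fixed cycle is a finite computable set. By Lemma~\ref{lem:unique-nf}, $R$ fails confluence exactly when some cycle $\cyc{u}$ has two distinct normal forms. This yields a decidable predicate on pairs (code, $u$): compute all descendants of $\cyc{u}$ by closing under the one-step relation, select the irreducible ones, and test whether at least two are distinct as rotation classes.

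The one-step relation is itself computable. For a representative $u$, we range over its finitely many rotations $u'$, over the rules, and over a prefix match $u'=\ell z$, which automatically respects the no-winding condition $|\ell|\le|u'|$. Irreducibility is decided the same way, and equality of cycles is decided by comparing rotations.

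The procedure then dovetails over all words $u\in\Omega^*$ and halts on a code once the predicate holds for some $u$. Together with the syntactic case, this enumerates exactly the complement of $\mathcal C_\Omega$. Equivalently, $\mathcal C_\Omega=\{x:\forall u\,P(x,u)\}$, where $P$ is decidable and asserts that $x$ is a well-formed strictly length-reducing code and $\cyc{u}$ has at most one normal form. Hence $\mathcal C_\Omega\in\Pi^0_1$.

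There is no real obstacle. The only point needing care is that we rely on termination, guaranteed syntactically, to reduce confluence to a per-cycle finite check through Lemma~\ref{lem:unique-nf} and Proposition~\ref{prop:certificate}.
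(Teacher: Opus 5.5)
Your proof is correct and follows the paper's argument: decide the syntactic conditions, then semi-decide nonconfluence by enumerating cycles and using the finite, computable descendant graphs of Proposition~\ref{prop:certificate}. The only difference is the witness you search for: the paper looks for a nonjoinable one-step peak and appeals to Newman's lemma (Theorem~\ref{thm:newman}), whereas you look for a cycle with two distinct normal forms via Lemma~\ref{lem:unique-nf}, which avoids Newman's lemma but is otherwise equivalent.
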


\begin{proof}
Syntax and strict length decrease are decidable. For a valid system, enumerate cyclic words and their one-step peaks. From a source of length $n$, every step lowers length, so the complete descendant graph is finite and can be computed (Proposition~\ref{prop:certificate}). Nonjoinability is therefore decidable for each enumerated peak. By Newman's lemma (Theorem~\ref{thm:newman}), a nonconfluent length-reducing system has a nonjoinable local peak, and the enumeration eventually finds it.
\end{proof}

\begin{theorem}[Undecidability over two letters]\label{thm:pi-complete}
For the fixed alphabet $\B$, confluence of finite nonerasing strictly length-reducing cycle-rewriting systems is $\Pi^0_1$-complete, and therefore undecidable. Its complement is $\Sigma^0_1$-complete.
\end{theorem}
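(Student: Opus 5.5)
The plan combines the upper bound of Lemma~\ref{lem:pi-upper} with a many-one reduction built from the three equivalences in the introduction.

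\emph{Membership.} Take $\Omega=\B$ in Lemma~\ref{lem:pi-upper}. A strictly length-reducing binary system has only nonempty right-hand sides except possibly for rules $\ell\to\eps$. The class in the statement also requires nonerasing, which is a decidable syntactic restriction. So the set of codes of confluent nonerasing strictly length-reducing systems is the intersection of $\mathcal C_\B$ with a decidable set, and it is therefore in $\Pi^0_1$. The same Newman-plus-finite-descendant-graph enumeration certifies this directly.

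\emph{Hardness.} Every $\Pi^0_1$ set has the form $\overline{W_e}$ for some fixed $e$. I reduce it via the map $x\mapsto \mathsf B(S_{e,x},\omega_{e,x})$. This map is total and computable:
\begin{itemize}
  \item Theorem~\ref{thm:weighted-fork} gives $(S_{e,x},\omega_{e,x})$ effectively, as a finite nonerasing system with a positive additive certificate.
  \item Its alphabet $\Gamma_{M_{e,x}}$ is nonempty, since it contains $\#$, so Theorem~\ref{thm:binary-transfer} applies and is effective.
\end{itemize}
By item~1 of Theorem~\ref{thm:binary-transfer}, every output rule is strictly length reducing. The output is also nonerasing. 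The simulation rules $\enc(\ell)\to\enc(r)$ have $r\ne\eps$, and every cleanup rule has right-hand side $11$. Hence the output always lies in the promised class, and its code is in $\mathcal C_\B$ exactly when it is confluent.

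Chaining Theorem~\ref{thm:verifier-source}, Theorem~\ref{thm:weighted-fork} and item~3 of Theorem~\ref{thm:binary-transfer} gives $x\notin W_e$ iff $S_{e,x}$ is confluent iff $\mathsf B(S_{e,x},\omega_{e,x})$ is confluent. Choosing $W_e$ $\Sigma^0_1$-complete yields $\Pi^0_1$-hardness, so the problem is $\Pi^0_1$-complete. Undecidability follows because a $\Pi^0_1$-complete set is not decidable; otherwise every c.e.\ set would be decidable. The same reduction is a many-one reduction of $W_e$ to the complement, so the complement is $\Sigma^0_1$-complete.

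The one point needing care is the promise formulation. I must check that reduction outputs never fall outside the class, which items~1 and the nonerasing observation settle. I also need the complement to be taken relative to all codes, so that it stays $\Sigma^0_1$: invalid codes form a decidable set and can be absorbed. Everything else is bookkeeping over the earlier theorems.
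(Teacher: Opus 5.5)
Your proposal is correct and follows the paper's proof: compose the weighted reduction $(e,x)\mapsto(S_{e,x},\omega_{e,x})$ with the binary compiler $\mathsf B$, check that every output lies in $\B^+\times\B^+$ and is strictly length reducing, so that it meets the syntactic conditions of $\mathcal C_{\B}$, take the upper bound from Lemma~\ref{lem:pi-upper}, and obtain the $\Sigma^0_1$ statement by complementation. Your extra checks, namely the nonempty source alphabet and nonerasing outputs via $r\ne\eps$ and the right-hand side $11$, spell out details the paper leaves implicit; your intersection with a decidable set is harmless but redundant, since $\mathcal C_{\B}$ already requires $R\subseteq\B^+\times\B^+$.
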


\begin{proof}
Let $A\in\Pi^0_1$ and choose $e$ with $x\in A$ if and only if $x\notin W_e$. The weighted confluence reduction produces $(S_{e,x},\omega_{e,x})$ such that $S_{e,x}$ is confluent if and only if $x\notin W_e$. The compiled system $\mathsf B(S_{e,x},\omega_{e,x})$ lies in $\B^+\times\B^+$ and is strictly length reducing by Theorem~\ref{thm:binary-transfer}, so its code satisfies the syntactic conditions of $\mathcal C_{\B}$, and Theorem~\ref{thm:binary-transfer} gives
\[
  x\in A
  \quad\Longleftrightarrow\quad
  \mathsf B(S_{e,x},\omega_{e,x})\in\mathcal C_{\B}.
\]
The map is total and computable. Lemma~\ref{lem:pi-upper} gives membership in $\Pi^0_1$, and complementation gives the $\Sigma^0_1$ statement.
\end{proof}

\subsection{Larger Fixed Alphabets}

\begin{lemma}[Binary right-hand sides]\label{lem:rhs-two}
Every right-hand side produced by the binary encoding has length at least two.
\end{lemma}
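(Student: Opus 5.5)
The argument is a direct inspection of the two kinds of rules that make up $\mathsf B(R,\omega)$, namely the simulation rules $\enc(\ell)\to\enc(r)$ for $\ell\to r\in R$ and the cleanup family $\Eset$. No earlier lemma beyond the definitions is needed. The only hypothesis used is that the source system is nonerasing, which was already required to define the compiler.

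\emph{Simulation rules.} Since $R\subseteq A^+\times A^+$ is nonerasing, every right-hand side $r$ has $|r|\ge1$. The morphism $\enc$ is length-additive, so $|\enc(r)|\ge\min_{a\in A}|\enc(a)|$. Each codeword $\enc(a)=1\,0^{c(a)}\,1\,0$ has length $c(a)+3$. Using $c(a)\ge2$, established when the code lengths were defined, this gives $|\enc(r)|\ge5\ge2$. The proof therefore only needs $c(a)\ge0$ here, and even that slack is generous.

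\emph{Cleanup rules.} Every rule in \textup{(E1)}--\textup{(E5)} has the right-hand side $11$, which has length exactly two. Together with the previous paragraph, this covers every rule of $\mathsf B(R,\omega)$, so every right-hand side has length at least two.

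There is no real obstacle in this argument. The one point to state explicitly is that the nonerasing hypothesis on $R$ is what excludes $\enc(\eps)=\eps$. This is consistent with the role the hypothesis already plays in Lemma~\ref{lem:backward-valid}.
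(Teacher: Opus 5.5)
Your proof is correct and follows the same approach as the paper. A simulation right-hand side $\enc(r)$ with $r\ne\eps$ contains a complete codeword, of length $c(a)+3\ge5$, and every cleanup rule in $\Eset$ has right-hand side $11$.
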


\begin{proof}
A simulation right-hand side is $\enc(r)$ with $r\ne\eps$, so it contains a complete codeword. Every cleanup rule has right-hand side $11$.
\end{proof}

\begin{corollary}[All fixed alphabets]\label{cor:all-alphabets}
For every fixed finite alphabet $\Omega$ with $|\Omega|\ge2$, confluence of finite nonerasing strictly length-reducing cycle-rewriting systems over $\Omega$ is $\Pi^0_1$-complete, and therefore undecidable.
\end{corollary}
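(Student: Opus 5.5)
The plan is to prove membership and hardness separately, for a fixed $\Omega$ with $|\Omega|\ge2$. Membership in $\Pi^0_1$ is Lemma~\ref{lem:pi-upper}, which holds for every fixed finite alphabet. For hardness I will reduce from the binary case of Theorem~\ref{thm:pi-complete}. Fix two distinct letters of $\Omega$ and identify them with $0$ and $1$, so that $\B\subseteq\Omega$. Let $\Lambda=\Omega\setminus\B$ be the remaining letters. Given $\mathsf B=\mathsf B(S_{e,x},\omega_{e,x})$, I form the system $\mathsf B^\Omega$ over $\Omega$ by adding \emph{absorber rules}, and then show that $\mathsf B^\Omega$ is confluent exactly when $\mathsf B$ is.

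The naive choice of no extra rules fails. A foreign letter $z\in\Lambda$ acts as a separator, and it exposes linear binary segments on which the cyclic system need not be confluent. I therefore add, for every $z\in\Lambda$ and $a\in\Omega$, the rules $za\to z$ and $az\to z$. These are nonerasing and strictly length reducing, and the map $\mathsf B\mapsto\mathsf B^\Omega$ is computable.

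\begin{enumerate}
\item \emph{Cycles with no foreign letter.} A binary cycle is never matched by an absorber rule, since every absorber left-hand side contains a letter of $\Lambda$. So on binary cycles $\mathsf B^\Omega$ acts exactly as $\mathsf B$. The binary cycles also form a closed subgraph, because binary rules produce only binary letters. Confluence of $\mathsf B^\Omega$ therefore implies confluence of $\mathsf B$. Conversely, if $\mathsf B$ is confluent, the binary cycles have unique normal forms.

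\item \emph{Cycles with a foreign letter.} Foreign letters are never created. They are also never destroyed, except by another foreign letter through $zz'\to z$. Hence every descendant still contains a foreign letter.

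\item \emph{Unique normal form.} Any cycle containing a foreign letter and of length at least two admits an absorber step. So every normal form of such a cycle is a one-letter cycle $\cyc{z}$ with $z\in\Lambda$.
\end{enumerate}

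The main obstacle is uniqueness in step 3. Which foreign letter survives when several are present? A collision $zz'\to z$ versus $zz'\to z'$ gives different survivors, and these then yield distinct normal forms $\cyc{z}\ne\cyc{z'}$. This breaks confluence whenever $|\Lambda|\ge2$.

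I fix this by ordering $\Lambda$ and keeping only collision rules in which the larger letter survives. Writing $z'\le z$ for this order, the rules become $za\to z$ and $az\to z$ for $a\in\B\cup\{z'\in\Lambda:z'\le z\}$. The maximal foreign letter present can then never be deleted, and every smaller letter adjacent to it, or adjacent to anything larger, can always be removed. Any cycle containing a foreign letter and of length at least two still has an applicable rule, because its maximal foreign letter has a neighbour that it absorbs. So the unique normal form is $\cyc{z_{\max}}$, where $z_{\max}$ is the largest foreign letter occurring in the cycle.

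Binary letters can be created by $\mathsf B$-rules, but foreign letters cannot be, so $z_{\max}$ is an invariant of the cycle. Together with Lemma~\ref{lem:unique-nf} and termination, this shows that $\mathsf B^\Omega$ is confluent if and only if $\mathsf B$ is. That completes the many-one reduction from the binary problem, and so establishes $\Pi^0_1$-hardness.

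Lemma~\ref{lem:rhs-two} is not needed for this argument.
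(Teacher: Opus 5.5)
Your proof is correct, but it takes a different route from the paper's.

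\textbf{The paper's route.} It adds absorbers $xcc'\to11$ of shape $3\to2$ for every extra letter $x$ and all $c,c'\in\Omega$. These turn an extra-containing cycle of length at least three into a binary cycle containing the error pair $11$. The argument then relies on two properties of the compiled system:
\begin{enumerate}
\item Lemma~\ref{lem:error-persists}, which holds because of the cleanup rules \textup{(E5)}, so that such a cycle drains to $\cyc{11}$;
\item Lemma~\ref{lem:rhs-two}, so that a binary step cannot shrink an extra-containing cycle below length three while an extra remains.
\end{enumerate}
Extra-containing cycles of length at most two are left as additional irreducible exceptions.

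\textbf{Your route.} Your priority absorbers $za\to z$ and $az\to z$, with the larger foreign letter surviving, make the maximal foreign letter an indestructible sink. Every foreign-containing cycle then has the unique normal form $\cyc{z_{\max}}$.

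\textbf{Comparison.} Your argument uses nothing about the binary system beyond its being over $\B$ and terminating. It therefore gives a generic reduction from confluence over $\B$ to confluence over $\Omega$ for arbitrary finite nonerasing strictly length-reducing binary systems, not only compiled ones. The price is monadic right-hand sides and an ordering on $\Lambda$. The paper's tailored extension keeps every right-hand side of length at least two. It also funnels all reducible foreign-containing garbage into the single binary error normal form, consistent with its compiler design.
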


\begin{proof}
Choose distinct letters of $\Omega$ and name them $0$ and $1$. Given a compiled binary system $R$, let $Y=\Omega\setminus\B$ be the set of \emph{extra} letters, and extend $R$ by the rules
\begin{equation*}
  xcc'\longrightarrow11
  \qquad(x\in Y,\ c,c'\in\Omega). \tag{Ab}
\end{equation*}
They are nonerasing and have shape $3\to2$. No right-hand side contains an extra letter, and no old left-hand side contains one. Pure binary cycles therefore form a closed subgraph identical to the original binary system.

Consider a cycle containing an extra letter. Every left-hand side of the extended system has length at least three, since simulation left sides contain a full codeword, cleanup left sides have length at least three, and absorbers have length exactly three. Hence every cycle of length at most two is irreducible. If the length is at least three, the three-position arc beginning at any extra occurrence enables \textup{(Ab)}. An old binary step avoids every extra occurrence, and if an extra remains afterwards, the target still has length at least three: it contains the untouched extra and a right-hand side of length at least two, by Lemma~\ref{lem:rhs-two} for the binary rules and by construction for the absorbers. Thus no reduction from an extra-containing cycle of length at least three reaches a cycle of length at most two while an extra remains.

The extended system is strictly length reducing, so every maximal reduction ends in a normal form. By the previous paragraph, a cycle in such a reduction that still contains an extra has length at least three and is therefore reducible. Hence the terminal normal form is extra-free, and the reduction contains a step removing its last extra. Binary rules leave extras untouched, so that step is an absorber step, and its target is a pure binary cycle containing the factor $11$, whose unique normal form is $\cyc{11}$ by Lemma~\ref{lem:error-persists} applied inside the closed binary subgraph. Every maximal reduction from an extra-containing cycle of length at least three therefore ends at $\cyc{11}$, its unique normal form.

If the extended system is confluent, so is the binary subsystem: descendants of binary cycles are binary, so every join supplied by the extended relation is a binary join. Conversely, if the binary subsystem is confluent, then every binary cycle has a unique normal form by Lemma~\ref{lem:unique-nf}, every extra-containing cycle of length at most two is its own normal form, and every longer extra-containing cycle has the unique normal form $\cyc{11}$. Since the extended system terminates, it is confluent by Lemma~\ref{lem:unique-nf}. Composing the reduction in the proof of Theorem~\ref{thm:pi-complete} with this extension establishes hardness, and Lemma~\ref{lem:pi-upper} gives the upper bound.
\end{proof}

\subsection{The Unary Boundary}\label{sec:unary}
Let the alphabet be $\{a\}$, and write the rules as $a^{p_i}\to a^{q_i}$ with $p_i\ge1$ and $q_i\ge0$. Put $d_i=p_i-q_i$. A unary cycle is determined by its length $n$, and rule $i$ sends $n$ to $n-d_i$ whenever $n\ge p_i$.

\begin{theorem}[Decidability over one letter]\label{thm:unary}
Termination is decidable for finite unary cycle-rewriting systems, and confluence is decidable on the terminating subclass. In particular, confluence is decidable for finite strictly length-reducing unary systems.
\end{theorem}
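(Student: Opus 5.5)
The unary relation lives on lengths: $n\to n-d_i$ whenever $n\ge p_i$. Every question therefore reduces to arithmetic on $\N$, and the plan is to settle termination first and then confluence on terminating systems.

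\textbf{Termination.} Rules with $d_i<0$ lengthen the cycle. Rules with $d_i=0$ and $p_i\le n$ give self-loops. A rule with $d_i\le 0$ fires from every $n\ge p_i$, so any such rule yields an infinite derivation starting at $\cyc{a^{p_i}}$: for $d_i=0$ it repeats forever, and for $d_i<0$ the length only grows. Conversely, if every $d_i>0$ the system is strictly length reducing and terminates. Since $p_i\ge1$, the lengths $p_i$ are all available as starting cycles. Hence termination holds exactly when every $d_i\ge1$, which is decidable by inspection.

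\textbf{Confluence on terminating systems.} Assume every $d_i\ge1$. Let $P=\max p_i$. For $n\ge P$ every rule is enabled, so the relation above $P$ is the translation-invariant relation generated by the steps $d_1,\dots,d_r$. The aim is to show that confluence reduces to checking unique normal forms, via Lemma~\ref{lem:unique-nf}, on the finite initial segment $n< N_0$ for an explicit computable bound $N_0$.

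\textbf{Finite-check bound.} Let $g=\gcd(d_i)$ and $D=\max d_i$. The intended bound is $N_0=P+\operatorname{lcm}(d_i)+D$, or a similar commutation bound. The argument would use local confluence (Theorem~\ref{thm:newman}) together with a commutation observation. Consider a peak $n-d_i\leftarrow n\to n-d_j$ with $n\ge P+D$. Both $n-d_i\ge P$ and $n-d_j\ge P$, so rule $j$ applies to $n-d_i$ and rule $i$ applies to $n-d_j$, and both reach $n-d_i-d_j$. So every peak at length $n\ge P+D$ joins in one step, and only peaks at $n<P+D$ need checking. Each such peak is checked by computing the finite descendant sets of its two branches (Proposition~\ref{prop:certificate}) and testing whether they intersect. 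This gives local confluence, and with termination it gives confluence.

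\textbf{Main obstacle.} Making the commutation argument airtight should be easy, and it in fact already yields the finite bound $P+D$ directly. The point needing care is the no-winding convention: rule $i$ applies to the cycle of length $n$ iff $n\ge p_i$. The care is in confirming that this is exactly the enabling condition used above, and that the cases $q_i=0$, which produce the empty cycle, are handled, including the empty cycle being a normal form.
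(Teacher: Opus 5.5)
Your proof is correct and takes the paper's route. You show termination holds iff every $d_i\ge1$, find a bound above which every peak $n-d_i\leftarrow n\to n-d_j$ closes in one step at $n-d_i-d_j$, search the finitely many lengths below it, and finish with Newman's lemma; your uniform bound $\max_i p_i+\max_i d_i$ is just coarser than the paper's pairwise $\max_{i,j}\max\{p_j+d_i,\,p_i+d_j\}$, and the gcd/lcm detour in your draft can be dropped.
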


\begin{proof}
The system terminates exactly when every $d_i$ is positive: a zero difference gives a self-loop, a negative difference can be repeated forever, and when every $d_i$ is positive each step strictly lowers length, so no infinite reduction exists. This decides termination. Assume from now on that the system terminates. For two rules $i,j$, put $B_{ij}=\max\{p_j+d_i,\ p_i+d_j\}$ and $B=\max_{i,j}B_{ij}$, with $B=0$ when there are no rules. If $n\ge B_{ij}$ and both rules are enabled, each remains enabled after the other, so the peak commutes at $n-d_i-d_j$. Only the finitely many source lengths below $B$ remain. Their descendant graphs are finite because every step lowers length, so joinability can be checked by finite search. Newman's lemma (Theorem~\ref{thm:newman}) completes the decision procedure.
\end{proof}

\subsection{Bounded Rule Lengths}

Let $\mathcal C_{5,4,12}$ contain exactly the natural-number codes that decode as a finite declared alphabet, a finite nonerasing rule set, and a positive additive certificate such that every left side has length at most $5$, every right side has length at most $4$, every symbol weight lies in $\{1,2,3,7,12\}$, every rule strictly decreases weight, and the induced cycle relation is confluent. Malformed codes and codes failing any displayed condition lie outside $\mathcal C_{5,4,12}$. Thus certificate validity and all numerical bounds are part of the decidable input syntax, not an external promise.

\begin{proposition}[Bounded rule lengths and weights]\label{prop:bounded-shape}
The set $\mathcal C_{5,4,12}$ is $\Pi^0_1$-complete. In the instances produced by the reduction, rule lengths and symbol weights are uniformly bounded, while the alphabet and number of rules may grow.
\end{proposition}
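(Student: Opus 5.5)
The plan is to read both bounds directly off the wrapper $R_M$ of Section~\ref{sec:source}. No binary compilation is involved, since the binary compiler of Section~\ref{sec:compiler} would inflate rule lengths through the codeword lengths $c(a)$. The upper bound should follow the proof of Lemma~\ref{lem:pi-upper}. Hardness should follow by taking the weighted family $(S_{e,x},\omega_{e,x})$ of Theorem~\ref{thm:weighted-fork} as is, after checking that it satisfies every numerical constraint in the definition of $\mathcal C_{5,4,12}$.

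For membership in $\Pi^0_1$, note first that the syntactic part of $\mathcal C_{5,4,12}$ is decidable. This part consists of decoding the alphabet, rule set and certificate, the length bounds, the weight range, and the strict weight decrease of each rule. For a syntactically valid code, Proposition~\ref{prop:certificate} gives termination and makes the descendant graph of any fixed cycle finite and computable. We then enumerate cycles and their one-step peaks and test each peak for joinability. By Newman's lemma (Theorem~\ref{thm:newman}), nonconfluence is witnessed by some nonjoinable local peak, which the enumeration eventually finds. So the complement of $\mathcal C_{5,4,12}$ is computably enumerable.

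For hardness, given a $\Pi^0_1$ set $\{x : x\notin W_e\}$, map $x$ to a code of $(\Gamma_{M_{e,x}},R_{M_{e,x}},\omega)$. Here $M_{e,x}$ comes from Theorem~\ref{thm:verifier-source}, with $k=2$ and $|\lambda|=|\rho|=2$, and $\omega$ is the explicit certificate of Lemma~\ref{lem:wrapper-weight}. By Theorem~\ref{thm:weighted-fork}, this code lies in $\mathcal C_{5,4,12}$ exactly when $x\notin W_e$, provided the bounds hold. Verifying the bounds is the main step. I would go through Table~\ref{tab:wrapper-rules} family by family, with $k=2$:

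\begin{itemize}
\item $(\mathrm M)$ has left side $\alpha q^b\beta$ of length $5$. Its right side $u(q')^bv$ has $|u|+|v|\le3$, so length at most $4$.
\item $(\mathrm L_b)$ has right side $d_b\rho\#$ of length $4$.
\item $(\mathrm{D2})$ has right side $\#\lambda q_0^b$ of length $4$.
\item $(\mathrm{R1})$ has a left side of length $5$.
\item $(\mathrm{R2})$ and $(\mathrm{R3})$ have left sides of length at most $4$.
\item $(\mathrm C)$ has $|z|\le2$, so its left side has length at most $5$.
\item $(\mathrm{LX})$, $(\mathrm{D1})$, $(\mathrm{D3})$, $(\mathrm A)$, $(\mathrm S)$ and $(\mathrm{CE})$ are all short.
\end{itemize}

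Every right side is nonempty. For the weights, the certificate of Lemma~\ref{lem:wrapper-weight} assigns:
\begin{itemize}
\item $1$ to $\#$, $A_0$, $A_1$, $N$ and $E$;
\item $2$ to data symbols and copied states;
\item $3$ to raw symbols;
\item $W=2\cdot2+3=7$ to $d_0$ and $d_1$;
\item $W+2\cdot2+1=12$ to $X$.
\end{itemize}
Thus all weights lie in $\{1,2,3,7,12\}$, and that lemma already gives strict decrease.

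The one point needing care is that the reduction is total and computable and always outputs syntactically valid codes. This holds since $R_M$ and $\omega$ are computed effectively from $M_{e,x}$, with the alphabet $\Gamma_M$ and the rule count growing with the state set of $M_{e,x}$. That growth is exactly the freedom stated in the proposition, while lengths and weights stay uniformly bounded.
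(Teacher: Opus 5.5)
Your proposal is correct and follows the paper's own proof essentially step for step. Hardness comes directly from the weighted wrapper $R_{M_{e,x}}$ with the certificate of Lemma~\ref{lem:wrapper-weight}, with no binary compilation, checked family by family against the bounds $5$, $4$ and $\{1,2,3,7,12\}$; the upper bound repeats Lemma~\ref{lem:pi-upper} via Proposition~\ref{prop:certificate} and Newman's lemma. One harmless slip: the collision left side $czc'$ has length at most $1+2+1=4$, not $5$, and the boundary left sides have length at most $3$, so both bounds hold with room to spare.
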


\begin{proof}
Use the fixed normal form of Theorem~\ref{thm:verifier-source}: $k=2$ and $|\lambda|=|\rho|=2$. Inspect the wrapper rules. Launch has shape $2\to4$; decoder completion has shape $2\to4$; a machine or full-window rejection left side has length $2k+1=5$, and a machine right side has at most $2k=4$ symbols. Every remaining family (decoder moves, malformed launch and decoder, boundary, collision, sweep, acceptance, absorption) has both sides of length at most four. The certificate in Lemma~\ref{lem:wrapper-weight} takes values $1$ on $\#,A_0,A_1,N,E$, $2$ on data and copied states, $3$ on raw letters, $7=2|\lambda|+3$ on the decoders, and $12=7+2|\rho|+1$ on $X$.

The weighted reduction therefore gives $\Pi^0_1$-hardness. For the upper bound, invalid syntax or certificates are decidable. On a valid presentation, a length-$n$ source has total weight at most $12n$, descendant graphs of enumerated local peaks are finite and computable by Proposition~\ref{prop:certificate}, and a nonconfluent system has a nonjoinable local peak by Theorem~\ref{thm:newman}, so certified nonconfluence is computably enumerable. Hence certified confluence is in $\Pi^0_1$.
\end{proof}

\section{Conclusion and Future Work}\label{sec:conclusion}

We have shown that confluence of finite nonerasing strictly length-reducing cycle-rewriting systems is $\Pi^0_1$-complete over every fixed alphabet with at least two letters and decidable over a single letter. The alphabet boundary is settled by Corollary~\ref{cor:all-alphabets} and Theorem~\ref{thm:unary}. Rule shape and rule count are not. Our binary construction fixes the alphabet but uses rule lengths that grow with the encoded instance, while the bounded-shape variant of Proposition~\ref{prop:bounded-shape} bounds rule lengths and weights but lets the alphabet and rule count grow. It remains open whether confluence is undecidable for strictly length-reducing cycle systems with a fixed bound on both sides of every rule. Such a family would necessarily use a growing alphabet, since a fixed alphabet with bounded rule lengths admits only finitely many systems. Monadic restrictions, such as right-hand sides of length at most one, are another natural case. Every right-hand side produced by the binary encoding has length at least two (Lemma~\ref{lem:rhs-two}), so our construction does not cover them.

A second question concerns the number of rules. It is open whether undecidability can be obtained with a fixed number of binary length-reducing rules. For comparison, Matiyasevich and S\'enizergues proved several decision problems on three-rule semi-Thue systems undecidable \cite{MatiyasevichSenizergues2005}.

Finally, the negative result does not rule out useful positive criteria. A complementary direction is to identify syntactic or semantic restrictions under which finite critical-pair analysis becomes complete for cycle rewriting. A further direction is to make the passage from cycles to graphs precise. An effective embedding of cycle rewriting into a standard graph-transformation formalism that preserves and reflects confluence, while also preserving finiteness and termination, would pin down its formal relationship with the known graph-rewriting undecidability results \cite{Plump1993,Plump2005}.

\section*{Acknowledgements}

Claude Fable 5 (Anthropic) and GPT-5.6 Sol (OpenAI) played a major part in this work throughout. They contributed to the original discovery of the undecidability result and to its strengthening, from undecidability with a large alphabet and termination to $\Pi^0_1$-completeness with a fixed binary alphabet and strict length reduction, which I subsequently developed and verified in full. They were further used to check the mathematical development for errors, through adversarial review and through mechanical validation against an independent reference implementation, to formally verify the result in Isabelle, and to assist with proofreading and restructuring the manuscript. Proofs were checked by me, and I take full responsibility for the correctness and content of the paper.

\appendix
\renewcommand{\thesection}{\Alph{section}}
\renewcommand{\theHsection}{appendix.\Alph{section}}

\section{The Exact Generator Catalogue}\label{app:catalogue}

This appendix records the exact generator data for the command graphs of Section~\ref{sec:command-table}: the command schemata of each macro, one clause per accepted case, and the per-state sets $\mathrm{lead}(q)$ and $\mathrm{trail}(q)$. The generator of that section expands over exactly these clauses and sets to the transition table of $V_P$, one instance per admissible key.

\paragraph{Boot commands.}
Boot states are the scan states $q_{\mathrm{scan}}^{\varphi}$, whose superscript $\varphi\in\{\mathrm B\}\cup\{(i,t):i\in\{0,1\},\,0\le t\le v_i\}\cup\{\mathrm M,\mathrm F\}$ records the field index, the budget field, block $i$ with $t$ cells of its tag prefix $1^{v_i}$ matched, the interblock gap, or the final gap, and the single return state $q_{\mathrm{ret}}$; the start state $q_0$ is a restricted entry copy of $q_{\mathrm{scan}}^{\mathrm B}$, and printed boot rules suppress the superscript.

The context pairs are $\mathtt{LL}$ at entry, $\mathtt L\,f$ near $\lambda$, and otherwise $f\,c$, $c\,f$, or $f\,f$ with $c$ the last logical cell crossed, mirrored at $q_{\mathrm{ret}}$.

The clauses are: at $q_0$, only on the key $\mathtt{LL}\,q_0\,ff$: \textup{(T1)}, successor $q_{\mathrm{scan}}^{\mathrm B}$, expanding to the single instance $\mathtt{LL}\,q_0\,ff\to\mathtt{LL}\,f\,q_{\mathrm{scan}}^{\mathrm B}$, so a completing boot pass has traversed a nonempty leading reservoir; at every $q_{\mathrm{scan}}^{\varphi}$, on the key $x\,y\,q\,f\,f$: \textup{(T1)}, successor $q_{\mathrm{scan}}^{\varphi}$; at $q_{\mathrm{scan}}^{\mathrm B}$, on $x\,y\,q\,1\,f$: \textup{(T1)}, successor $\hat q_{\mathrm{scan}}^{\mathrm B}$, and on $x\,y\,q\,\mathtt L\,f$: \textup{(T1)}, successor $\hat q_{\mathrm{scan}}^{0,0}$; at $q_{\mathrm{scan}}^{i,t}$ with $t<v_i$, on $x\,y\,q\,1\,f$: \textup{(T1)}, successor $\hat q_{\mathrm{scan}}^{i,t+1}$; at $q_{\mathrm{scan}}^{i,v_i}$, on $x\,y\,q\,0\,f$: \textup{(T1)}, successor $\hat q_{\mathrm{scan}}^{i,v_i}$, and on $x\,y\,q\,\mathtt R\,f$: \textup{(T1)}, successor $\hat q_{\mathrm{scan}}^{\mathrm M}$ for $i=0$ and $\hat q_{\mathrm{scan}}^{\mathrm F}$ for $i=1$; at $q_{\mathrm{scan}}^{\mathrm M}$, on $x\,y\,q\,\mathtt L\,f$: \textup{(T1)}, successor $\hat q_{\mathrm{scan}}^{1,0}$; at $q_{\mathrm{scan}}^{\mathrm F}$, on $f\,f\,q\,\mathtt{RR}$: the turn, mirrored \textup{(T3)}, successor $q_{\mathrm{ret}}$; at $q_{\mathrm{ret}}$, on $f\,f\,q\,y\,x$: mirrored \textup{(T1)}, successor $q_{\mathrm{ret}}$, and on $f\,c\,q\,y\,x$ with $c\in\{1,0,\mathtt L,\mathtt R\}$: mirrored \textup{(T1)}, successor $\hat q_{\mathrm{ret}}$; and at $q_{\mathrm{ret}}$, on $\mathtt{LL}\,q\,ff$: parking \textup{(T3)}, successor $s_{p_{\mathrm{init}}}$, with the final state in its place when $p_{\mathrm{init}}=h$. Expansion yields, in particular, the turn instance $f\,f\,q_{\mathrm{scan}}\,\mathtt{RR}\to f\,q_{\mathrm{ret}}\,\mathtt{RR}$ and the parking instance $\mathtt{LL}\,q_{\mathrm{ret}}\,ff\to\mathtt{LL}\,s_{p_{\mathrm{init}}}\,f$, and every boot command consumes only fuel.

Until the parking exit the state lies strictly between $\lambda$ and the fence it turns at, moving right in scan states and left in $q_{\mathrm{ret}}$, and the data from $\lambda$ to a scan state, with fuel deleted, is the prefix of $1^{b}\,\mathtt L\,1^{v_0}0^{R_0-v_0}\,\mathtt R\,\mathtt L\,1^{v_1}0^{R_1-v_1}\,\mathtt R$ ending at the remembered position $\varphi$. The pass continues only on the keys listed at the current state: a token or $\mathtt L$ at $\mathrm B$, a tag $1$ at $(i,t)$ with $t<v_i$, a reserve $0$ or $\mathtt R$ at $(i,v_i)$, $\mathtt L$ at $\mathrm M$, the fence at $\mathrm F$, and at $q_{\mathrm{ret}}$ a mirrored crossing of surviving cells or, directly right of $\lambda$ against a rear $ff$, the parking key. A wrong marker, a $1$ in a reserve zone, a tag prefix of the wrong length, or a reservoir of the wrong parity therefore leaves the boot pass stuck. These are invariants of the pass from $q_0$, not of an arbitrarily placed boot state: an arbitrary candidate core can carry an interior $\mathtt{LL}$, at which a placed $q_{\mathrm{ret}}$ with rear $ff$ would park the hub, or the final state, inside the core, a window the pass never presents.

\paragraph{Pay and seek commands.}
Pay uses no constructors beyond the hub $s_p$ and the pay-active $t_p$; seek states are $c_{p,i}^{\phi,\ell}$, with $i$ the target block of the instruction at $p$, field class $\phi\in\{\mathrm B,\mathrm C,\mathrm G\}$ for the budget field, the interior of block $0$, and the interblock gap, and $\ell\in\{0,1\}$ the number of $\mathtt L$s passed, which the field class already determines with two counters and which is kept only to match the memory column of Table~\ref{tab:compiler}; the seek entry $c_p$ is the entry copy of $c_{p,i}^{\mathrm B,0}$.

The pay node carries four commands: at $s_p$, on $\mathtt{LL}\,s_p\,ff$: \textup{(T3)}, successor $t_p$; at $t_p$, on $\mathtt{LL}\,t_p\,ff$: \textup{(T3)}, successor $t_p$; at $s_p$, on $\mathtt{LL}\,s_p\,f1$: \textup{(T3)}, successor $c_p$; at $t_p$, on $\mathtt{LL}\,t_p\,f1$: \textup{(T3)}, successor $c_p$; they expand to the instances $\mathtt{LL}\,s_p\,ff\to\mathtt{LL}\,t_p\,f$, $\mathtt{LL}\,t_p\,ff\to\mathtt{LL}\,t_p\,f$, $\mathtt{LL}\,s_p\,f1\to\mathtt{LL}\,c_p\,f$, and $\mathtt{LL}\,t_p\,f1\to\mathtt{LL}\,c_p\,f$, shrinking the leading reservoir through the pay-active state and, at the token deletion, handing over to seek.

The seek clauses are: at each interior $c_{p,i}^{\phi,\ell}$, on $x\,y\,q\,f\,f$: \textup{(T1)}, successor $c_{p,i}^{\phi,\ell}$, and the same cases from the entry copy $c_p$, each with the successor its interior clause names, beginning the direct traversal of the merged reservoir; at $c_{p,i}^{\mathrm B,0}$, on $x\,y\,q\,1\,f$: \textup{(T1)}, successor $\hat c_{p,i}^{\mathrm B,0}$, crossing one remaining budget token; at $c_{p,1}^{\mathrm B,0}$, on $x\,y\,q\,\mathtt L\,f$: \textup{(T1)}, successor $\hat c_{p,1}^{\mathrm C,1}$; at $c_{p,1}^{\mathrm C,1}$, on $x\,y\,q\,1\,f$ or $x\,y\,q\,0\,f$: \textup{(T1)}, successor $\hat c_{p,1}^{\mathrm C,1}$, and on $x\,y\,q\,\mathtt R\,f$: \textup{(T1)}, successor $\hat c_{p,1}^{\mathrm G,1}$; and the handoff, at $c_{p,0}^{\mathrm B,0}$ and at $c_{p,1}^{\mathrm G,1}$, on $x\,y\,q\,\mathtt L\,f$: \textup{(T1)}, successor the flagged update entry of the instruction at $p$, inside block $0$, respectively block $1$.

During payment the state faces right at the hub position directly after $\lambda$ and never moves; its only windows are the four keys above, each a stationary deletion keeping one fuel cell visible after $\mathtt{LL}$, and the phase stops at the token deletion, which leaves $c_p$ adjacent to the merged reservoir.

During seek the state faces right, moves by \textup{(T1)} only, and lies strictly left of the fence inside the field its class names, $\mathrm B$ before the first $\mathtt L$, $\mathrm C$ inside block $0$, $\mathrm G$ between block $0$'s $\mathtt R$ and block $1$'s $\mathtt L$, having crossed exactly $\ell$ markers; an unflagged state admits fuel or a cell its class accepts, a flagged state only the fuel continuation, which enforces the ledger's parities. Payment sticks at the exhausted-budget windows $\mathtt{LL}\,s_p\,f\mathtt L$ and $\mathtt{LL}\,t_p\,f\mathtt L$, and seek on any cell its field class forbids and at any wrong-parity window $x\,y\,q\,f\,c$ with $c\ne f$.

\paragraph{Update and return commands.}
The update and return constructors and the abbreviation convention for their printed instances are declared in Section~\ref{sec:command-table}.

\emph{Increment $i$}, right-moving, its update entry the flagged entry copy of $a_{p'}$, entered by seek's crossing of $\mathtt L$: at $a_{p'}$, on $x\,y\,q\,f\,f$: \textup{(T1)}, successor $a_{p'}$; on $x\,y\,q\,1\,f$: \textup{(T1)}, successor $\hat a_{p'}$; on $x\,y\,q\,0\,f$: the \textup{(T2)} instance $x\,y\,q\,0\,f\to x\,y\,q'\,1$, successor the return entry $r^{\phi}_{p'}$ left of the new $1$, with $\phi$ the used class of block $i$; the window $x\,y\,q\,\mathtt R\,f$ has no clause, so a full block sticks.

\emph{Conditional $i$}, right-moving, its update entry the flagged entry copy of $b^{0}$: at $b^{\beta}$, on $x\,y\,q\,f\,f$: \textup{(T1)}, successor $b^{\beta}$; on $x\,y\,q\,1\,f$: \textup{(T1)}, successor $\hat b^{1}$, so crossing a used $1$ sets the bit; at $b^{0}$, on $x\,y\,q\,c\,f$ with $c\in\{0,\mathtt R\}$: \textup{(T1)}, successor the return entry $r^{\phi}_{p_0}$ right of the dispatch cell, with $\phi$ the class of the dispatch cell; at $b^{1}$, on the same keys: \textup{(T1)}, successor the turnaround state $u_{p_1}$, both reversals entering unflagged.

\emph{Turnaround}, left-moving: at $u_{p_1}$, on $f\,c\,q\,y\,x$ with $c\in\{0,\mathtt R\}$ the dispatch cell: mirrored \textup{(T1)}, successor $\hat u_{p_1}$; on $f\,f\,q\,y\,x$: mirrored \textup{(T1)}, successor $u_{p_1}$; on $f\,1\,q\,y\,x$: the mirrored \textup{(T2)} instance $f\,1\,q\,y\,x\to 0\,q'\,y\,x$, rewriting the rightmost used $1$, successor the return entry $r^{\phi}_{p_1}$ right of the new $0$, with $\phi$ the reserve class of block $i$, in which the rewritten cell now stands.

\emph{Return}, left-moving, each return entry the unflagged entry copy of $r^{\phi}_p$ at its bound class: at $r^{\phi}_p$, on $f\,f\,q\,y\,x$: mirrored \textup{(T1)}, successor $r^{\phi}_p$; on $f\,c\,q\,y\,x$ with $c$ a cell the mirrored field grammar admits after $\phi$: mirrored \textup{(T1)}, successor $\hat r^{\phi'}_p$ with $\phi'$ updated by $c$; and in the leftmost class, on $\mathtt{LL}\,q\,ff$: the parking \textup{(T3)} instance $\mathtt{LL}\,r^{\phi}_p\,ff\to\mathtt{LL}\,s_p\,f$, with the designated final state in place of $s_p$ when $p=h$.

The increment faces right inside block $i$ and remembers $p'$; its \textup{(T2)} key fires unflagged and places the return entry immediately left of the rewritten cell, that cell followed by surviving fuel exactly when the consumption leaves the reservoir beyond nonempty; the exhausted case sticks at the entry, as recorded with the return entries below. The conditional faces right between the block's $\mathtt L$ and its dispatch cell; $\beta=1$ exactly when a used $1$ was crossed, hence, the used zone being contiguous, exactly when $m_i>0$; its dispatch keys fire only unflagged, and the dispatch consumes the fuel beyond the dispatch cell, as every \textup{(T1)} crossing does.

The turnaround re-crosses the dispatch cell by consuming the surviving cell left of it, then meets only surviving fuel, so its first non-fuel window sits at the rightmost used $1$, whose left-hand fuel survives because the rightward pass crossed that $1$ from an unflagged state. The return performs no rewrite, so within it the data changes only by fuel consumption, and its parking key demands two visible fuel cells, so parking never consumes that reservoir's final cell. Any other window, a misplaced marker, $\mathtt R$ under an increment at full capacity, a flagged state facing a logical cell, or fuel of the wrong parity, leaves the phase stuck.

\emph{Boot.} The entry has $\mathrm{trail}(q_0)=\{\mathtt{LL}\}$ and $\mathrm{lead}(q_0)=\{ff\}$, the single key of its sole command. The budget scan has $\mathrm{trail}(q_{\mathrm{scan}}^{\mathrm B})=\{\mathtt Lf,\,ff,\,1f\}$, the near-$\lambda$ pair left by $q_0$, the fuel pair, and the flag-cleared token, with $\mathrm{trail}(\hat q_{\mathrm{scan}}^{\mathrm B})=\{f1\}$. Inside block $i$ with $t$ tag cells matched, let $C$ hold the last cell the class crossed, $\mathtt L$ for $t=0$ and $1$ for $t\ge1$, together with $0$ when $t=v_i$, both applying at $v_i=0$; then $\mathrm{trail}(q_{\mathrm{scan}}^{i,t})=\{cf:c\in C\}\cup\{ff\}$ and $\mathrm{trail}(\hat q_{\mathrm{scan}}^{i,t})=\{fc:c\in C\}$.

The gap scans have $\mathrm{trail}(q_{\mathrm{scan}}^{\mathrm M})=\mathrm{trail}(q_{\mathrm{scan}}^{\mathrm F})=\{\mathtt Rf,\,ff\}$ with hatted $\{f\mathtt R\}$; only the rear $ff$ carries the turn, so a final run pairing to a single survivor sticks at $\mathtt Rf\,q\,\mathtt{RR}$. The return has $\mathrm{trail}(q_{\mathrm{ret}})=\{\mathtt{RR},\,f\mathtt R,\,ff,\,f\mathtt L,\,f0,\,f1\}$, the fence pairs at the turn end, then survivors and flag-cleared re-crossed cells, and $\mathrm{trail}(\hat q_{\mathrm{ret}})=\{\mathtt Rf,\,\mathtt Lf,\,0f,\,1f\}$; the turn initializes the rear $\mathtt{RR}$, against which the fuel-pair key opens the return when two fuel cells remain to the left, later steps turning the rear to $f\mathtt R$ and then $ff$, while a single remaining cell leaves no key; the turn leaves fuel adjacent, so no logical crossing fires there and the flagged copy carries no fence pair; and parking fires only on the rear $ff$, so a $\lambda$-run at its last survivor sticks at $\mathtt{LL}\,q_{\mathrm{ret}}\,fc$.

\emph{Pay and seek.} The pay states have $\mathrm{trail}(s_p)=\mathrm{trail}(t_p)=\{\mathtt{LL}\}$ and $\mathrm{lead}(s_p)=\mathrm{lead}(t_p)=\{ff,\,f1\}$, whose products are exactly the four pay deletions. The seek entry shares $\mathrm{trail}(c_p)=\{\mathtt{LL}\}$ and, as the entry copy of $c_{p,i}^{\mathrm B,0}$, has $\mathrm{lead}(c_p)=\{ff,\,1f,\,\mathtt Lf\}$, each key with the successor its interior clause names. Since the merged reservoir is nonempty, only $c_p$'s fuel key ever matches a grammar-shaped window, so no rear $\mathtt L1$ reaches $\hat c_{p,i}^{\mathrm B,0}$ and no rear $\mathtt{LL}$ reaches $\hat c_{p,1}^{\mathrm C,1}$. The seek interiors repeat the boot pattern: $\mathrm{trail}(c_{p,i}^{\mathrm B,0})=\{\mathtt Lf,\,ff,\,1f\}$ with hatted $\{f1\}$, $\mathrm{trail}(c_{p,1}^{\mathrm C,1})=\{\mathtt Lf,\,ff,\,1f,\,0f\}$ with hatted $\{f\mathtt L,\,f1,\,f0\}$, and $\mathrm{trail}(c_{p,1}^{\mathrm G,1})=\{\mathtt Rf,\,ff\}$ with hatted $\{f\mathtt R\}$.

{\sloppy
\emph{Update.} Right-moving inside block $i$: $\mathrm{trail}(a_{p'})=\{\mathtt Lf,\,ff,\,1f\}$ with $\mathrm{lead}(a_{p'})=\{ff,\,1f,\,0f\}$ and no $\mathtt Rf$ clause, hatted trail $\{f1\}$; $\mathrm{trail}(b^{0})=\{\mathtt Lf,\,ff\}$ and $\mathrm{trail}(b^{1})=\{ff,\,1f\}$ with $\mathrm{lead}(b^{\beta})=\{ff,\,1f,\,0f,\,\mathtt Rf\}$ and hatted $\mathrm{trail}(\hat b^{1})=\{f1\}$; the flagged update entries have trail $\{f\mathtt L\}$, the freshly crossed opener.\par}

{\sloppy
\emph{Turnaround and return classes.} Left-moving, rear pairs to the right: $\mathrm{trail}(u_{p_1})=\{ff,\,f0,\,f\mathtt R,\,f\mathtt L\}$, at entry a survivor and the sequel of the reservoir beyond the dispatch cell, thereafter a survivor and the dispatch cell or fuel, with $\mathrm{lead}(u_{p_1})=\{ff,\,f0,\,f\mathtt R,\,f1\}$ and $\mathrm{trail}(\hat u_{p_1})=\{0f,\,\mathtt Rf\}$. The return classes, leftward, are $\mathrm F,\mathrm E_1,\mathrm U_1,\mathrm G,\mathrm E_0,\mathrm U_0,\mathrm B$: the final gap before the fence, the reserve and used zones of block $1$, the interblock gap, those of block $0$, and the budget field. The walk admits, at $\mathrm F$, $\mathtt R\mapsto\mathrm E_1$; at $\mathrm E_i$, $0\mapsto\mathrm E_i$, $1\mapsto\mathrm U_i$, and $\mathtt L\mapsto\mathrm G$ for $i=1$, $\mathtt L\mapsto\mathrm B$ for $i=0$; at $\mathrm U_i$, $1\mapsto\mathrm U_i$ and $\mathtt L$ as at $\mathrm E_i$; at $\mathrm G$, $\mathtt R\mapsto\mathrm E_0$; at $\mathrm B$, $1\mapsto\mathrm B$, together with the parking key on the scanned $\mathtt{LL}$ against the rear $ff$. Interior states carry these sets exactly where the closure of Section~\ref{sec:command-table} generates them. An unflagged $r^{\phi}_p$, at each class bound by one of $p$'s creation sites or reached from one along the walk, carries $\mathrm{lead}(r^{\phi}_p)=\{ff\}\cup\{fc:\phi\ \text{admits}\ c\}$, together with the parking pair $\mathtt{LL}$ at $\phi=\mathrm B$, generated only against the rear $ff$, and $\mathrm{trail}(r^{\phi}_p)=\{ff\}\cup\{fc:c\ \text{enters}\ \phi\}$. A flagged $\hat r^{\phi}_p$, exactly where a generated unflagged return state crosses into $\phi$, carries $\mathrm{trail}(\hat r^{\phi}_p)=\{cf:c\ \text{enters}\ \phi\}$ with $\mathrm{lead}(\hat r^{\phi}_p)=\{ff\}$; every other interior tuple has an empty trail set and receives no keys, as $\hat r^{\mathrm F}_p$ always and $\hat r^{\mathrm G}_p$ when $\mathrm G$ is the rightmost class bound for $p$.\par}

\emph{Return entries.} The increment binds $\mathrm U_i$; its entry, left of the new $1$, has trail $\{1f\}$ exactly, the rewritten cell followed by surviving fuel, so the pairs $10$ and $1\mathtt R$, left behind when the \textup{(T2)} consumed that reservoir's last cell, are inadmissible and the execution sticks at the entry; certification depends on this exclusion.

The zero branch binds the class of the dispatch cell, $\mathrm E_i$ for a reserve $0$, $\mathrm G$ for block $0$'s $\mathtt R$, $\mathrm F$ for block $1$'s $\mathtt R$; its entry demands a leading survivor, trail $\{ff,\,f0,\,f\mathtt R\}$ at $\mathrm E_i$, $\{ff,\,f\mathtt L\}$ at $\mathrm G$, $\{ff,\,f\mathtt R\}$ at $\mathrm F$, so a dispatch that emptied the reservoir beyond sticks at the entry.

The turnaround binds $\mathrm E_i$, the reserve class in which its rewritten cell now stands, so the walk supplies the re-crossing of the new $0$ natively; its entry has trail $\{ff,\,f0,\,f\mathtt R\}$, and an exhausted or even retrace never reaches it, sticking at $\hat u_{p_1}$ when at most two survivors precede the dispatch cell and otherwise at the unflagged $u_{p_1}$ on the wrong-parity window $1f$; a \textup{(T2)} that consumed the sole survivor left of the rewritten $1$ leaves the entry scanning $10$ or $\mathtt L0$, outside its lead, so it sticks at once. Creation sites binding the same class and target share one entry copy. Expanding the generator over these sets, one instance per admissible key, yields exactly the printed clauses and nothing else.

\section{Detailed Verifier Calculations}\label{app:calculations}

We trace one conditional's decrement branch through payment, seek, dispatch, turnaround, update, and return on a compact instance exercising all six phases: at control $p$ the instruction $p:\ \mathrm{if}\ C_0=0\ \mathrm{goto}\ p_0$ $\mathrm{else}\ C_0:=C_0-1;\ \mathrm{goto}\ p_1$, with $B=1$, $R_0=2$, $R_1=1$, $m_0=1$, $m_1=0$, so the pass sets the seen-$1$ bit and dispatches at the first reserve $0$. Index this witness's nine reservoirs $\mu_1,\dots,\mu_9$ from $\lambda$ rightward. The chronological ledger trace is as follows.
\begin{center}
\small
\setlength{\tabcolsep}{4pt}
\begin{tabular}{@{}llrr@{}}
\hline
Event & Reservoir & Before & After \\
\hline
payment merge, leading $\mu_1$ & $(\mu_1,\mu_2)$ & $(1,9)$ & $10$ \\
direct traversal by seek & merged & $10$ & $5$ \\
seam, entered crossing $\mathtt L$ & $\mu_3$ & $9$ & $4$ \\
seam, entered crossing the used $1$ & $\mu_4$ & $7$ & $3$ \\
single consumption at the dispatch crossing & $\mu_5$ & $2$ & $1$ \\
seam, turnaround re-cross & $\mu_4$ survivors & $3$ & $1$ \\
single consumption, mirrored \textup{(T2)} & $\mu_3$ survivors & $4$ & $3$ \\
return seam over the new $0$ & $\mu_3$ survivors & $3$ & $1$ \\
return seam over $\mathtt L$ & merged survivors & $5$ & $2$ \\
parking at $\lambda$ & merged survivors & $2$ & $1$ \\
\hline
\end{tabular}
\end{center}
Unit terminal reservoirs and the canonical payment preimage with leading size $1$ give the assignment $(1,9,9,7,2,1,1,1,1)$: invert in reverse order per Lemma~\ref{lem:fuel-alloc}, by $m\mapsto m+1$ three times, $m\mapsto2m+1$ five times, $m\mapsto2m$ once, and the merge split $10=1+9$; each seam preimage is odd, the direct preimage is even, parking finds its two fuels, and the unvisited $\mu_6,\dots,\mu_9$ stay at $1$. The inverted effects are exact and payment completes from every positive leading size, so the positive assignments whose pass ends with every reservoir a single cell are exactly $(a,9,9,7,2,1,1,1,1)$ with $a\ge1$, and this one, $a=1$, is their componentwise minimum. The certified boundary at the hub is therefore
\[
  \mathtt{LL}\;s_p\;
  f\,1\,f^{9}\,
  \mathtt L\,f^{9}\,1\,f^{7}\,0\,f^{2}\,\mathtt R\,
  f\,\mathtt L\,f\,0\,f\,\mathtt R\,f\;\mathtt{RR}.
\]
In stages, suppressing state superscripts: (1) the boundary above; (2) with $\mu_1=1$ payment exits directly by $\mathtt{LL}\,s_p\,f1\to\mathtt{LL}\,c_p\,f$, never entering $t_p$, and the token deletion leaves the merged $f^{10}$; (3) seek pairs it to five survivors and crosses $\mathtt L$; (4) the conditional pairs $\mu_3$ to four survivors, crosses the used $1$ setting its bit, and pairs $\mu_4$ to three; (5) the dispatch key fires as $f\,f\,b^{1}\,0\,f\to f\,f\,0\,u_{p_1}$; (6) the turnaround re-crosses the $0$, retraces to one survivor, and updates by the mirrored \textup{(T2)} instance $f\,1\,u_{p_1}\,f\,0\to0\,r_{p_1}\,f\,0$, branching towards $p_1$; (7) the return re-crosses the new $0$ and $\mathtt L$, pairing $f^{3}$ and $f^{5}$ to one and two survivors; (8) parking fires as $\mathtt{LL}\,r_{p_1}\,ff\to\mathtt{LL}\,s_{p_1}\,f$, leaving $\mathtt{LL}\,s_{p_1}\,f\,\mathtt L\,f\,0\,f\,0\,f\,\mathtt R\,f\,\mathtt L\,f\,0\,f\,\mathtt R\,f\,\mathtt{RR}$, a certified boundary with every reservoir a single cell. The eight stages stand after $0$, $1$, $7$, $15$, $16$, $19$, $24$, and $25$ generated-table steps, successive stages thus $1$, $6$, $8$, $1$, $3$, $5$, and $1$ commands apart. Fuel deleted and the state omitted, the pass maps the core $1\,\mathtt L\,1\,0\,\mathtt R\,\mathtt L\,0\,\mathtt R$ to $\mathtt L\,0\,0\,\mathtt R\,\mathtt L\,0\,\mathtt R$: the budget loses its one token and $m_0=1$ becomes $0$, matching $\pi$'s step from $(p,1,0)$ to $(p_1,0,0)$.

\bibliographystyle{elsarticle-num}
\bibliography{cycle_confluence_undecidability_tcs}

\end{document}